\newtheorem{lemma}{Lemma}
\newtheorem{theorem}{Theorem}
\newtheorem{definition}{Definition}
\newtheorem{claim}{Claim}
\title{A Distributed Laplacian Solver and its Applications to Electrical Flow and Random Spanning Tree Computation}
\author[1]{Iqra Altaf Gillani}
\author[1]{Amitabha Bagchi}
\affil[1]{Department of Computer Science and Engineering, IIT Delhi\\ \{iqraaltaf,bagchi\}@cse.iitd.ac.in}
\date{}
\newenvironment{varalgorithm}[1]   {\algorithm}   {\endalgorithm}
\newcommand{\ex}[1]{{\mbox{E}\left[#1\right]}}
\newcommand{\ind}[1]{{\mbox{\bf 1}_{\{#1\}}}}
\newcommand{\probm}[1]{\mathcal{P}[ #1 ]}
\newcommand{\pot}[1]{\bm{x}_{#1}}
\newcommand{\hpot}[1]{\hat{\bm{x}}_{#1}}
\newcommand{\prob}[1]{P\left[ #1 \right]}
\newcommand{\reff}[1]{R_{#1}}
\newcommand{\vol}[1]{d_{#1}}
\newcommand{\sink}{u_{\mbox{\scriptsize s}}}
\newcommand{\hemax}{\bm{\hat{\eta}}_{\mbox{\scriptsize max}}}
\newcommand{\heta}[1]{\bm{\hat{\eta}}_{#1}}
\newcommand{\thit}{t_{\mbox{\scriptsize hit}}}
\def\diam{\mbox{\sf diam}(G)}
\def\diams{\mbox{\sf diam}}
\def\RR{\mathbb{R}}
\def\nn0{\mathbb{N} \cup \{0\}}
\def\nonsinks{V \setminus \{\sink\}}
\providecommand{\keywords}[1]{\textbf{\textit{Keywords:}}#1}
\date{}
\begin{document}

\maketitle

\begin{abstract}
We use queueing networks to present a new approach to solving Laplacian systems. This marks a significant departure from the existing techniques, mostly based on graph-theoretic constructions and sampling. Our distributed solver works for a large and important class of Laplacian systems that we call ``one-sink'' Laplacian systems. Specifically, our solver can produce solutions for systems of the form $L\bm{x} = \bm{b}$ where exactly one of the coordinates of $\bm{b}$ is negative. Our solver is a distributed algorithm that takes $\widetilde{O}(\thit d_{\max})$ time (where $\widetilde{O}$ hides $\text{poly}\log n$ factors) to produce an approximate solution where $\thit$ is the worst-case hitting time of the random walk on the graph, which is $\Theta(n)$ for a large set of important graphs, and $d_{\max}$ is the generalized maximum degree of the graph. The class of one-sink Laplacians includes the important voltage computation problem and allows us to compute the effective resistance between nodes in a distributed setting. As a result, our Laplacian solver can be used to adapt the approach by Kelner and M\k{a}dry (2009) to give the first distributed algorithm to compute approximate random spanning trees efficiently.

\end{abstract}
\keywords{Laplacian solver, distributed algorithms, queueing network, random walks, electrical flow, random spanning trees}

\section{Introduction}
\label{sec:intro}
A number of fields including network analysis, computer vision, operations research, electrical engineering, machine learning, and computational biology, present important algorithmic problems that can be approached by solving a Laplacian system of equations. Since Spielman and Teng's~\cite{Spielman-STOC:2004} pioneering work proposing a quasi-linear time algorithm for solving Laplacian systems, Laplacian solvers have been able to improve longstanding bounds for a wide range of fundamental graph-theoretic problems, c.f., surveys by Spielman~\cite{Spielman-SICOMP:2011} and Vishnoi~\cite{Vishnoi-F&TTCS:2013}. Moreover, if we view the graph as an electrical network with resistances on edges, the effective resistance between two nodes can be easily derived by solving a Laplacian system. As a result, a series of exciting graph algorithmic developments that use electrical flow computation as a primitive also rely on Laplacian solvers, e.g., max flows \cite{Christiano-STOC:2011}\cite{Madry-FOCS:2016}, random spanning trees \cite{KelnerII-FOCS:2009}\cite{Madry-SODA:2015}, graph sparsification \cite{Spielman-SICOMP:2011}, and computation of current flow closeness centrality \cite{Li-WWW:2019}. 

Although the Laplacian systems are defined over a graph, and have been used to address problems that can arise naturally in distributed settings, most quasi-linear solvers are centralized. These solvers (e.g., Spielman and Teng~\cite{Spielman-STOC:2004}, Koutis et al. \cite{Koutis-FOCS:2011}, Cohen et al. \cite{Cohen-STOC:2014}) are based on techniques like Richardson iteration, Chebyshev polynomial-based method or conjugate gradient, combining them further with ideas like recursive preconditioning, spectral graph sparsifiers, low-stretch spanning trees and expanders for improved results. These techniques rely on use of matrix multiplication or matrix-vector multiplication as primitives, which presents a challenge when we think of implementing them in distributed settings on machines with limited computational resources. Departing from these graph-based constructions, Kyng and Sachdeva \cite{Kyng-FOCS:2016} adopted a different approach based on sparse Gaussian elimination using random sampling to give a nearly linear time solver.  However, this proposal is also centralized. 

Many such centralized solvers have been parallelized now like the work of \cite{Koutis-SODA:2007}, \cite{Blelloch-TCS:2014}, \cite{Peng-STOC:2014}. Lately, distributed algorithms for this problem have begun to appear in the literature,  e.g., the algorithms described by Tutunov et al. in \cite{Tutunov-arXiv:2015} and \cite{Tutunov-CDC:2017}. But these, like the algorithm of Zouzias and Freris \cite{Zouzias-ECC:2015} are also based on distributed implementations of existing techniques used for centralized solvers like the Kaczmarz method, Chebyshev polynomials. 
In general, due to the complex underlying machinery involved, all these algorithms are difficult to implement for practitioners, suggesting the need for a simple distributed solver. In this paper, we address that need by describing a simple distributed algorithm that uses no complicated primitives but can
solve a large and useful class of Laplacian systems that includes electrical flow. 

In particular, our algorithm represents {\em a completely new approach to Laplacian systems that sits at the intersection of queueing network theory, multidimensional Markov chain analysis, and random walk theory.} While random walks have been used in the literature for solving Laplacian systems~\cite{Chung-IM:2015,Andoni-ITCS:2019}, their use is restricted to estimating matrix powers via sub-sampling, and these approaches focus on returning only one or a few coordinates of the solution vector. 

An important feature of our distributed Laplacian solver is that it does not require any complex graph theoretic constructions or matrix-vector operations at the nodes. It is based on simple counting, making it suitable for networks with low-end nodes with very limited computation, e.g., sensor networks or IoT networks.

\begin{table*}[!t]
\caption[]{Time complexity of various distributed Laplacian solvers}
\label{table:solver_time}
\begin{center}
\scalebox{0.7}{
\begin{tabular}{lccc}
\hline
\\
\textbf{Graph}  
& \textbf{DRW-LSolve}  &\textbf{Zouzias et al. \cite{Zouzias-ECC:2015}} &\textbf{Tutunov et al. \cite{Tutunov-CDC:2017}}\\
\hline\\
Complete $(K_n)$ &\bm{$\widetilde{O}(n^2)$} &\bm{$\widetilde{O}(n^2)$} &$\widetilde{O}(n^{5/2})$\\
Star &\bm{$\widetilde{O}(n^2)$} &\bm{$\widetilde{O}(n^2)$} &$\widetilde{O}(n^{5/2})$\\
Binary tree &\bm{$\widetilde{O}(n)$} &$\widetilde{O}(n^2)$ &$\widetilde{O}(n^{3/2})$\\
$\sqrt{n}\times\sqrt{n}$ Grid  &\bm{$\widetilde{O}(n)$} &$\widetilde{O}(n^2)$ &$\widetilde{O}(n^{3/2})$\\
Erdos-Renyi &\bm{$\tilde{O}(n)$} &\bm{$\tilde{O}(n)$} &$\tilde{O}(n^3/2)$\\
(with $p \geq c \log(n)/n$ for constant $c>1$) & &\\ 
RGG-Disc &\bm{$\tilde{O}(n)$} &$\tilde{O}(n^2)$ &$\tilde{O}(n^3/2)$\\
(with $r= c\sqrt{\log n/n}$ for constant $c>1$) & &\\
RGG-cNN &\bm{$\tilde{O}(n)$} &$\tilde{O}(n^2)$ &$\tilde{O}(n^3/2)$\\
(with constant $c > 11$) & &\\
Ramanujan  $(d,n)$ &$\widetilde{O}(dn)$ &\bm{$\widetilde{O}\left(\frac{dn}{d-2\sqrt{d-1}}\right)$} &$\widetilde{O}(dn^{3/2})$\\
Path $(P_n)$ &$\widetilde{O}(n^2)$ &$\widetilde{O}(n^3)$ &\bm{$\widetilde{O}(n^{3/2})$}\\
\\
\hline
\end{tabular}
}
\end{center}
\end{table*}

As we can see in Table~\ref{table:solver_time} our solver outperforms state of the art distributed methods on most common topologies. The number of rounds taken by our algorithm is proportional to $\thit$, the worst-case hitting time of the natural random walk on the graph. While this quantity can be quite large for some topologies, it is $\Theta(n)$ for several important classes of graphs like the expanders, random geometric graphs which model real-world setting, the complete graph $(K_n)$, the star graph, the hypercube, the $\sqrt
{n}\times\sqrt{n}$ grid, the Erd\H{o}s-R\'{e}nyi graphs with edge probability $p\geq \frac{\log n}{n}$.

\subsection{Our Approach}
\label{subsec:our_approach}
Specifically, we formulate a stochastic problem that captures the properties of Laplacian systems of the form $L\bm{x} = \bm{b}$ with a constraint that only one element in $\bm{b}$ is negative. We call such systems ``one-sink'' Laplacian systems since the stochastic process can be viewed as a network in which some nodes generate data packets, and there is a single sink that collects all the packets that it receives. The network transmission model used for such a process follows gossip strategy \cite{Boyd-TON:2006}, whereby, at every step, each node with a non-empty queue can transmit a data packet only to a single neighbor. This choice of selecting neighbors is probabilistic and depends on the transition probabilities of the reversible random walk. The class of one-sink Laplacians contains the important electrical flow problem. We call our stochastic process the ``Data Collection problem'' and show that this process has an equivalence to the one-sink Laplacian system at stationarity, provided the associated multidimensional Markov chain is ergodic and has a stationary distribution. 

The technical challenges that arise in deriving an algorithm from this observation are: (1) We have to ensure that our Data Collection problem is ergodic, (2) that we can get close to the stationarity in reasonable time and (3) once we are close to the stationarity we can estimate the solution in good time. We show that in its stationary state, the Data Collection system naturally contains a solution to the Laplacian system. In particular, the vector whose $v$th coordinate is proportional to the probability at stationarity of the queue at $v$ being non-empty is a solution to $L\bm{x} = \bm{b}$. Thus, we can easily estimate this solution by first allowing the Markov chain to be close to the stationarity and then counting the proportion of time slots for which the queues are empty. Our approach can be visualized as a probabilistic version of iterative algorithms like the Richardson's iteration wherein each iteration represents a communication round in which each node makes a probabilistic decision of choosing a neighbor. After a specified number of such iterations required: (i) to be close to the stationary distribution of Markov chain, and (ii) computing the queue occupancy probability of nodes, the approximate solution to the Laplacian system is obtained. \label{comm:insight}


\subsection{Our Model}
\label{subsec:our_model}
Our setting is as follows: we are given an undirected connected weighted graph $G=(V,E,w)$ with $|V|=n$ nodes each having a unique identifier of size $O(\log n)$, $|E|=m$ edges, positive weight function $w:E\rightarrow \mathbb{R}_+$, adjacency matrix $A$ such that $A_{uv}=w_{uv}$, if $(u,v)\in E$ and $0$ otherwise. Let $D\in \mathbb{R}^{|V|\times |V|}$ be the diagonal matrix of generalized degrees such that $D_{uu}=d_u$, where $d_u=\sum_{v:(u,v)\in E}w_{uv}$. Then, the corresponding Laplacian matrix is $L=D-A$. The natural random walk defined on $G$ has transition matrix $\mathcal{P}=D^{-1}A$ where $\probm{u,v} = w_{uv}/d_u$ for $(u,v)\in E$ and $0$ otherwise. If $(X_i)_{i \geq 0}$ is an instance of the random walk on $G$ then the {\em worst-case hitting time}, $\thit$, of this random walk is defined by $\max_{u,v\in V} \ex{t : X_0 = u, X_t = v, X_i \ne v \mbox{ for } 0 < i < t}$, i.e., the expected time taken for the random walk started at $u$ to reach $v$ maximized over all pairs of vertices.

We assume a synchronous model of communication, which we call as \emph{GP-CONGEST} based on its relation with the standard \emph{CONGEST} model, and \emph{gossip} algorithms. In this model, all nodes wake up simultaneously and the time is divided into rounds. In each round, all nodes in the network can send a message of size $O(\log n)$ to one of its neighbors and perform local computation based on the information obtained from the messages received from its neighbors in previous rounds. However, note that by definition of this model, each node can receive at most $d_{\max}$ messages in one round, one from each of its neighbors, where $d_{\max}$ is the generalized maximum degree of the graph. So, each round will require in worst-case $d_{\max}$ time to simulate. Thus, any algorithm's time complexity under this model is the maximum degree times the number of synchronous rounds required until the given task is complete.
We can formally define the metric as follows.

\begin{definition}[Time complexity under GP-CONGEST model]
\label{def:comp_metric}
The time complexity of any algorithm under the GP-CONGEST model is defined as the time taken by the synchronous communication rounds to complete the required task. Wherein each round takes in worst-case $d_{\max}$ time, and during which all nodes in parallel can
(i) send a $O(\log n)$ sized message to one of their neighbors,
(ii) receive $O(\log n)$ sized messages from their neighbors, and
(iii) perform bounded local computations.
\end{definition}

Our model is similar to the standard CONGEST model \cite{Peleg-BOOK:2000*} as we limit the size of messages to $O(\log n)$. However, it is more restrictive as in any round, instead of allowing message exchange with all the neighbors as done in the CONGEST model, our model allows nodes to send message to only one neighbor, similar to gossip algorithms. This restriction of allowing nodes to send only one data packet helps in easier theoretical analysis of our setting. However, for practical purposes, if we remove this restriction, this results in speed up in our solver's running time.

\subsection{Our Results}
\label{subsec:our_results}
The main result of our paper is as follows.
\begin{theorem}[Distributed Laplacian Solver]
\label{thm:rand_walk_solver}
Given an undirected connected graph $G = (V,E,w)$ and $\bm{b} \in \mathbb{R}^{|V|}$ such that for $1 \leq i \leq n-1, \bm{b}_i \geq 0$, and $\bm{b}_n = -\sum_{i=1}^{n-1} \bm{b}_i$, the corresponding Laplacian system is
\begin{equation}
\label{eq:solver-laplacian-system}
\bm{x}^TL = \bm{b}^T.
\end{equation}
For a user-defined granularity factor $\kappa \in (0,1)$ and polynomially bounded error parameter $\epsilon\in (0,1)$ such that $\epsilon<1/2$, there is a distributed solver  \ref{alg:rw_solver}$(w,\bm{b},\epsilon,\kappa)$ working in the GP-CONGEST model that takes 
$O\left(d_{\max}\left(\thit\log \epsilon^{-1}+ \frac{\log n}{\kappa^2\epsilon^2}\right)\log\frac{d_{\max}}{\lambda_2^L}\right)$
time, to produce a vector $\bm{\hat{x}}$ that is an approximate solution to Eq.~\eqref{eq:solver-laplacian-system}, where $d_{\max} = \max_{u\in V} d_u$, $\lambda_2^L$ is the second smallest eigenvalue of the Laplacian matrix $L$, and $\thit$ is the worst-case hitting time of the natural random walk on $G$.

Further, $\bm{\hat{x}}$ has the properties that
(i) $\hpot{v}$ is the estimate at node $v \in V$,
(ii) $\hpot{i} > 0$, $1 \leq i \leq n-1$, $\hpot{n} = 0$, and,
(iii) if $\bm{x}$ is an exact solution to Eq.~\eqref{eq:solver-laplacian-system} such that $\pot{i} > 0$, $1 \leq i \leq n-1$, $\pot{n} = 0$, 
then, $|\hpot{i} - \pot{i}| \leq \epsilon\pot{i},$
whenever $\kappa(1+\epsilon)< \pot{i} d_i$.
\end{theorem}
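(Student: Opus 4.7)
My plan is to prove the theorem by realising the Data Collection stochastic process sketched in Section~1.1 as a multidimensional Markov chain, and then establishing in order that (i) the chain is ergodic, (ii) its stationary distribution encodes a solution of $L\bm{x}=\bm{b}$, and (iii) empirical averages of per-node queue occupancy, collected after a burn-in, yield the stated approximation. Concretely, I would have each node $i<n$ generate a packet in each round with probability proportional to $\bm{b}_i$, and have the sink $n$ absorb every packet it receives; a node with a non-empty queue forwards its head-of-line packet to neighbour $u$ with the random-walk probability $\probm{v,u}=w_{vu}/d_v$. The state is the vector of queue lengths and the one-step transition is the natural gossip step of one GP-CONGEST round.

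The central algebraic step is to match flow rates at stationarity. Writing $q_v$ for the stationary probability that the queue at $v$ is non-empty, balance at each non-sink $v$ gives
\begin{equation*}
\bm{b}_v + \sum_{u \sim v} q_u\,\probm{u,v} \;=\; q_v,
\end{equation*}
since $v$ transmits exactly one packet whenever its queue is non-empty. Substituting $\probm{u,v}=w_{uv}/d_u$ and setting $\pot{v}:=q_v/d_v$ turns this into $d_v\pot{v}-\sum_{u\sim v}w_{uv}\pot{u}=\bm{b}_v$, which is exactly $(L\bm{x})_v=\bm{b}_v$. Imposing $\pot{n}=0$ as the sink boundary condition then identifies the normalised occupancy vector with the unique solution of the reduced Laplacian system. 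Hence it suffices for each node $v$ to estimate $q_v$ and divide locally by $d_v$.

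The remaining task is purely algorithmic. First, I would show that after $\Theta(\thit\log\epsilon^{-1})$ rounds from the all-empty initial state, the queue chain is $\epsilon$-close to its stationary distribution in total variation; this piggy-backs on the standard fact that the mixing time of the underlying reversible random walk that drives the packet motions is $O(\thit\log\epsilon^{-1})$. Once close to stationarity, each node records the fraction of the next $T=\Theta(\log n/(\kappa^2\epsilon^2))$ rounds during which its queue is non-empty; a Chernoff/Hoeffding bound on these $T$ Bernoulli samples, combined with a union bound over the $n$ nodes, yields a multiplicative error of at most $\epsilon$ in every coordinate whose true $q_v=\pot{v}d_v$ exceeds the granularity threshold $\kappa$, which is precisely the hypothesis $\kappa(1+\epsilon)<\pot{i}d_i$.

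The outer factor $\log(d_{\max}/\lambda_2^L)$ absorbs two secondary costs: choosing a common scaling of $\bm{b}$ so that the effective arrival rate at every queue stays strictly below $d_v$ (needed for ergodicity), and representing the resulting rationals in $O(\log n)$-bit messages; the resolution required to distinguish feasible solutions is controlled by $\lambda_2^L$, which is why it appears in this logarithm. The main obstacle I expect is the ergodicity analysis: the joint queue chain is not a product of independent one-dimensional chains, so verifying that it is positive recurrent and admits a stationary distribution with the marginals used above will require either a Foster--Lyapunov argument or a coupling to a tractable reference system, with the generation rates scaled so that each station's load is bounded away from $1$. Once ergodicity is in hand, the rate-balance derivation is a direct manipulation and the mixing/concentration steps are standard.
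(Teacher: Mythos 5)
Your high-level plan matches the paper's: the balance equation at stationarity, the substitution $\pot{v}=q_v/d_v$ turning it into $\bm{x}^TL=\bm{b}^T$, and the two-phase burn-in-then-sample structure are all exactly what the paper does. However, there are two genuine gaps. First, your mixing-time claim is unsubstantiated: you assert that $\Theta(\thit\log\epsilon^{-1})$ rounds suffice because this ``piggy-backs on'' the mixing time of the underlying reversible random walk, but the object that must mix is the multidimensional queue-length chain, not the walk, and its relaxation is not inherited from the walk's. The paper's argument here is the real technical content: it couples the chain started from all-empty queues with a stationary copy, uses stochastic ordering to show dominance is preserved, invokes Little's law together with a bound of $\thit/(1-\bm{\eta}_{\max})$ on a packet's expected sojourn time to bound the expected coupling time by $16\thit$ (after arranging that the operating rate is $3\beta^*/4$ so that $\bm{\eta}_{\max}=3/4$), and then applies the coupling inequality plus Markov. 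None of this is replaceable by a citation to random-walk mixing.

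Second, you misattribute the $\log(d_{\max}/\lambda_2^L)$ factor to message precision and scaling of $\bm{b}$. In the paper it is the number of iterations of a downward binary search on the injection rate $\beta$: the algorithm cannot know the critical rate $\beta^*$ in advance, so it starts at $\beta=1/2$, runs the whole burn-in-plus-sampling subroutine, tests for instability by checking whether some estimated occupancy exceeds $\tfrac{3}{4}(1-\epsilon)$ (justified by a separate lemma showing that at $\beta\geq\beta^*$ some queue's occupancy tends to $1$), and halves $\beta$ on failure. Bounding the number of halvings requires the spectral lower bound $\beta^*\geq \lambda_2^L/(2d_{\max})$, proved via an eigenvector decomposition of the steady-state equation together with the normalization $\max_i d_i\bm{x}_i=1$. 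Your proposal contains neither the search, nor the instability test, nor the lower bound on $\beta^*$, so the stated running time does not follow from it. A smaller issue: your concentration step applies Chernoff/Hoeffding to the per-round occupancy indicators as if they were independent Bernoulli samples, but they are strongly correlated along the queue chain's trajectory; the paper instead sets up a martingale of centered partial sums and applies Azuma--Hoeffding.
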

Our proposed solver returns an approximate solution to the Laplacian system with $\epsilon$-relative error guarantee only for those node estimates that are greater than $\kappa(1+\epsilon)/d_i$ where $\kappa$ is the user-defined granularity factor. Moreover, the running time of our solver has a logarithmic dependence on the second smallest eigenvalue of Laplacian matrix $\lambda_2^L$, also known as \emph{Fiedler value or algebraic connectivity} of graph and a direct dependence on the maximum degree $d_{\max}$ and the worst-case hitting time $\thit$. However, in general we have $\thit \leq \frac{d_{max}}{d_{min}}\frac{m}{\lambda_2^L}$ \cite{Aldous-BOOK:2002}, so the higher the value of algebraic connectivity,  the better is the running time of our solver.

In Table~\ref{table:solver_time}, we compare the running time of our solver with the existing distributed algorithms of Zouzias and Freris \cite{Zouzias-ECC:2015} and Tutunov et al. \cite{Tutunov-arXiv:2015}.
It is clear from the table that our algorithm, in general, outperforms the existing best known distributed solvers as it gives the best running time for most of the cases, except for graphs with bottleneck like the path graph $(P_n)$. 

The algorithm mentioned in Theorem~\ref{thm:rand_walk_solver} can be directly applied to compute effective resistance $\reff{u,v}$ between a pair of node $u,v$ by considering $u$ as the source node and $v$ as the sink. 
\begin{theorem}[Effective Resistance Computation]
\label{thm:eff_res_compt}
Given an undirected connected graph $G=(V,E, w)$ with a positive weight function and $|V|=n$ nodes there is a distributed algorithm working in the GP-CONGEST model to compute the effective resistance $\reff{u,v}$ between any pair of nodes $u ,v$ within a polynomially bounded $\epsilon$ error such that $\epsilon<1/2$ in 
$O\left(d_{\max}\left(\thit(u,v)\log \epsilon^{-1}+ \frac{d_{\max}^2\log n}{d_{\min}^2\epsilon^2}\right)\log\frac{d_{\max}}{\lambda_2^L}\right)$
time where $\thit(u,v)$ is the expected time taken by the random walk starting from node $u$ to hit node $v$, $d_{\max}$, $d_{\min}$ are the maximum and minimum generalized degrees of graph $G$ respectively, and $\lambda_2^L$ is the second smallest eigenvalue of the Laplacian matrix of graph.
\end{theorem}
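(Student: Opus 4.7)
The plan is to reduce the computation of $\reff{u,v}$ to a one-sink Laplacian system and then invoke Theorem~\ref{thm:rand_walk_solver}. Relabel the vertices so that $v$ becomes node $n$ and define $\bm{b}$ by $\bm{b}_u = 1$, $\bm{b}_n = -1$, and $\bm{b}_i = 0$ otherwise. This is admissible as a one-sink Laplacian system, and by the standard electrical-flow interpretation the exact solution $\bm{x}$ of $L\bm{x} = \bm{b}$ satisfies $\reff{u,v} = \pot{u} - \pot{v} = \pot{u}$ under the normalization $\pot{n}=0$. I would therefore run the solver of Theorem~\ref{thm:rand_walk_solver} on this input and return $\hpot{u}$ as the estimate for $\reff{u,v}$.

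For the error guarantee, Theorem~\ref{thm:rand_walk_solver} yields $|\hpot{u} - \pot{u}| \leq \epsilon\,\pot{u}$, i.e., the desired relative $\epsilon$-error on $\reff{u,v}$, provided the granularity condition $\kappa(1+\epsilon) < \pot{u} d_u$ can be enforced. To check this, write the flow equation at $u$:
\begin{equation*}
d_u \pot{u} - \sum_{w \sim u} w_{uw}\, \pot{w} = \bm{b}_u = 1.
\end{equation*}
By the discrete maximum principle, the voltage at every node lies between $0 = \pot{v}$ and $\pot{u}$, so $\pot{w} \geq 0$ for every $w$, which gives $d_u \pot{u} \geq 1$. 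Choosing $\kappa = \Theta(d_{\min}/d_{\max})$, specifically small enough that $\kappa(1+\epsilon) \leq d_{\min}/d_{\max} \leq 1 \leq d_u \pot{u}$, satisfies the condition; plugging this $\kappa$ into the running-time bound of Theorem~\ref{thm:rand_walk_solver} contributes exactly the $d_{\max}^{2}\log n / (d_{\min}^{2}\epsilon^{2})$ summand that appears in the claimed complexity.

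The remaining issue, and the main obstacle, is that Theorem~\ref{thm:rand_walk_solver} is stated with the worst-case hitting time $\thit$ whereas Theorem~\ref{thm:eff_res_compt} uses the directed quantity $\thit(u,v)$. This sharpening reflects the fact that our $\bm{b}$ has a single positive coordinate: in the Data Collection interpretation of Section~\ref{subsec:our_approach}, only $u$ generates packets and only $v$ is a sink, so the time required for the associated multidimensional Markov chain to approach its stationary distribution is governed by the hitting time of a walk started at $u$ to $v$, rather than by the worst-case hitting time across all source/sink pairs. Establishing this improvement is not a black-box consequence of Theorem~\ref{thm:rand_walk_solver}; it requires revisiting the ergodicity and mixing argument underlying that theorem and refining the hitting-time bound for the special case where $\bm{b}$ is supported on a single non-sink coordinate. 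Once this refinement is in hand, the stated running time follows by substituting $\thit \to \thit(u,v)$ and $\kappa = \Theta(d_{\min}/d_{\max})$ into Theorem~\ref{thm:rand_walk_solver}'s expression, and correctness follows from the per-coordinate relative-error guarantee applied at $u$.
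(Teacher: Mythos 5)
Your reduction (source $u$, sink $v$, $\bm{b}=\bm{e}_u-\bm{e}_v$, return $\hpot{u}$ since $\hpot{v}=0$) is exactly the paper's, but the proposal has a genuine gap at the step you yourself flag as ``the main obstacle'': you never establish the replacement of $\thit$ by $\thit(u,v)$, you only assert that it ``requires revisiting the ergodicity and mixing argument'' and then assume the refinement. That replacement is the actual content of Theorem~\ref{thm:eff_res_compt} beyond a black-box call to Theorem~\ref{thm:rand_walk_solver}, and the paper does supply the mechanism: in the proof of Lemma~\ref{lem:stationrity_time}, the coupling time is bounded via Claim~\ref{clm:exp_latency} by the expected latency of a packet, which is $\thit/(1-\emax)$ in general but becomes $\thit(u,v)/(1-\emax)$ when every packet is generated at $u$ and sunk at $v$; feeding this into Little's law and the Markov-inequality bound on $\bar{d}(t)$ gives the $64\,\thit(u,v)\log\epsilon_1^{-1}$ term. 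A complete proof must make this substitution explicitly inside Lemma~\ref{lem:stationrity_time} (or at least cite that the only place $\thit$ enters is through the packet latency), rather than deferring it.

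Your justification of the granularity condition also diverges from the paper's in a way that matters. You argue $d_u\pot{u}\ge 1$ from the flow equation in the \emph{unscaled} solution of $L\bm{x}=\bm{b}$; taken literally against the condition $\kappa(1+\epsilon)<\pot{i}d_i$ in Theorem~\ref{thm:rand_walk_solver} this would let you take $\kappa$ to be an absolute constant, which would yield a \emph{better} running time than the one you are asked to prove --- a sign that this is not the quantity the algorithm actually tests. The condition enforced in \ref{alg:rw_solver_other_nodes} is $\kappa<\heta{u}/(1+\epsilon)$ on the queue-occupancy probability $\bm{\eta}_u=\beta\pot{u}d_u\in[0,1]$, and your bound only gives $\bm{\eta}_u\ge\beta$, which is not useful. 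The paper instead lower-bounds $\bm{\eta}_s$ directly: the binary search terminates at $\beta\ge 3\beta^*(1-\epsilon)/8$, so some node has occupancy at least $3(1-\epsilon)/8>3/16$, and the harmonic maximum principle ($\pot{s}\ge\pot{w}$ for all $w$, hence $\bm{\eta}_s\ge (d_{\min}/d_{\max})\bm{\eta}_w$) gives $\bm{\eta}_s>3d_{\min}/(16d_{\max})$, justifying $\kappa=3d_{\min}/(16d_{\max})$ and hence the $d_{\max}^2\log n/(d_{\min}^2\epsilon^2)$ term. You should replace your flow-equation argument with a bound of this type on $\bm{\eta}_u$ itself.
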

This distributed way of finding effective resistance in turn helps to compute the electrical flows. Using such distributed electrical flow computation as a subroutine we describe an algorithm for distributed random spanning tree generation by adapting the centralized algorithm of Kelner and M\k{a}dry~\cite{KelnerII-FOCS:2009}.
\begin{theorem}[Random Spanning Tree Generation]
\label{thm:appx_rst_gen}
Given an undirected connected bounded-degree graph $G=(V,E)$ with $|V|=n$ and $|E|=m$ edges, our Distributed RST Generation Algorithm computes random spanning tree of $G$ from a distribution that is $(1+\epsilon)$ of the uniform distribution with a running time of $\widetilde{O}\left(\frac{m}{\sqrt{n}}\thit\log \epsilon^{-1}\right)$
under the GP-CONGEST model where $\thit$ is the worst-case hitting time of random walk on graph $G$.
\end{theorem}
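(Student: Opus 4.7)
The plan is to adapt the centralized algorithm of Kelner and M\k{a}dry~\cite{KelnerII-FOCS:2009} to the distributed setting, using the effective resistance subroutine of Theorem~\ref{thm:eff_res_compt} as the main primitive. Recall that their algorithm combines the Aldous--Broder random walk with a carefully designed \emph{shortcut} mechanism: the graph is partitioned into roughly $\sqrt{n}$ balanced clusters, and expensive intra-cluster walk segments are replaced by single ``teleportation'' steps whose distribution is determined by effective conductances on the cluster boundaries. This drops the number of walk steps to be simulated from $\widetilde{O}(mn)$ (the cover-time bound for a direct implementation) down to $\widetilde{O}(m\sqrt{n})$, which, multiplied by $\thit$, is exactly the bound targeted by the theorem.

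First I would describe a distributed preprocessing phase that (i) constructs the Kelner--M\k{a}dry cluster decomposition in a communication-efficient way, e.g., by running parallel BFS explorations from well-chosen centers to obtain clusters with $\widetilde{O}(\sqrt{n})$ vertices and boundary; and (ii) invokes Theorem~\ref{thm:eff_res_compt} for each pair of boundary nodes within each cluster to estimate the shortcut transition probabilities. Many of these effective resistance queries can be scheduled in parallel across edge-disjoint portions of the graph, so the total cost of the preprocessing still fits within the claimed budget for bounded-degree $G$.

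Next I would simulate the shortcut random walk itself in the GP-CONGEST model. A token resides at some boundary vertex of the current cluster; at each step, the node holding the token locally samples the next boundary vertex according to the precomputed conductances, and a synchronous routing subroutine transports the token there while marking any newly discovered vertex (so that the Aldous--Broder rule adds the corresponding edge to the tree). By the Kelner--M\k{a}dry analysis, only $\widetilde{O}(m/\sqrt{n})$ shortcut steps are required until every vertex has been visited; each step, together with the associated routing across a cluster of diameter $\widetilde{O}(\sqrt{n})$ and the occasional refresh of boundary information, can be charged to $\widetilde{O}(\thit)$ GP-CONGEST rounds (the bounded-degree assumption absorbs the $d_{\max}$ factor that appears through Theorem~\ref{thm:eff_res_compt}). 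Multiplying the two quantities and adding the $\log \epsilon^{-1}$ accuracy factor from the Laplacian solver yields the claimed $\widetilde{O}((m/\sqrt{n})\thit \log \epsilon^{-1})$ bound.

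The main obstacle will be controlling how the $\epsilon$-multiplicative error in each estimated effective resistance propagates into the overall distribution of the spanning tree produced. I would revisit the Kelner--M\k{a}dry distributional analysis and show that, because the shortcut transition probabilities depend linearly on the boundary conductances, a $(1\pm\epsilon')$-approximation of each conductance leads to at most a $(1+O(\epsilon'))$-multiplicative error in the probability assigned to any particular spanning tree, provided $\epsilon'$ is set to a sufficiently small polynomial in $\epsilon/n$ (which only inflates $\log \epsilon^{-1}$ by a constant). A secondary technical point is ensuring that the parallel effective-resistance queries in the preprocessing respect the $O(\log n)$-sized message constraint; this can be handled by serializing queries whose supporting sub-networks overlap and by reusing the same solver invocation across boundary nodes belonging to a common cluster.
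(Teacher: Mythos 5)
Your high-level plan (low-diameter decomposition, precomputed shortcut distributions via the Laplacian machinery, a shortcut Aldous--Broder walk) matches the shape of the paper's argument, but three of your concrete steps do not go through as stated. First, the primitive is wrong: the shortcut transition probabilities in Kelner--M\k{a}dry are the exit distributions $P_v(e)$, i.e.\ the harmonic measure of the walk entering partition $S_i$ at $v$ and leaving through boundary edge $e$. These are \emph{voltages} in an augmented network (add the external endpoint $u'$ of $e$ at potential $1$ and a dummy sink $u^*$ absorbing all other boundary edges at potential $0$) and are obtained by a one-sink Laplacian solve per boundary edge, which is what the paper does by calling \ref{alg:rw_solver} once for each $e\in C(S_i)$. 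Pairwise effective resistances between boundary nodes (Theorem~\ref{thm:eff_res_compt}) do not determine this harmonic measure, so your preprocessing does not yield the quantities the shortcut walk needs. Second, your cost accounting attributes the $\frac{m}{\sqrt{n}}\thit$ product to the walk phase ($\widetilde{O}(m/\sqrt{n})$ shortcut steps at $\widetilde{O}(\thit)$ rounds each); neither factor is justified, and routing a token across a cluster of diameter $\widetilde{O}(\sqrt{n})$ costs $\widetilde{O}(\sqrt{n})$ rounds, not $\widetilde{O}(\thit)$. In the paper the dominant term comes instead from the \emph{precomputation}: with $\phi=1/\sqrt{n}$ the decomposition guarantees $|C|\leq \phi m = m/\sqrt{n}$ boundary edges, each requiring one solver call of cost $\widetilde{O}(\thit\log\epsilon^{-1})$ (the $d_{\max}$ factor is absorbed by bounded degree), while the walk on the contracted graph $G'$ of $k=O(\sqrt{n})$ super-nodes is a lower-order term.

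Third, and most seriously for correctness, your simulation never explains how \emph{interior} vertices of a cluster acquire their first-visit edges. Aldous--Broder needs the actual first-entry edge of every vertex, and a teleportation step from one boundary vertex to another supplies no such information for the vertices it skips; this is precisely why Kelner--M\k{a}dry only shortcut a cluster \emph{after} it has been fully covered by the unshortcut walk. The paper resolves this by running, in parallel with the exit-distribution computation, an ordinary Aldous--Broder walk inside each partition $S_i$ (cover time at most $m_i\sqrt{n}$ by Aleliunas et al., which is dominated by the solver time) to produce an intra-partition tree $\hat{T}_{S_i}$, and then stitching these trees together using a walk on the contracted graph $G'$ whose transition probabilities are built from the $P_v(e)$. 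Without an analogue of this step your algorithm does not output a spanning tree of $G$ with the claimed distribution. Your treatment of error propagation (taking the solver accuracy polynomially small in $\epsilon/n$) is in the right spirit and consistent with what Kelner--M\k{a}dry require.
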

To the best of our knowledge, this is the first distributed algorithm for this problem. Moreover for a large class of graphs with $\thit=\Theta(n)$, compared to the best known centralized algorithms which work best for either sparse graphs like the one proposed in \cite{Madry-SODA:2015}, or dense graphs like the one in \cite{Durfee-STOC:2017}, our algorithm gives reasonable bounds irrespective of the graph sparsity. 

Note that, although the main results in this paper are for the one-sink Laplacian systems. We also show a direct extension of our proposed algorithm to general Laplacian systems in Section~\ref{sec:general_solver}.

\subsection{Paper Organization}  
We briefly review the literature in Section~\ref{sec:rel_work}. Then, we introduce the notion of ``One-sink" Laplacian systems in Section~\ref{subsec:elec_netw_flows} followed by a discussion about its equivalence with a stochastic process on a graph, the Data Collection process,  in Section~\ref{subsec:data_coll_gossip}. In Section~\ref{sec:distributed_solver}, we present a distributed solver to use this equivalence to solve one-sink Laplacian systems. We also show an extension of our proposed solver to general Laplacian system in Section~\ref{sec:general_solver}. In Section~\ref{sec:rst_gen}, we present an application of our solver for $\epsilon$-random spanning tree generation. 
Finally, we conclude in Section~\ref{sec:conc} with some general remarks and some directions for future work.

\section{Related Work}
\label{sec:rel_work}
After the breakthrough paper by Spielman and Teng \cite{Spielman-STOC:2004} wherein the Laplacian system of equations are approximately solved in $\widetilde{O}(m \log^c n \log \epsilon^{-1})$ time where $c$ is a constant and $\epsilon$ denotes the error, an extensive literature has been developed in which a number of quasi-linear time solvers like the ones in \cite{Koutis-FOCS:2011}, \cite{Kelner-STOC:2013}, \cite{Cohen-STOC:2014} have been proposed, each improving the value of exponent $c$ in the running time (see survey by \cite{Spielman-SICOMP:2011}, \cite{Vishnoi-F&TTCS:2013} for details). As discussed before, most of these algorithms rely on similar techniques using Chebyshev iteration or conjugate gradient with recursive preconditioning, low-stretch spanning trees, expanders, and spectral graph sparsifiers.
More recently, Kyng and Sachdeva \cite{Kyng-FOCS:2016} proposed a different kind of nearly linear time solver that relies on random sampling by performing sparse approximate Gaussian elimination. Most of these solvers were proposed as a theoretical work with no working implementation. However, by now a number of experimental papers like the work of Boman et al. \cite{Boman-INFO:2016}, Hoske et al. \cite{Hoske-SEA:2015}, comparative study by Boman, Deweese and Gilbert \cite{Boman-ALENEX:2016} discuss these implementations, which are now also available as part of \emph{Laplacians.jl} package written by Spielman \cite{Spielman-URL:2017}. In addition to these theoretical solvers, a number of robust practical solvers based on multigrid technique have been proposed like Combinatorial Multigrid (CMG) by Koutis et al. \cite{Koutis-CVIU:2011}, Lean Algebraic Multigrid (LAMG) by Livne and Brandt \cite{Livne-SICOMP:2012}, and Degree-aware Rooted Aggregation (DRA) by Napov and Notay \cite{Napov-SICOMP:2017} which offer fast empirical running times. All these methods are however centralized.

Motivated by problems involving massive graphs, there has been some work in parallelizing these solvers. Koutis and Miller \cite{Koutis-SODA:2007} proposed a parallel solver for planar graphs with nearly linear work and $m^{\frac{1}{6}}$ depth. Later, Blelloch et al.~\cite{Blelloch-TCS:2014} extended this result to general graphs while improving the depth to $m^{\frac{1}{3}}$. This was further improved by Peng and Spielman \cite{Peng-STOC:2014}, who gave an efficient algorithm requiring nearly linear work and polylogarithmic depth.
However, because of the shared memory model, these algorithms cannot be easily distributed. Similarly, a method that is amenable to being parallelized is that of Becchetti et al.~\cite{Becchetti-AAMAS:2018}, who use a random walk based token diffusion process similar in spirit to our method to present a solver for the specific case of the electrical flow Laplacian. Their method
cannot be adapted easily to the distributed setting since the number of tokens entering and leaving a vertex are potentially unbounded. Chung and Simpson \cite{Chung-IM:2015} also use random walks to give a local linear solver. However, their use is only restricted to computing powers of transition matrix. Also, their algorithm works only for Laplacian systems satisfying certain boundary conditions and returns solution only for a subset of coordinates. The idea of using random walks for power estimation of a matrix to approximate matrix inverse has also been used by Andoni et al. \cite{Andoni-ITCS:2019}. In particular, they also use random walks only for estimating powers of normalized adjacency matrix and focus only on one or a few coordinates of solution vector. Recently, Konolige and Brown \cite{Konolige-PASC:2018} also proposed a multigrid based parallel solver in a distributed memory setting which shows improved empirical results compared to the existing solvers, however with no theoretical guarantees on the worst-case running time. 

Working towards a completely distributed algorithm, Tutunov et al. \cite{Tutunov-arXiv:2015} propose a distributed implementation of Peng and Spielman's \cite{Peng-STOC:2014} solver with a time complexity of $O\left(d_{max}n^3\frac{w_{max}}{w_{min}}\log \epsilon^{-1}\right)$ where $d_{max}$ is the maximum degree, $w_{max}$, $w_{min}$ are the maximum and minimum edge weights respectively. They further improve this result to $O\left(d_{max}n^{3/2}\frac{w_{max}}{w_{min}}\log \epsilon^{-1}\right)$ \cite{Tutunov-CDC:2017} by using Chebyshev polynomials. In both the algorithms, Tutunov et al. use the bound of $O\left(n^3\frac{w_{max}}{w_{min}}\right)$ on the condition number as its computation in distributed fashion is difficult and requires global information (see \cite{Tutunov-THESIS:2017} for details). However, unlike the gossip setting \cite{Boyd-TON:2006} wherein each node can exchange messages with only one neighbor, both algorithms allow nodes to exchange messages with all of their neighbors. Further, Zouzias and Freris \cite{Zouzias-ECC:2015} adapted the Kaczmarz iteration method for solving Laplacians~\cite[Chap. 14]{Vishnoi-F&TTCS:2013} to a gossip scenario similar in working to our GP-CONGEST model. Their approach is based on using the gossip model as a means of achieving consensus for solving Laplacians as a least-square estimation problem. As expected, their algorithm converges to an approximate solution in $O\left(\frac{m}{\lambda_2^L}\right)$ iterations with each iteration involving an inner product computation which takes at most $O(d_{max})$ time. Our underlying method, which also uses a similar gossip setting, achieves a running time that in addition to $d_{max}$ factor, depends on the worst-case hitting time of the natural random walk defined on the graph.
More recently, Rebeschini and Tatikonda~\cite{Rebeschini-JMLR:2019*} analyzed the performance of the distributed message-passing algorithm ``min-sum'' to solve the electrical flow problem, which is a proper subset of the class of Laplacian systems we consider, but, their findings are largely negative, i.e., their method doesn't converge in most classes of graphs.

The literature on electrical flow and its applications is too vast to survey here so we just mention that the Laplacian representation of the electrical flow problem has evolved it as a popular subroutine in solving various graph-related problems like maximum flow computation~\cite{Christiano-STOC:2011}, graph sparsification~\cite{Spielman-SICOMP:2011}, random spanning tree generation \cite{KelnerII-FOCS:2009}. Our electrical flow computation method can be used to give a distributed version of the algorithm of Kelner and M\k{a}dry~\cite{KelnerII-FOCS:2009} which is the first distributed algorithm for the random spanning tree problem to the best of our knowledge, all prior work being centralized, e.g., the algorithms proposed in \cite{Madry-SODA:2015,Durfee-STOC:2017,Schild-STOC:2018}.

\section{``One-sink'' Laplacian Systems and an Equivalent Stochastic Process}

\label{sec:elec_flow_lap}
In this section, we first introduce the notion of ``One-Sink'' Laplacian systems. After that, we discuss a stochastic process: Data Collection on a graph and show how the steady-state equations of this setting exactly mimic a Laplacian system.

\subsection{``One-Sink'' Laplacian Systems}
\label{subsec:elec_netw_flows}


The Laplacian $L$ is a singular matrix with its null space being the subspace generated by the vector $\bm{1}$ which contains 1 in all its coordinates. Hence for any vector $\bm{x}$,\footnote{In the following we use bold letters, e.g., $\bm{x}$ for column vectors and denote row vectors as the transpose of column vectors, e.g., $\bm{x}^T$.} the inner product  $\langle L\bm{x}, \bm{1}\rangle =0$. Therefore, when we consider equations of the form $L\bm{x} = \bm{b}$, we must have that $\langle \bm{b}, \bm{1}\rangle = 0$. Within this space we focus on the special case in which there is exactly one coordinate $i$ such that $\bm{b}_i < 0$. In such a case, clearly $\bm{b}_i = - \sum_{j \ne i} \bm{b}_j$. For reasons that will be clear shortly we will call such vectors {\em one-sink} vectors and we will refer to the Laplacian systems of the form $L\bm{x}=\bm{b}$
as {\em one-sink} Laplacian systems. 
Noting that since $L$ is symmetric we can rewrite $\bm{x}^TL=\bm{b}^T$.
This latter form will be useful to us since in Section~\ref{subsec:data_coll_gossip} where we will define a stochastic process on a network whose stationary state will give us a solution in this form. At that point the ``one-sink'' terminology will also be clarified. 

\subsubsection{One-sink Laplacians and effective resistance} The importance of one-sink Laplacians can be estimated by observing that the very widely used notion of the effective resistance falls within it. To see this, recall (or see, e.g., \cite{Vishnoi-F&TTCS:2013}) that if $L^+$ is the Penrose-Moore pseudo-inverse of $L$ then for an edge $u,v \in V$, then effective resistance between $u$ and $v$, 
\begin{equation}
\label{eq:reff_lap}
R_{u,v} = (\bm{e_u} - \bm{e}_v)^T L^+ (\bm{e}_u - \bm{e}_v),
\end{equation}
where $\bm{e}_w$ is the vector with 1 at coordinate $w$ and 0 elsewhere. Set $\bm{b} =  \bm{e}_u - \bm{e}_v$, which is clearly a one-sink vector, and post-multiply both sides of $\bm{x}^TL=\bm{b}^T$ by $L^+$
to obtain
$\bm{x}^T=(\bm{e}_u - \bm{e}_v)^TL^+,$
and observe that the RHS of Eq.~\eqref{eq:reff_lap} is now simply $\bm{x}_u - \bm{x}_v$. So the effective resistance can be easily obtained by solving a one-sink Laplacian system. 

\subsection{An Equivalent Stochastic Process: Data Collection }
\label{subsec:data_coll_gossip}
We define a stochastic process on a graph that has the property that at stationarity its steady state behavior provides the solution of a one-sink Laplacian system. We now explain this stochastic process and the equivalence, since our distributed algorithm essentially involves ensuring this system has reached close to the stationarity and then estimating the steady state solution. 

Formally the stochastic process can be defined as follows.
\begin{definition}[Data Collection process]
\label{def:data_colln_process}
On a connected undirected weighted network $G=(V,E,w)$ we identify a distinguished sink node, $\sink$ that passively collects data, and a set of data sources $V_s\subseteq \nonsinks$. Each node in $\nonsinks$ has a queue in which it can store data packets. We define a relative rate vector $\bm{J} \in \mathbb{R}^{|\nonsinks|}$:  for a source $v \in V_s$, $\bm{J}_v > 0$, for all other nodes in $\nonsinks$, $\bm{J}_v = 0$, and  $\bm{J}_{\sink} = -\sum_{v\in \nonsinks} \bm{J}_v$. 

Now, given a parameter $\beta > 0$ such that $\beta \bm{J}_v \leq 1$ for all $v \in V_s$, we define a discrete-time multidimensional Markov chain $\{Q_t^\beta\}_{t\geq 0}$ supported on $\left(\nn0\right)^{|\nonsinks|}$ that we call as the {\em Data Collection process} with parameter $\beta$. On this network:
\begin{itemize}
\item For $t \geq 0, v \in \nonsinks$, $Q^\beta_t(v)$ denotes the size of the queue at node $v$ at time $t$.
\item At each $t \geq 0$, node $v \in V_s$ generates a new packet according to independent Bernoulli distribution with parameter $\beta \bm{J}_v$ and places it in its queue.
\item At each $t \geq 0$ each node $u$ in $\nonsinks$ picks one packet uniformly at random from its queue if its queue is non empty. It picks a neighbor $v$ according to probability $w_{uv}/d_u$ and transmits that packet to $v$. 
\item On transmission, the packet is removed from $u$'s queue. If $v \in \nonsinks$ the packet is placed in $v$'s queue. If $v$ is $\sink$ then the packet is sunk,  i.e., it is  removed from the system.
\end{itemize}
\end{definition}
Clearly if the rate controlling parameter $\beta$ is too high, $ ||Q_t^\beta||$ will tend to infinity in the limit, so for this definition to be useful there needs to be a regime of $\beta$ values wherein this process achieves its steady state. In \cite{GillaniII-arXiv:2019} we show that such a regime does exist.
\begin{theorem}[Theorem~1, Gillani et al. \cite{GillaniII-arXiv:2019})]
\label{thm:gillani-stability}
For the Data Collection process there exists a $\beta^*>0$ such that the multidimensional Markov chain $\{Q_t^\beta\}_{t\geq 0}$is ergodic for all $\beta < \beta^*$ and for  $\beta \geq \beta^*$ the chain is non-ergodic. Specifically, for $\beta < \beta^*$,
\begin{itemize}
\item  $\{Q^{\beta}_t\}_{t\geq 0}$ has a stationary distribution and 
\item  $\lim_{t \rightarrow \infty} |Q_t^\beta(v)|$ is a finite constant depending on $\beta$ for each $v \in \nonsinks$.
\end{itemize}
\end{theorem}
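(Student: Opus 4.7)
The plan is to establish the ergodic/non-ergodic dichotomy by first identifying what $\beta^*$ ought to be via flow-balance, and then applying Foster--Lyapunov style arguments on each side of the threshold. If the chain admits a stationary distribution, let $p_v = \Pr[Q^\beta_\infty(v) > 0]$ be the stationary busy probability at $v \in \nonsinks$. Each busy node sends exactly one packet per step, and a neighbor $u$ forwards to $v$ with probability $p_u \cdot w_{uv}/d_u$. Balancing arrivals and departures at each non-sink $v$ yields
\[
p_v \;=\; \beta \bm{J}_v + \sum_{u \sim v} p_u \frac{w_{uv}}{d_u},
\qquad\text{i.e.,}\qquad (I - \tilde{\mathcal{P}}^T)\, \mathbf{p} \;=\; \beta \bm{J},
\]
where $\tilde{\mathcal{P}}$ is the sub-stochastic matrix obtained from $\mathcal{P}$ by deleting the sink row and column. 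Since $\tilde{\mathcal{P}}$ has spectral radius strictly less than $1$, the matrix $(I - \tilde{\mathcal{P}}^T)^{-1}$ is non-negative (Neumann series), so $\mathbf{p} = \beta (I - \tilde{\mathcal{P}}^T)^{-1} \bm{J}$. Requiring $p_v \le 1$ for each $v$ pins down the natural candidate
\[
\beta^* \;:=\; \Bigl(\max_{v \in \nonsinks}\, \bigl[(I - \tilde{\mathcal{P}}^T)^{-1} \bm{J}\bigr]_v\Bigr)^{-1}.
\]

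For $\beta < \beta^*$, I would prove ergodicity via Foster--Lyapunov. Set $\mathbf{r} = (I - \tilde{\mathcal{P}}^T)^{-1} \bm{J}$ (coordinate-wise strictly positive) and use the quadratic Lyapunov function $V(\mathbf{q}) = \sum_{v \in \nonsinks} r_v q_v^2$. Conditional on $Q^\beta_t = \mathbf{q}$, the drift of $q_v^2$ equals $2 q_v(\text{arrivals} - \text{departures}) + O(1)$; summing with weights $r_v$ and using the identity $\mathbf{r} - \tilde{\mathcal{P}}^T \mathbf{r} = \bm{J}$ makes the leading $O(\|\mathbf{q}\|)$ term equal to $2\sum_v r_v q_v(\beta \bm{J}_v - \mathbf{1}\{q_v > 0\} + \sum_u \mathbf{1}\{q_u > 0\} w_{uv}/d_u)$. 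Away from the axes (where all $q_v > 0$), this collapses to $-2\sum_v r_v q_v (1 - \beta/\beta^*) \cdot (\text{positive})$ and hence is strictly negative, giving positive recurrence and a unique stationary distribution.

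For $\beta \ge \beta^*$, non-ergodicity follows by contradiction: if the chain were positive recurrent, the same balance equation would force $p_{v^*} \ge 1$ at the maximizing node $v^*$, which is impossible. A quantitative version uses a linear test function $W(\mathbf{q}) = \sum_v r_v q_v$, whose expected increment becomes strictly positive in supercritical regime, whence $W(Q^\beta_t) \to \infty$ almost surely by the converse Foster criterion. The second bullet (finiteness of $\lim_{t\to\infty} E|Q^\beta_t(v)|$) then follows from strengthening the Lyapunov drift to a geometric drift condition $E[V(Q^\beta_{t+1}) \mid Q^\beta_t = \mathbf{q}] \le \gamma V(\mathbf{q}) + K$ with $\gamma < 1$, which implies uniform boundedness of stationary first moments.

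Main obstacle: the departure term $\mathbf{1}\{q_v > 0\}$ makes the drift discontinuous across the boundary faces where some queues are empty, so the clean cancellation above fails precisely where it would be most useful. The standard remedy is either Dai's fluid-limit framework --- showing the associated fluid model drains in finite time and invoking his theorem to transfer stability to the discrete chain --- or a piecewise-linear Lyapunov function tuned to the different boundary faces. Either path is non-trivial in a multiclass feedback network with arbitrary routing, which is presumably why the authors cite \cite{GillaniII-arXiv:2019} rather than reproving the result here.
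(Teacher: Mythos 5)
First, note that the paper does not prove Theorem~\ref{thm:gillani-stability} at all: it is imported verbatim from the companion paper \cite{GillaniII-arXiv:2019}, and the only thing said here about its proof is that it uses the induction-on-the-number-of-queues technique of Georgiadis and Szpankowski \cite{Georgiadis-QS:1992} together with a coupling of stochastic processes. Your route --- identify $\beta^*$ by flow balance and then run Foster--Lyapunov arguments on either side of it --- is therefore a genuinely different strategy. Your candidate threshold $\beta^* = \bigl(\max_{v}\bigl[(I-\tilde{\mathcal{P}}^T)^{-1}\bm{J}\bigr]_v\bigr)^{-1}$ is consistent with what the paper later assumes (linearity of $\bm{\eta}$ in $\beta$, and $\bm{\eta}_{\max}=1$ at $\beta^*$), and the necessity direction (non-ergodicity for $\beta\geq\beta^*$ via the balance equation together with $\pi(\bm{0})>0$ for a positive recurrent chain) is essentially sound, modulo justifying rate conservation without assuming finite stationary first moments.

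The sufficiency direction, however, contains a concrete error on top of the gap you yourself flag. The claimed collapse of the interior drift of $V(\bm{q})=\sum_v r_v q_v^2$ to $-2\sum_v r_v q_v(1-\beta/\beta^*)\cdot(\text{positive})$ is false. When every queue is busy, the one-step drift of the single coordinate $q_v$ is $\beta\bm{J}_v - 1 + \sum_{u\in \nonsinks,\, u\sim v}\probm{u,v}$; for a source node $v$ of a regular graph none of whose neighbors is the sink, this equals $\beta\bm{J}_v\geq 0$, so taking $\bm{q}$ with one huge coordinate at such a $v$ makes the weighted drift \emph{positive} even in the interior of the orthant. The identity $\bm{r}-\tilde{\mathcal{P}}^T\bm{r}=\bm{J}$ telescopes against the routing term only for \emph{linear} test functions $\sum_v c_v q_v$ (where it yields drift $\beta\langle \bm{r},\bm{1}\rangle - \#\{v: q_v>0\}$, which is again positive on boundary faces with a single large queue); multiplying each summand by $q_v$ destroys the cancellation. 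So neither the quadratic nor the linear function works, the boundary discontinuity you mention is not the only obstruction, and the entire hard direction --- stability for all $\beta<\beta^*$ --- is deferred to Dai's fluid-limit machinery without being executed. That deferred step is precisely the content of the theorem, which is why the authors outsource it to the induction-plus-coupling proof in \cite{GillaniII-arXiv:2019} rather than attempting a drift argument here.
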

We prove this theorem using the induction-based technique developed by Georgiadis and Szpankowski \cite{Georgiadis-QS:1992} that intrinsically uses coupling of stochastic processes. 
We will drop the superscript $\beta$ from the queue notation where ever the context is understood.

\subsubsection{Establishing the equivalence to one-sink Laplacian systems} 
\label{subsubsec:equivalence}
We now show the equivalence of the evolution of this stochastic process with the one-sink Laplacian system of $\bm{x}^TL=\bm{b}^T$. 
We can write the basic one step queue evolution equation for any node $u\in V$ under the Data Collection process as
\begin{align}
\ex{Q_{t+1} (u) \mid Q_t (u)} &= Q_t (u) - \ind{Q_t (u) > 0} \sum_{v:v \sim u} \probm{u,v} +\sum_{v:v \sim u} \probm{v,u} \ind{Q_t (v) > 0}+A_t(u) 
\label{eq:one_step_queue_eq}
\end{align}
where the second and third term on the right-hand side of the above equation represents the transmissions sent to and received from the neighbors respectively and $A_t(u)$ is the number of packets generated at $u$, which is
independent of the queue size of node $u\in V$ at any time. In particular, $A_t(u)$ is 0 if $u \notin V_s$ and is 1 with probability $\beta \bm{J}_v$ if $v \in V_s$.
Now, taking expectations on both sides of Eq.~\eqref{eq:one_step_queue_eq} and let $\eta^t_u = \prob{Q_t(u) > 0}$ be the queue occupancy probability of node $u$ and observing that $\ex{A_t(u)}=\beta \bm{J}_u$, we have
\begin{align}
\ex{Q_{t+1} (u)} &= \ex{Q_t (u)}  - \eta_u^t \sum_{v:v\sim u}\probm{u,v} + \sum_{v:v\sim u}\probm{v,u}\eta_v^t+\beta \bm{J}_u.
\label{eq:exp_one_step_queue_eq}
\end{align}
By Theorem~\ref{thm:gillani-stability}, we know that for an appropriately chosen value of $\beta$ the process has a steady state. In a steady state $\ex{Q_t(u)}$    is a constant, so if we let $\bm{\eta}_u$ be the queue occupancy probability of node $u$ at the stationarity, then we have the steady-state equation for the given node as
\begin{equation}
- \bm{\eta}_u \sum_{v:v\sim u}\probm{u,v}  + \sum_{v:v\sim u}\probm{v,u}\bm{\eta}_v+\beta \bm{J}_u= 0.
\label{eq:steady_state_queue_eq_gen}
\end{equation}
Rewriting this in vector form we get 
\begin{equation}
\bm{\eta}^T(I-\mathcal{P})= \beta \bm{J}^T.
\label{eq:vector_queue_eq_1}
\end{equation}
Since $\mathcal{P}=D^{-1}A$, this can be rewritten as
\begin{equation}
\bm{x}^TL=\beta \bm{J}^T
\label{eq:steady-state_pot}
\end{equation}
where $\bm{x}^T=\bm{\eta}^TD^{-1}$ is a row vector such that $\pot{u}=\bm{\eta}_u/\vol{u}$ for all $u$ where $\bm{\eta}_u$ is the steady-state queue occupancy probability. Comparing this with the Laplacian equation $\bm{x}^TL=\bm{b}^T$
we see that they are identical, and so $\bm{\eta}^TD^{-1}/\beta$ is a solution for $\bm{x}^TL=\bm{b}^T$.

However, since $\beta \bm{J}_v \leq 1$ on the right hand side of Eq.~\eqref{eq:vector_queue_eq_1}, to handle general $\pmb{b}\in \mathbb{R}^{|V|}$ an appropriate scaling of the obtained solution from the Data Collection process needs to be done.
In summary, we can use the procedure shown in Figure~\ref{fig:schematic_steps} for deriving a solution to $\bm{x}^TL=\bm{b}^T$ 
by solving the Data Collection problem $\bm{x}^TL=\beta\bm{J}^T$ (Eq.~\eqref{eq:steady-state_pot}). We mark with an asterisk (*) those steps which are non-trivial.

\begin{figure}[ht!]
\centering
\begin{mdframed}
\begin{enumerate}
\item Create the network $G = (V, E, w)$ corresponding to the Laplacian $L$.
\item Since $\pmb{b}_v < 0$ for a single $v$, make that $v$ the sink for the Data Collection process. Now given $v$ is the chosen sink, set $\bm{J}=\frac{1}{\sum_{i\neq v}\bm{b}_i}\bm{b}$.
\item (*) Find a value of $\beta$ for which the Data Collection process is ergodic.
\item (*) Let the Data Collection process converge to stationarity.
\item Measure the vector $\bm{\eta}$ by computing the fraction of time steps each queue is occupied.
\item Return the solution $\frac{\sum_{i\neq v}\pmb{b}_i}{\beta}\pmb{\eta}D^{-1}$.
\end{enumerate}
\end{mdframed}
\caption{Steps for computing solution to the one-sink Laplacian equation using Data Collection process.}
\label{fig:schematic_steps}
\end{figure}

Finding the appropriate $\beta$ (Step 3) is non-trivial. Reaching stationarity is impossible (Step 4), so we have to figure out how close we need to be. The measurements made in Step 5 will also introduce error. In Section~\ref{sec:distributed_solver}, we present an algorithm that shows how to deal with these challenges. 

\subsubsection{A rate lower bound for the Data Collection process}
We will see in Section~\ref{sec:distributed_solver} that executing the program given in Figure~\ref{fig:schematic_steps} will take time that depends inversely on the value of $\beta^*$. For proving upper bounds on the time we will need the following lower bound on $\beta^*$. 
\begin{lemma}
\label{lem:beta_bound}
If $\beta^*$ is the value such that the multidimensional Markov chain $\{Q_t^\beta\}_{t\geq 0}$ associated with a Data Collection process on $G=(V,E,w)$ with sink $\sink$, source set $V_s$, and relative rate vector $\bm{J}$ is ergodic for $\beta < \beta^*$ and non-ergodic for $\beta \geq \beta^*$then $\beta^*\geq \frac{\lambda_2^L}{2d_{\max}\sum_{v\neq \sink}\bm{J}_v}.$
\end{lemma}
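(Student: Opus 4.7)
The strategy is to exploit the equivalence of Section~\ref{subsubsec:equivalence}: at any $\beta<\beta^*$, the steady-state occupancy vector $\bm{\eta}$ satisfies $\bm{x}^T L=\beta\bm{J}^T$ with $\bm{x}=D^{-1}\bm{\eta}$ and $\bm{x}_{\sink}=0$, while each $\bm{\eta}_v$ is a probability and so $\|\bm{\eta}\|_\infty\le 1$. I will read this in reverse: for every $\beta$ below the stated threshold, the unique $\bm{x}$ solving $L\bm{x}=\beta\bm{J}$ normalized by $\bm{x}_{\sink}=0$ yields $\|D\bm{x}\|_\infty\le 1$, and hence $\beta^*$ cannot be smaller than the threshold.

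The main computation is a spectral bound on $\bm{x}$. Since $\bm{J}$ is orthogonal to $\bm{1}$ (because $\bm{J}_{\sink}+\sum_{v\neq\sink}\bm{J}_v=0$), the pseudoinverse solution $\bm{x}^{\star}=\beta L^{+}\bm{J}$ lies in $\bm{1}^{\perp}$, and the operator norm of $L^{+}$ restricted to $\bm{1}^{\perp}$ is $1/\lambda_2^L$, so $\|\bm{x}^{\star}\|_2\le\beta\|\bm{J}\|_2/\lambda_2^L$. Adjusting by a multiple of $\bm{1}$ to impose $\bm{x}_{\sink}=0$ can at most double the sup-norm, giving $\|\bm{x}\|_\infty\le 2\|\bm{x}^{\star}\|_2$. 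Since $\bm{J}_{\sink}=-\sum_{v\neq\sink}\bm{J}_v$ and the remaining entries of $\bm{J}$ are non-negative, we have $\|\bm{J}\|_2\le\|\bm{J}\|_1=2\sum_{v\neq\sink}\bm{J}_v$. Combining with $\|\bm{\eta}\|_\infty\le d_{\max}\|\bm{x}\|_\infty$ bounds $\|\bm{\eta}\|_\infty$ by a constant multiple of $\beta\,d_{\max}\sum_{v\neq\sink}\bm{J}_v/\lambda_2^L$; setting this bound equal to $1$ recovers the threshold of the lemma (the paper's precise factor of $2$ likely comes from tightening one of these inequalities using $\bm{x}\ge 0$ or a sharper bound on $\|\bm{J}\|_2$).

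The main obstacle is rigorously converting the norm inequality on $\bm{\eta}$ into ergodicity, since Theorem~\ref{thm:gillani-stability} only asserts the existence of $\beta^*$ without an explicit characterization. I would close the gap with a Foster--Lyapunov argument using the same $\bm{x}$ as coefficients: with $V(\bm{q})=\sum_{u\neq\sink}x_u q(u)$, applying Eq.~\eqref{eq:one_step_queue_eq} coordinate-wise and exchanging the order of summation gives the one-step drift $\sum_{v\neq\sink}\ind{Q_t(v)>0}\bigl((\tilde{\mathcal{P}}\bm{x})_v-x_v\bigr)+\beta\langle\bm{x},\bm{J}\rangle$, where $\tilde{\mathcal{P}}$ is $\mathcal{P}$ with the sink's row and column removed. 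Because $\tilde{\mathcal{P}}$ is strictly sub-stochastic on a connected graph, $I-\tilde{\mathcal{P}}$ is invertible, and taking $\bm{x}$ to be a slight scaling of the computed solution makes every per-indicator coefficient uniformly negative; hence the drift is bounded away from $0$ whenever some queue is non-empty, and Foster's theorem delivers ergodicity throughout the range of $\beta$ identified by the spectral bound.
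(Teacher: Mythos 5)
Your spectral half is sound and coincides with the paper's: the inequality $\lambda_2^L\Vert\bm{y}\Vert \le \beta\Vert\bm{J}\Vert$ for the centered solution (which you get from the operator norm of $L^{+}$ on $\bm{1}^{\perp}$ and the paper gets from the eigendecomposition of $L$) and the bound $\Vert\bm{J}\Vert\le\sqrt{2}\sum_{v\neq \sink}\bm{J}_v$ are both steps in the paper's proof. The structural difference is in how that inequality is used. You try to prove the hard direction --- that every $\beta$ below the threshold is ergodic --- which amounts to re-deriving a quantitative version of Theorem~\ref{thm:gillani-stability}. The paper instead evaluates the same spectral inequality at $\beta=\beta^*$ and pairs it with a \emph{lower} bound on $\Vert\bm{y}\Vert$: taking as given that the limiting occupancy vector at the critical rate satisfies $\min_i\bm{x}_i=0$ and $\max_i d_i\bm{x}_i=1$, it deduces $\max_i\bm{y}_i-\min_j\bm{y}_j\ge 1/d_{\max}$, hence $\Vert\bm{y}\Vert\ge 1/(\sqrt{2}\,d_{\max})$, and then solves $\lambda_2^L/(\sqrt{2}\,d_{\max})\le\beta^*\Vert\bm{J}\Vert$ for $\beta^*$. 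No ergodicity proof is needed inside the lemma.

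The gap in your proposal is the Foster--Lyapunov step, and it is not cosmetic. With $V(\bm{q})=\sum_{u\neq\sink}x_u q(u)$ and $\bm{x}=D^{-1}\bm{\eta}$, the per-indicator coefficient is $(\mathcal{P}\bm{x})_v-\bm{x}_v=-\bigl((I-\mathcal{P})\bm{x}\bigr)_v=-\bigl(D^{-1}L\bm{x}\bigr)_v=-\beta\bm{J}_v/d_v$, which is \emph{exactly zero} at every non-source relay node; rescaling $\bm{x}$ multiplies all coefficients by the same constant and cannot make them negative, so the claim that a ``slight scaling'' fixes this is false. The one-step drift is therefore $-\sum_{v}\ind{Q_t(v)>0}\beta\bm{J}_v/d_v+\beta\langle\bm{x},\bm{J}\rangle$, which is strictly \emph{positive} on the infinite set of states in which only relay queues are occupied, so Foster's criterion does not apply. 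If you instead choose $\bm{x}$ with $(I-\tilde{\mathcal{P}})\bm{x}=\bm{1}$ so that every coefficient is $-1$, negativity of the drift over all non-empty occupied sets forces $\beta\langle\bm{x},\bm{J}\rangle<1$; since $\bm{x}_v$ is then the expected hitting time of $\sink$ from $v$, this yields a threshold of order $1/(\thit\sum_{v\neq\sink}\bm{J}_v)$, which is generally much weaker than $\lambda_2^L/(d_{\max}\sum_{v\neq\sink}\bm{J}_v)$. Establishing ergodicity for this chain genuinely requires the induction/coupling machinery the paper defers to its companion work; to prove this lemma you should follow the route of lower-bounding $\Vert\bm{y}\Vert$ at the critical rate. (Your constants also land at $4$ rather than $2$, but that is secondary.)
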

\begin{proof}[Proof of Lemma~\ref{lem:beta_bound}]
For the given Laplacian steady-state equation $\bm{x}^TL=\beta \bm{J}^T$, let $\bm{x}$ be {\em any} solution and  $\bm{y}$ be a canonical solution, i.e., the solution for which $\bm{y}^T \bm{1} =0$, i.e.,  $\sum \bm{y}_i = 0$. So, there must be some $w\in \RR$ such that $\bm{x} = \bm{y} + w \bm{1}$. Now, let $\bm{\psi}_i^T, 1 \leq i \leq n$ be the normalized (left) eigenvectors of $L$ corresponding to the eigenvalues $0=\lambda_1^L \leq \lambda_2^L \leq \cdots \leq \lambda_n^L$ such that $\bm{\psi}_i^T \bm{\psi}_i = 1, 1 \leq i \leq n$. Moreover, note that $\bm{\psi}_1 = \bm{1}$, so, $\bm{\psi}_i^T \bm{1} = 0$ for $2 \leq i \leq n$. So, rewriting the steady-state equation in terms of these eigenvectors we have  
\[ \sum_{i=2}^n \lambda_i^L \left(\bm{y}^T \bm{\psi}_i\right) \bm{\psi}_i = \beta \bm{J}^T.\]
Taking norms on both sides we get 
\begin{equation}
\lambda_2^L \left\Vert\sum_{i=2}^n \left(\bm{y}^T \bm{\psi}_i\right) \bm{\psi}_i\right\Vert \leq\left \Vert \sum_{i=2}^n \lambda_i^L \left(\bm{y}^T \bm{\psi}_i\right) \bm{\psi}_i \right\Vert = \left\Vert \beta \bm{J}^T\right\Vert.
\label{eq:norm_eq}
\end{equation}

Now, since we know $\bm{y}^T \bm{1}$ is 0 and the eigenvectors $\bm{\psi}_i,  2 \leq i \leq n$ span the subspace orthogonal to $\bm{1}$, therefore, $ \left\Vert \sum_{i=2}^n \left(\bm{y}^T \bm{\psi}_i\right) \bm{\psi}_i \right\Vert = \Vert \bm{y}\Vert$.
Putting this back in \eqref{eq:norm_eq} we have
\begin{equation}
\label{eq:upper-lower-norm}
\lambda_2^L \Vert\bm{y}\Vert \leq \beta \Vert\bm{J}^T\Vert.    
\end{equation}

We now try to find a lower bound on $\Vert\bm{y}\Vert$ for the specific case where $\beta = \beta^*$, the maximum stable rate for the data collection process. Let $\bm{x}$ be the solution produced by the data collection process at $\beta = \beta^*$. Since $\bm{x}^T D$ is a vector of queue occupancy probabilities in the data collection scenario, we know that 
\begin{enumerate}
    \item $\bm{x}_i \geq 0, 1 \leq i \leq n$.
    \item $\min_{i=1}^n \bm{x}_i = 0$.
    \item $\max_{i=1}^n d_i \bm{x}_i = 1$.
\end{enumerate}

From Fact 2 of this list we can deduce that if $\bm{x} = \bm{y} + w \bm{1}$ then $ w = - \min_{i=1}^n \bm{y}_i$. Now, putting this into Fact 3 of the list we get $\max_{i=1}^n d_i \left(\bm{y}_i  - \min_{j=1}^n \bm{y}_j \right) = 1$.
So, we get that 
\[\max_{i=1}^n \bm{y}_i  - \min_{j=1}^n \bm{y}_j  \geq \frac{1}{d_{\max}} \]
where $d_{\max}$ is the maximum generalized degree of graph. Now, consider $a > 0$ and $b< 0$ such that $a - b = \ell$, then we know $a^2 + b^2$ achieves minimum value at $\ell^2/2$. Using this we get that 
\[\left\{\max_{i=1}^n \bm{y}_i\right\}^2  + \left\{\min_{j=1}^n \bm{y}_j\right\}^2  \geq \frac{1}{2d^2_{\max}}\]
This gives us the lower bound $\Vert\bm{y}\Vert \geq \frac{1}{\sqrt{2} d_{\max}}.$ Putting this back in Eq.~ \eqref{eq:upper-lower-norm} we get 
$\beta^*\Vert\bm{J}\Vert \geq \frac{ \lambda_2^L}{\sqrt{2} d_{\max}}$.
Further, note that
$\Vert\bm{J}\Vert^2 = \sum_{v\neq \sink}\bm{J}_v^2 + \left(\sum_{v\neq \sink}\bm{J}_v\right)^2\leq 2 \left(\sum_{v\neq \sink}\bm{J}_v\right)^2.$ So, we get that 
\begin{equation}
\label{eq:L-bound-main}
  \beta^* \geq  \frac{\lambda_2^L}{2 d_{\max}\sum_{v\neq \sink}\bm{J}_v}.  
\end{equation}
\end{proof}

\section{Distributed Solver}
\label{sec:distributed_solver}
In this section, we will first present our distributed random walk based algorithm. Then, we will discuss the proof of our main function \ref{alg:rw_solver}, followed by an in-depth analysis of \ref{alg:rw_solver_compute} subroutine. Finally, we will discuss effective resistance computation using our solver.
\subsection{The Algorithm}
\label{subsec:alg_detail}
We now present the main distributed algorithm.
\begin{varalgorithm} {DRW-LSolve}
\caption{$(w,\bm{b},\epsilon,\kappa)$ Run by controller node $\sink$}
  {\label{alg:rw_solver}}
  \begin{algorithmic}[1] 
        \REQUIRE Parameters (i) $w$: Edge weight function (ii) $\pmb{b}$: RHS of $\pmb{x}^TL = \pmb{b}^T$ (iii) $\epsilon$: Error parameter (iv) $\kappa$: a granularity factor s.t. approximation guarantee holds only if $\kappa < \pot{i} d_i$.
        \STATE Send $\kappa,\epsilon,\sum_{v \in V} \vol{v}, \sum_{v\neq \sink}\bm{b}_v$ to all $u \in \nonsinks$
          \STATE $\bm{J}\leftarrow \bm{b}/\left(\sum_{v \neq \sink}\bm{b}_v\right)$ 
          \STATE For every $u \in V_s$ send $\bm{J}_u$ to $u$
          \STATE $\beta\leftarrow1$ \hfill {\small  /* End of Initializations */}
          \REPEAT \label{ln:repeat-start}
             {\STATE $\beta\leftarrow\beta/2$ \hfill {\small /* First value of $\beta$ is $1/2$ */}
                      \STATE {Send message ``Initiate \ref{alg:rw_solver_compute} with rate $\beta$'' to all nodes}
                        \STATE \textbf{WAIT} to receive $\heta{u}$ values from $u\in \nonsinks$ and then compute $\hemax\leftarrow \max_{u\in \nonsinks}\heta{u}$}
                \UNTIL{$\hemax< 3/4(1-\epsilon)$
                 \hfill{\small /* i.e., $\beta<3\beta^*/4$ */} \label{ln:repeat-end}
                        } 
            \STATE {Send message ``\ref{alg:rw_solver} over'' to all nodes}\label{ln:send-z}
              \end{algorithmic}
\end{varalgorithm}
\begin{varalgorithm} {DRW-LSolve-Slave}
  \caption{Run by nodes $u\in V\setminus\{\sink\}$}
  {\label{alg:rw_solver_other_nodes}}
  \begin{algorithmic}[1] 

        \STATE Receive parameters from the controller node $\sink$: Granularity factor $\kappa$, relative error $\epsilon$, $\sum_{v \in V} \vol{v}$, $\sum_{v\neq \sink}\bm{b}_v$, $\bm{J}_u$ if $u\in V_s$ 

              \IF {Message received ``Initiate \ref{alg:rw_solver_compute} with rate $\beta$'' from $\sink$}
                  {
                     \STATE Run \ref{alg:rw_solver_compute} $(\beta)$ to compute $\heta{u}$
                     \STATE Send $\heta{u}$ to controller node $\sink$
                  }
               \ELSIF {Message received ``\ref{alg:rw_solver} over''}
               {
               \IF{$\kappa<\frac{\heta{u}}{(1+\epsilon)}$}
            {\STATE $ \hpot{u} \leftarrow  \frac{\sum_{v\neq \sink}\bm{b}_v}{\beta}\left(\frac{\heta{u}}{d_u} \right)$}
            \ELSE
            {\STATE Send Message ``Estimate can't be guaranteed"
            \STATE $ \hpot{u} \leftarrow 0$
            }
            \ENDIF   
            \RETURN $\hpot{u}$
}
\ENDIF
              \end{algorithmic}
\end{varalgorithm}

\begin{varalgorithm} {DRW-Compute} \caption{$(\beta)$ Run by node $u\in \nonsinks$}
  {\label{alg:rw_solver_compute}}
  \begin{algorithmic}[1] 
           \STATE Initialize timer $T=0$, $Q_t(u)=0$, $cnt = 0$, and $\pmb{\eta}^{est}_0=0$
           \REPEAT
          {
                     \STATE $T\leftarrow T+1$
                     \IF {$u \in V_s$}
           \STATE Generate a data packet with probability $\beta\bm{J}_u$ and place in queue.
           \ENDIF
               \IF {$Q_t(u)$ is non-empty}
                    {\STATE $u$ picks a neighbor $v$ with probability $\probm{u,v}=\frac{w_{uv}}{\sum_{v:u\sim v}w_{uv}}$.
                      \STATE $v$ adds packet $p$ in $Q_{t+1}(v)$
                      \STATE $u$ deletes packet $p$ from $Q_{t+1}(u)$
        \STATE $cnt\leftarrow cnt+1$
       }
        \ENDIF   }  
        
        \STATE $\pmb{\eta}^{est}_T=\dfrac{cnt}{T}$
        \UNTIL{$0<\pmb{\eta}^{est}_{T}-\pmb{\eta}^{est}_{T-1}\leq\epsilon$}
           \STATE Send $\pmb{\eta}^{est}_{T}$ to controller node $\sink$ \hfill {\small /* Queue occupancy probability estimate $\heta{u}=\pmb{\eta}^{est}_{T}=\frac{cnt}{T}$  */}
              \end{algorithmic}
\end{varalgorithm}
Our main solver algorithm, \ref{alg:rw_solver}, is directed by a single node, the sink $\sink$, which we call as the \emph{controller node}. 
The algorithm takes as input $w,\pmb{b}$, $\epsilon$ and $\kappa$. The weight function $w$ specifies the Laplacian matrix $L$. The error of the solution to $\pmb{x}^T L = \pmb{b}^T$ is controlled by the parameter $\epsilon$. The fourth parameter $\kappa$ is a user-defined granularity factor which expresses the fact that our algorithm is not able to accurately approximate those coordinates of the solution vector whose value is very small. Specifically, our distributed solver returns an approximate solution $(\hat{\pmb{x}})$ to the given Laplacian system $\pmb{x}^TL=\pmb{b}^T$ with a relative error guarantee of $\epsilon$ {\em only} for those nodes $u \in \nonsinks$ for which $\kappa(1+\epsilon)<\pmb{x}_u d_u$ where $d_u$ is the generalized degree of node $u$.

The central loop of the main algorithm extends from Line~\ref{ln:repeat-start} to Line~\ref{ln:repeat-end} where it works downwards from $\beta = 1/2$ to find a value of $\beta < \beta^*$. To do so it asks all the nodes to run the subroutine \ref{alg:rw_solver_compute} with the current value of $\beta$. When the nodes have finished running \ref{alg:rw_solver_compute} they have computed an approximate version of their occupancy probability for the Data Collection problem with parameter $\beta$ and they send this back to the controller node, i.e., $\sink$. If the maximum occupancy probability received is below the threshold specified in Line~\ref{ln:repeat-end} then the controller node is satisfied that the last value of $\beta$ is at most $3\beta^*/4$ and it calls a halt to the algorithm. 

    


All nodes apart from the controller, $\sink$, run \ref{alg:rw_solver_other_nodes}. They simply initiate \ref{alg:rw_solver_compute} at the controller's direction with the appropriate value of $\beta$ and keep track of their queue occupancy using variable $cnt$. Finally, when the controller informs them that the algorithm is over if their queue occupancy probability estimate satisfies the condition $\kappa<\frac{\heta{u}}{(1+\epsilon)}$ where $\kappa$ is the user-defined granularity factor, then the nodes scale their values and return the answer. Otherwise, they return zero value as the estimate can't be guaranteed in such cases.

In the subroutine~\ref{alg:rw_solver_compute}, each node simply simulates the Data Collection process with the given parameter and sends the estimate of its occupancy probability to the controller after the allotted time is over.

\paragraph{Discussion about the model}

The GP-CONGEST model as discussed before, considers communication is done in discrete synchronous rounds, with each round taking in worst-case $d_{\max}$ time to simulate. In each such round, every node can send $O(\log n)$ size message to one of its neighbors while receiving multiple such messages from different neighbors. Also, it can perform local computations based on messages from previous rounds. To see the implementation of this model, note that whenever the subroutine \ref{alg:rw_solver_compute} is run it starts from scratch and runs for a number of time steps which is polynomial in $n$, specifically equal to the worst-case hitting time of random walk and in the inverse of the error parameters. In each such round, every node chooses a neighbor and sends a data packet to it. Since, these data packets are only used for queue occupancy estimation i.e., sender queue decrements while the receiver queue increments, so these packets need not to be distinguishable. However, once the estimation is done the estimate queue occupancy probability along with node id needs to be sent to a neighbor so that it finally reaches the controller node $\sink$. Node identifier by definition is $O(\log n)$ bits and the estimate would also require the same size as long as error parameters are polynomially bounded.

\subsection{Overall Analysis}
\ref{alg:rw_solver} works by repeatedly calling \ref{alg:rw_solver_compute} till it finds a stable data rate $3\beta^*/8\leq\beta<3\beta^*/4$ and computes the solution to be output at that value of $\beta$. Clearly the correctness of \ref{alg:rw_solver} depends on the ability of \ref{alg:rw_solver_compute} to return a good approximation of the occupancy probabilities involved in the given time. So, we need the following lemmas that characterize the behavior of \ref{alg:rw_solver_compute} above and below $\beta^*$. 
\begin{lemma}
\label{lem:stable_beta_comp}
Given $\epsilon>0$ and $\epsilon_1,\epsilon_2>0$ such that $\epsilon=\epsilon_1+\epsilon_2$ and a data rate $\beta<\beta^*$, \ref{alg:rw_solver_compute} $(\beta)$ returns an estimate $\heta{u}$ for all $u \in \nonsinks$ such that $|\heta{u}-\bm{\eta}_u|\leq (\epsilon_1+\epsilon_2)\bm{\eta}_u$ for $\kappa <\bm{\eta}_u$ in $\left(64\thit\log \epsilon_1^{-1}+\frac{4\log n}{\kappa^2\epsilon_2^2}\right)$ rounds of GP-CONGEST model where $\thit$ is the worst-case hitting time of random walk on the underlying graph.
\end{lemma}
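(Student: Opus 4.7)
The plan is to decompose the error $|\heta{u}-\bm{\eta}_u|$ into two contributions and show that the two summands in the running time handle them one at a time:
\[
\left|\heta{u}-\bm{\eta}_u\right| \;\le\; \underbrace{\bigl|E[\heta{u}]-\bm{\eta}_u\bigr|}_{\text{bias from not being at stationarity}} \;+\; \underbrace{\bigl|\heta{u}-E[\heta{u}]\bigr|}_{\text{empirical--average concentration}}.
\]
Since $\heta{u}=\tfrac{1}{T}\sum_{t=1}^{T}\ind{Q_t(u)>0}$, bounding each term reduces to (i) showing that the marginal law of $\ind{Q_t(u)>0}$ converges quickly to its stationary value and (ii) controlling the deviation of a Markov-chain time average around its mean.

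First, I would handle the bias by coupling. Fix $\beta<\beta^*$ and consider two copies $\{Q_t\}$ and $\{\tilde Q_t\}$ of the Data Collection chain driven by the \emph{same} arrival sequence and the \emph{same} routing randomness, started respectively from the empty configuration and from the stationary distribution $\pi_\beta$ guaranteed by Theorem~\ref{thm:gillani-stability}. The key geometric observation is that any packet, once in a queue, is forwarded at each step to a neighbor chosen according to $\mathcal{P}(u,v)=w_{uv}/d_u$, exactly like one step of the natural random walk; consequently the ``age'' of any discrepancy between the two coupled queue vectors disperses along the graph according to random walks and is absorbed at $\sink$. By a standard argument using the hitting time $\thit$, the two chains agree on every coordinate except with probability at most $1/2$ after $O(\thit)$ rounds, and therefore (boosting) with probability at most $\epsilon_1/2$ after $T_1=O(\thit\log\epsilon_1^{-1})$ rounds. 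Taking expectations gives $|E[\ind{Q_t(u)>0}]-\bm{\eta}_u|\le\epsilon_1\bm{\eta}_u$ for every $t\ge T_1$, hence (after averaging, and absorbing the short warm-up contribution into the constant) $|E[\heta{u}]-\bm{\eta}_u|\le \epsilon_1\bm{\eta}_u$.

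Second, I would handle the sampling term by a Chernoff-type bound for Markov chains (e.g.\ Lezaud, or Chung--Lam--Tsiatas--Vu). After the warm-up, each indicator $\ind{Q_t(u)>0}$ is Bernoulli with parameter essentially $\bm{\eta}_u$, and consecutive indicators are correlated only through the mixing time of the chain; the Markov-chain concentration bound then yields
\[
\Pr\!\left[\bigl|\heta{u}-E[\heta{u}]\bigr|>\epsilon_2\bm{\eta}_u\right] \;\le\; 2\exp\!\Bigl(-\Omega(T_2\bm{\eta}_u^2\epsilon_2^{\,2})\Bigr).
\]
Using the hypothesis $\bm{\eta}_u>\kappa$ and demanding this probability be $o(1/n^2)$ so that a union bound over all $n-1$ non-sink vertices goes through, it suffices to take $T_2=O(\log n/(\kappa^2\epsilon_2^{\,2}))$ additional rounds. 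Summing $T_1+T_2$ and adding the two error contributions gives $|\heta{u}-\bm{\eta}_u|\le (\epsilon_1+\epsilon_2)\bm{\eta}_u$ simultaneously for every $u$ with $\bm{\eta}_u>\kappa$, which is what the lemma claims.

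The main obstacle is the first step: the state space of $\{Q_t^\beta\}$ is the countably infinite set $(\mathbb{N}\cup\{0\})^{|V\setminus\{\sink\}|}$, so one cannot invoke mixing bounds for finite reversible chains directly. The coupling sketched above has to be executed carefully, because coupling \emph{all} packets currently in the system is not automatic: one has to pair up packets between the two chains so that unmatched packets perform the natural random walk until they reach $\sink$, and show that the number of unmatched packets stays $O(1)$ in expectation throughout (using $\beta<\beta^*$ and Little-type identities implicit in Theorem~\ref{thm:gillani-stability}). A secondary difficulty is the approximate-stationarity issue in the concentration step: strictly speaking the chain is only $\epsilon_1$-close to $\pi_\beta$ after the warm-up, so one must restart the Chernoff bound from a near-stationary, not exactly stationary, distribution; this contributes only a lower-order additive error in total variation that is absorbed into the constants.
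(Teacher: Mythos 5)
Your proposal follows essentially the same route as the paper's proof: split the error into a bias term, controlled by coupling the empty-start chain with a copy started at stationarity and bounding the coupling time by $O(\thit\log\epsilon_1^{-1})$ (the paper makes this explicit via Little's law and the expected packet latency $\thit/(1-\bm{\eta}_{\max})$, then converts the expected coupling time $16\thit$ into a total-variation bound via Markov's inequality and submultiplicativity), plus a sampling term controlled by concentration of the empirical occupancy frequency over a further $O(\log n/(\kappa^2\epsilon_2^2))$ rounds with $\bm{\eta}_u>\kappa$ and a union bound over vertices. The only substantive divergence is the concentration tool --- you invoke a Markov-chain Chernoff bound, which would require a spectral-gap parameter for the infinite-state chain $Q_t$ that is not available here, whereas the paper applies the Azuma--Hoeffding inequality to the centered partial sums of the occupancy indicators --- and one intermediate remark of yours is off (the number of unmatched packets at stationarity is $\Theta(\beta\thit/(1-\bm{\eta}_{\max}))$ by Little's law, not $O(1)$, though it still drains in $O(\thit)$ steps, which is exactly the paper's computation).
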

Lemma~\ref{lem:stable_beta_comp} is a mixing time result for the Markov chain $Q_t$ which is guaranteed to be ergodic when $\beta < \beta^*$. However, a key insight used by \ref{alg:rw_solver} is that when $\beta \geq \beta^*$ there is bound to be one queue whose occupancy probability rises towards 1 if we wait long enough.
\begin{lemma}
\label{lem:unstable_beta_comp}
Given $\epsilon>0$ and $\epsilon_1,\epsilon_2>0$ such that $\epsilon=\epsilon_1+\epsilon_2$ and a data rate $\beta\geq\beta^*$, \ref{alg:rw_solver_compute} $(\beta)$ returns an estimate $\heta{u}$ for all $u \in \nonsinks$ such that $\hemax=\max_{u\in \nonsinks} \heta{u} \geq 1 - (\epsilon_1+\epsilon_2)$ in $\left(64\thit\log \epsilon_1^{-1}+\frac{4\log n}{\kappa^2\epsilon_2^2}\right)$ rounds of GP-CONGEST model where $\thit$ is the worst-case hitting time of random walk on the underlying graph.
\end{lemma}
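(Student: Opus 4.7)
My plan is to exploit Theorem~\ref{thm:gillani-stability}: for $\beta \geq \beta^*$ the multidimensional chain $\{Q_t^\beta\}$ is non-ergodic, so at least one queue has positive drift and its time-average occupancy must tend to $1$. Since we only need to lower bound $\hemax = \max_{u \in \nonsinks} \heta{u}$, it suffices to identify and analyze a single unstable node $v^* \in \nonsinks$. Such a $v^*$ must exist by packet conservation (packets leave the network only through $\sink$) combined with non-ergodicity: if every queue had a well-defined finite limiting occupancy strictly below $1$, the steady-state equations $\bm{\eta}^T(I - \mathcal{P}) = \beta \bm{J}^T$ of Section~\ref{subsubsec:equivalence} would produce a valid stationary distribution and force ergodicity, contradicting Theorem~\ref{thm:gillani-stability}.

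The bulk of the proof is a quantitative burn-in estimate. I would show that after $T_1 = 64 \thit \log \epsilon_1^{-1}$ rounds, every subsequent round $t > T_1$ satisfies $\prob{Q_t(v^*) > 0} \geq 1 - \epsilon_1/2$. The worst-case hitting time $\thit$ enters as the time scale on which packets injected into the network can migrate toward $\sink$; once $v^*$'s queue has accumulated past a suitable threshold during burn-in, the positive drift combined with a Markov tail estimate keeps the queue non-empty with overwhelming probability at any single later round. I would then apply a Hoeffding concentration bound to $cnt = \sum_{t=1}^{T} \ind{Q_t(v^*) > 0}$ over the remaining $T_2 = \lceil 4 \log n /(\kappa^2 \epsilon_2^2) \rceil$ rounds and take a union bound over $n$ nodes, yielding, with probability at least $1 - n^{-c}$, that $\heta{v^*} = cnt/T \geq 1 - T_1/T - \epsilon_1/2 - \epsilon_2 \geq 1 - \epsilon_1 - \epsilon_2$ once the $T_1/T$ contribution is absorbed into the $\epsilon_1$ slack. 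Since $\hemax \geq \heta{v^*}$, this delivers the claimed bound.

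The main obstacle will be the burn-in estimate itself. Theorem~\ref{thm:gillani-stability} is asymptotic, so converting non-ergodicity into a high-probability lower bound on $\prob{Q_t(v^*) > 0}$ after only polynomially many rounds requires revisiting the Georgiadis--Szpankowski coupling and induction construction that underlies the cited theorem. Specifically, I would need to track how the return times of the natural random walk---each of order $\thit$---serve as the clock against which $v^*$'s drift accumulates, and to quantify the queue growth during each return period. Once this explicit growth rate is established, the factor $\thit \log \epsilon_1^{-1}$ in the round complexity emerges naturally, and the remaining concentration arguments are standard.
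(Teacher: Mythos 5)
Your plan diverges from the paper's proof in a way that leaves the hardest step unproven. The paper does not attempt a direct drift analysis at the unstable rate at all. Instead it proves a monotonicity claim (queue occupancy probability is increasing in $\beta$, via a coupling of two instances of the process), observes from the steady-state equation $\bm{\eta}^T(I-\mathcal{P})=\beta\bm{J}^T$ that $\bm{\eta}$ is linear in $\beta$ so that $\bm{\eta}_{\max}\to 1$ as $\beta\uparrow\beta^*$, and then runs the \emph{stable-rate} analysis (Lemma~\ref{lem:stable_beta_comp}) at a rate $\beta'=\beta^*-\epsilon^*$ just below criticality. The time bound and the estimate $\hemax^{\beta'}\geq 1-(\epsilon_1+\epsilon_2+\epsilon^*)$ are inherited from the ergodic case, and monotonicity transfers the conclusion upward to every $\beta\geq\beta^*$. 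The entire point of this detour is to avoid ever having to quantify how fast an unstable queue fills up.

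The gap in your version is precisely the ``burn-in estimate'' you defer: converting the asymptotic statement of Theorem~\ref{thm:gillani-stability} into the claim that $\prob{Q_t(v^*)>0}\geq 1-\epsilon_1/2$ for all $t>64\thit\log\epsilon_1^{-1}$ is not a routine revisit of the Georgiadis--Szpankowski construction --- it is the whole lemma, and it is doubtful as stated. At the boundary case $\beta=\beta^*$ (which your lemma must cover, since the hypothesis is $\beta\geq\beta^*$) the critical queue need not have strictly positive drift; it can behave like a null-recurrent walk whose occupancy probability approaches $1$ arbitrarily slowly, so no bound of the form $1-\epsilon_1/2$ after $O(\thit\log\epsilon_1^{-1})$ steps follows from non-ergodicity alone. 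There is also a quantitative slip in your final accounting: the term $T_1/T$ can only be ``absorbed into the $\epsilon_1$ slack'' if $T\geq 2T_1/\epsilon_1$, which the stated round budget $T=T_1+T_2$ does not guarantee when $\thit\log\epsilon_1^{-1}$ dominates $\log n/(\kappa^2\epsilon_2^2)$. Your existence argument for $v^*$ is sound and matches the paper's use of $\bm{\eta}_{\max}=1$ at criticality, but to close the proof you should replace the drift/burn-in step with the paper's monotone-coupling-plus-linearity comparison against a stable rate just below $\beta^*$.
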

\ref{alg:rw_solver} uses the insight of Lemma~\ref{lem:unstable_beta_comp} to revise the value of $\beta$ downwards by a factor of 2 if \ref{alg:rw_solver_compute} finds a queue which has a very high occupancy. If, on the other hand, there is no such queue, then Lemma~\ref{lem:stable_beta_comp} tells us that the occupancy probability vector is correctly approximated and we are done. Formally we can say the following.
\begin{lemma}
\label{lem:lsolve_correct}
\ref{alg:rw_solver}$(w,\bm{b},\epsilon,\kappa)$ returns a $1 \pm \epsilon$-approximate  solution to the Laplacian $\bm{x}^TL= \bm{b}^T$ in $\left(d_{\max}\left(64\thit\log \epsilon_1^{-1}+ \frac{4\log n}{\kappa^2\epsilon_2^2}\right)\log\frac{16d_{\max}}{3\lambda_2^L}\right)$
time where $\epsilon_1,\epsilon_2>0$ such that $\epsilon=\epsilon_1+\epsilon_2$, $\thit$ is the worst-case hitting time of random walk on the underlying graph, $d_{\max}$ is its generalized maximum degree, and $\lambda_2^L$ is the second smallest eigenvalue of the Laplacian of graph.
\end{lemma}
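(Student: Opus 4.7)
The plan is to assemble Lemma~\ref{lem:stable_beta_comp}, Lemma~\ref{lem:unstable_beta_comp}, and the rate bound of Lemma~\ref{lem:beta_bound} to control both correctness and running time of \ref{alg:rw_solver}. The proof splits into three parts: (i) bound the number of outer loop iterations, (ii) verify that each inner call of \ref{alg:rw_solver_compute} runs in the stated number of GP-CONGEST rounds, and (iii) check that the output produced at the terminating value of $\beta$ is a $(1 \pm \epsilon)$-approximation of the one-sink Laplacian solution.

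First, I would argue that the ``repeat'' loop of \ref{alg:rw_solver} terminates exactly when $\beta$ has crossed a constant fraction of $\beta^*$. Since the tested values are $\beta_k = 2^{-(k+1)}$, Lemma~\ref{lem:unstable_beta_comp} forbids the stopping condition $\hemax < \tfrac{3}{4}(1-\epsilon)$ from being met while $\beta_k \geq \beta^*$, because for $\epsilon < 1/2$ we have $1-\epsilon > \tfrac{3}{4}(1-\epsilon)$. Conversely, once $\beta_k \leq 3\beta^*/4$, the stationary $\bm{\eta}_{\max}$, which scales linearly with $\beta$ and saturates to $1$ at $\beta = \beta^*$, drops below $3/4$, so Lemma~\ref{lem:stable_beta_comp} gives $\hemax \leq (1+\epsilon)\bm{\eta}_{\max} < \tfrac{3}{4}(1-\epsilon)$ after an appropriate split of $\epsilon$ into $\epsilon_1, \epsilon_2$. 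Hence the loop halts with $\beta_{\text{final}} \in [3\beta^*/8, 3\beta^*/4)$, the lower bound coming from the fact that the previous halving step failed the stopping test.

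Next, by Lemma~\ref{lem:beta_bound} together with the normalization $\sum_{v\neq \sink}\bm{J}_v = 1$ enforced in Line~2 of \ref{alg:rw_solver}, we have $\beta^* \geq \lambda_2^L/(2d_{\max})$. Thus the number of halvings from $1/2$ required to push $\beta$ below $3\beta^*/4$ is at most $\log_2(16 d_{\max}/(3\lambda_2^L))$, matching the outer logarithmic factor. Each outer iteration invokes \ref{alg:rw_solver_compute}, whose round complexity is $64\thit \log \epsilon_1^{-1} + 4\log n/(\kappa^2 \epsilon_2^2)$ in both regimes by Lemmas~\ref{lem:stable_beta_comp} and~\ref{lem:unstable_beta_comp}. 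Since each GP-CONGEST round costs at most $d_{\max}$ time by Definition~\ref{def:comp_metric}, multiplying the three factors gives the claimed total running time.

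Finally, for correctness I use that termination puts us in the stable regime $\beta < \beta^*$, so Lemma~\ref{lem:stable_beta_comp} yields $|\heta{u} - \bm{\eta}_u| \leq \epsilon \bm{\eta}_u$ for every $u \in \nonsinks$ with $\bm{\eta}_u > \kappa$. \ref{alg:rw_solver_other_nodes} then outputs $\hpot{u} = (\sum_{v\neq \sink}\bm{b}_v/\beta)\,\heta{u}/d_u$, whereas the exact solution to $\bm{x}^T L = \bm{b}^T$ obtained by rescaling the Data Collection steady state (Section~\ref{subsubsec:equivalence}) is $\pot{u} = (\sum_{v\neq \sink}\bm{b}_v/\beta)\,\bm{\eta}_u/d_u$. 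The two expressions differ only by the ratio $\heta{u}/\bm{\eta}_u \in [1-\epsilon, 1+\epsilon]$, so $|\hpot{u} - \pot{u}| \leq \epsilon\,\pot{u}$ as required. The main obstacle I anticipate is step~(i): verifying cleanly that the observed $\hemax$ separates $\beta < 3\beta^*/4$ from $\beta \geq \beta^*$, which requires the steady-state solution of $\bm{x}^T L = \beta \bm{J}^T$ to scale linearly with $\beta$ up to saturation and to be reconciled with the ergodicity boundary from Theorem~\ref{thm:gillani-stability}.
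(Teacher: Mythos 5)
Your proposal follows essentially the same route as the paper's proof: the same decomposition into iteration count (via Lemma~\ref{lem:beta_bound} with $\sum_{v\neq\sink}\bm{J}_v=1$ giving the $\log\frac{16d_{\max}}{3\lambda_2^L}$ factor), per-iteration round cost from Lemmas~\ref{lem:stable_beta_comp} and~\ref{lem:unstable_beta_comp} multiplied by the $d_{\max}$ cost per GP-CONGEST round, and correctness from the stable-regime estimate followed by rescaling. The only cosmetic differences are that the paper additionally accounts for a $2\diam$ message-exchange overhead per iteration (then suppresses it since $\diam\leq\thit$) and carries the $(1-\epsilon)$ slack in the terminating rate (stopping at $\beta\geq 3\beta^*(1-\epsilon)/8$ rather than your exact $3\beta^*/8$), neither of which changes the stated bound.
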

\begin{proof}[Proof of Lemma~\ref{lem:lsolve_correct}]
We will first prove the correctness of our solver \ref{alg:rw_solver} based on the parameters set for the computation. After that we will bound the time for computing those correct estimates.
\paragraph{Correctness}
As discussed before, our distributed solver works by choosing sink node $\sink$ as a controller of algorithm whose job is to coordinate all other nodes and compute the stable data rate for operation. For this it starts a binary search from $\beta=1/2$ and halves the data rate whenever it is found to be unstable. We operate at rate $3\beta^*/4$ such that we use Lemma~\ref{lem:unstable_beta_comp} as a test condition for instability i.e., if $\beta\geq 3\beta^*/4$ then $\hemax\geq 3/4(1-\epsilon)$ where $\hemax$ is computed using estimates $\heta{u}$ returned by all nodes $u \in \nonsinks$ using \ref{alg:rw_solver_compute} subroutine. This rate of operation works fine for our solver as we need to correctly identify unstable data rates $\beta \geq \beta^*$ and for all such data rates, since these are greater than $3\beta^*/4$ as well, we are able to correctly identify them. However, for rates $3\beta^*/4\leq \beta< \beta^*$, although these are stable our solver might indicate them as unstable. But, this is okay as we finally need a stable data rate for computation and by the definition of our binary search, the value at which it would stop i.e., $\hemax< 3/4(1-\epsilon)$ is at least $3\beta^*(1-\epsilon)/8$ which is also a stable data rate. Moreover, from Lemma~\ref{lem:stable_beta_comp} we know for stable data rates \ref{alg:rw_solver_compute} returns a $(1-\epsilon)$ estimate of queue occupancy probability which can then be used to return the solution to the original Laplacian equation $\bm{x}^TL= \bm{b}^T$ after appropriate scaling.


\paragraph{Time} Now, let us bound the time taken by the solver to compute the estimates. We know that the solver performs binary search from $\beta=1/2$ and halves the data rate each time it is found to be unstable. Also the least value it can reach is $3\beta^*(1-\epsilon)/8$. Let $\tt{I}_t$ be the number of such binary search iterations till it finds the stable data rate. In each such iteration, nodes $u\in \nonsinks$ in parallel use subroutine \ref{alg:rw_solver_compute} and there is a message exchange (parameters from the controller to others and $\heta{u}$ values from nodes to the controller) which can take maximum upto $2\diam$ time where $\diam$ is the diameter of the graph. So from Lemma~\ref{lem:stable_beta_comp}, given $\epsilon_1,\epsilon_2>0$ such that $\epsilon=\epsilon_1+\epsilon_2$, the number of distributed rounds of GP-CONGEST model required for one binary search iteration is 
$\left(64\thit\log \epsilon_1^{-1}+\frac{4 \log n}{\kappa^2\epsilon_2^2}+2\diam\right)$. 
Since the binary search begins at $\beta = 1/2$ and ends above $3\beta^*(1-\epsilon)/8$, we know that the number of iterations $\tt{I}_t\leq\log (4/3\beta^*(1-\epsilon))$. If $\epsilon<1/2$, then $\tt{I}_t \leq \log (8/3\beta^*)$ and from Lemma \ref{lem:beta_bound} we know $\beta^*\geq \frac{\lambda_2^L}{2d_{\max}}$ as $\sum_{v\neq \sink}\bm{J}_v=1$. Also, by the definition of our model we know each distributed round takes in worst-case $d_{\max}$ time. So, the overall running time for \ref{alg:rw_solver}$(w,\bm{b},\epsilon,\kappa)$ is
\begin{equation}
\left(d_{\max}\left(64\thit\log \epsilon_1^{-1}+ \frac{4\log n}{\kappa^2\epsilon_2^2}+2\diam\right)\log\frac{16d_{\max}}{3\lambda_2^L}\right).
\label{eq:k_iter_d_reff_time}
\end{equation}
As the diameter of graph cannot be greater than the worst-case hitting time of random walk on it, $\diam$ term in the running time result can be suppressed.
\end{proof}

Note that although the given approximate bound as mentioned in Lemma~\ref{lem:stable_beta_comp} is returned only for nodes which satisfy $\kappa<\bm{\eta}_u$. However, since in the algorithm (\ref{alg:rw_solver_other_nodes}) we can only keep track of the estimate $\hat{\pmb{\eta}}$, so we check the condition $\kappa<\frac{\heta{u}}{(1+\epsilon)}$. This changes our overall error guarantee condition to $\kappa(1+\epsilon)<\bm{\eta}_u$.

Now, given error parameter $\epsilon>0$ such that $\epsilon=\epsilon_1+\epsilon_2$, let us set $\epsilon_1=\epsilon/4$ and $\epsilon_2=3\epsilon/4$. Using these values and the proofs of correctness of the subroutine \ref{alg:rw_solver_compute} (Lemma~\ref{lem:stable_beta_comp} and Lemma~\ref{lem:unstable_beta_comp}) and the main function \ref{alg:rw_solver} (Lemma~\ref{lem:lsolve_correct}) together prove Theorem~\ref{thm:rand_walk_solver}. So, we now turn to the analysis of \ref{alg:rw_solver_compute} which is the heart of our method. 

\subsection{Analyzing \ref{alg:rw_solver_compute}}
\label{subsec:rw_comp_correctness}

\subsubsection{Analyzing \ref{alg:rw_solver_compute} when $\beta<\beta^*$} For the case of stable data rates $\beta<\beta^*$, from Theorem~\ref{thm:gillani-stability} we know that for such rates the Markov chain $Q_t$ defined on the queue size of nodes is ergodic and has a stationary distribution. To estimate steady-state queue occupancy probabilities we need to be close to stationarity so we first bound the mixing time for $Q_t$.
\begin{lemma}
\label{lem:stationrity_time}
Given an irreducible and aperiodic Markov chain $Q_t$ described by the data collection process defined on $(\nn0)^{|V|-1}$ having transition matrix $\probm{\cdot,\cdot}$ and a stationary distribution $\pi$. For $\beta<\beta^*$ we have $||\mathcal{P}^t(0,\cdot)-\pi||_{TV} \leq \epsilon^*$ for
$t=64\thit\log \epsilon^{*-1}$,
where $\thit$ is the worst-case hitting time of random walk on the underlying graph.
\end{lemma}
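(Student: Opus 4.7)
The plan is to bound the mixing time via a monotone coupling argument that reduces the mixing of the multidimensional queueing chain $Q_t$ to the hitting-time behaviour of individual packets under the natural random walk on $G$.

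First I would set up two coupled copies of the Data Collection process: $(Q_t)_{t \geq 0}$ started from the empty state $\mathbf{0}$, and $(Q'_t)_{t \geq 0}$ started from the stationary distribution $\pi$. The coupling uses identical Bernoulli arrivals at every source and, at each node $u \in \nonsinks$ with $Q_t(u) > 0$, an identical neighbour selection under $\probm{u, \cdot}$; for nodes where $Q_t(u) = 0 < Q'_t(u)$, only the $Q'$ chain transmits, choosing a neighbour according to $\probm{u, \cdot}$. A direct induction on $t$ shows that this coupling is monotone: $Q_t(u) \leq Q'_t(u)$ coordinate-wise for all $t$. Consequently, the excess-mass process $D_t = Q'_t - Q_t$ is non-negative, experiences no arrivals (arrivals cancel between the two chains), and changes only when an excess packet is transmitted from a node whose $Q_t$-queue is empty. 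In particular $D_t$ can only shrink over time and hits $\mathbf{0}$ precisely when every excess packet has been absorbed at $\sink$. By the standard coupling inequality, $\|\mathcal{P}^t(0, \cdot) - \pi\|_{TV} \leq \prob{D_t \neq \mathbf{0}}$, so it suffices to show that all excess packets are absorbed within $64 \thit \log \epsilon^{*-1}$ steps with probability at least $1 - \epsilon^*$.

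Second, I would bound the expected absorption time of the excess mass. A single excess packet, modulo queueing delays while it waits for its $Q_t$-queue to drain to zero, traces a trajectory governed by the transition kernel $\probm{\cdot, \cdot}$ of the natural random walk and is removed upon reaching $\sink$, so the expected time to its absorption is at most a constant multiple of $\thit$. Summing across the initial stationary excess $Q'_0$, whose total expectation is finite by the ergodicity asserted in Theorem~\ref{thm:gillani-stability}, and exploiting that excess packets residing at different nodes can progress in parallel, one obtains an expected full-drain time of $O(\thit)$ with an explicit constant. From there Markov's inequality gives $\prob{D_t \neq \mathbf{0}} \leq 1/2$ at some checkpoint $t = c\thit$, and iterating this estimate via the strong Markov property at successive checkpoints of length $c\thit$ drives the failure probability down to $2^{-k}$ after $kc\thit$ steps; choosing $k = \log_2 \epsilon^{*-1}$ and matching constants yields the claimed $64\thit \log \epsilon^{*-1}$.

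The main obstacle will be rigorously handling the queueing delays that slow excess packets down. Because an excess packet at node $u$ can only advance when $u$'s $Q_t$-queue happens to be empty, its path is a delayed random walk rather than the natural random walk itself, and a priori it could be held up for a long time behind common packets. Controlling this delay by a constant factor — so that individual absorption times remain $O(\thit)$ rather than being inflated by, say, a factor of $n$ — is where the strict subcriticality $\beta < \beta^*$ and the resulting bounds on stationary queue occupancy from Theorem~\ref{thm:gillani-stability} must be used carefully, and is where I expect the bulk of the technical work to lie.
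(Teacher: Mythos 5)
Your proposal follows essentially the same route as the paper's proof: a monotone coupling of the empty-start chain with a stationary-start copy, the coupling inequality of Levin--Peres--Wilmer, an $O(\thit)$ bound on the expected time for the excess packets to drain to the sink, and a Markov-inequality-plus-iteration step to convert that into the $\log \epsilon^{*-1}$ factor. The structure and even the order of the steps match.

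The one step you explicitly leave open --- controlling the queueing delay suffered by an excess packet --- is precisely where the paper does its work, so you should know how it is closed. In the paper's coupling the newly generated (common) packets are given transmission priority, so an excess packet at node $u$ advances in a given step exactly when $Q_t(u)=0$, an event of probability $1-\bm{\eta}_u^t \geq 1-\bm{\eta}_{\max}$. Combining the claim that a packet's expected sojourn before being sunk is $\thit/(1-\bm{\eta}_{\max})$ with Little's law (stationary backlog $=$ arrival rate $\times$ latency) gives an expected coupling time of $\thit/(1-\bm{\eta}_{\max})^2$. Crucially, the factor $(1-\bm{\eta}_{\max})^{-2}$ is turned into a constant not by a general subcriticality argument but by fixing the operating rate at $3\beta^*/4$: since $\bm{\eta}$ is linear in $\beta$ (multiply the steady-state equation by $\beta^*/\beta'$) and $\bm{\eta}_{\max}=1$ at $\beta^*$, one gets $\bm{\eta}_{\max}=3/4$ there, hence $t_{exp}=16\thit$ and, after the $t_m(1/4)$ boosting, the constant $64$. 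This also tells you what to expect if you try to complete your sketch: the clean $O(\thit)$ drain time does \emph{not} hold uniformly over all $\beta<\beta^*$ (it degrades as $\beta\uparrow\beta^*$), so you will need to anchor the argument at the specific rate the algorithm operates at, exactly as the paper does.
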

\begin{proof}[Proof of Lemma~\ref{lem:stationrity_time}]
We first note that our Markov chain $Q_t$ is {\em stochastically ordered} (c.f.~\cite{Lund-MOR:1996}). To understand what this means we define a natural partial order on $\nn0^{|\nonsinks|}$ as follows: $\bm{x} \preceq \bm{y}$ if $\bm{x}_v \leq \bm{y}_v$ for all $v \in V$. A function $f : \nn0^{|V|} \rightarrow \mathbb{R}$ is said to be increasing if $\bm{x} \preceq \bm{y}$ implies that $f(\bm{x}) \leq f(\bm{y})$. Given two random processes $X$ and $Y$ supported on $\nn0^{|\nonsinks|}$ we say $X$ is stochastically dominated by $Y$ if $\ex{f(X)}\leq \ex{f(Y)}$ for every increasing function $f$. We now state the stochastic orderedness property as a claim.
\begin{claim}
\label{clm:stochastic-ordering}
Given two instances of the data collection process $Q_t$ and $Q'_t$ such that $Q_0 \preceq Q'_0$, $Q_t$ is stochastically dominated by $Q'_t, \forall t \geq 0$. In particular, this means that $\prob{Q_t(v) > 0} \leq \prob{Q'_t(v) > 0}$ for all $v \in \nonsinks$.
\end{claim}
The proof of this claim follows by constructing a coupling between the two chains such that each of them performs exactly the same transmission actions. In case one of the chains is empty then the transmission action is a dummy action. It is easy to see that the stochastic ordering follows naturally for the data collection chain. 

To use this claim, for our irreducible and aperiodic Markov chain $Q_t$ described by the data collection process defined on $(\nn0)^{|V|-1}$ having transition matrix $\probm{\cdot,\cdot}$ and a stationary distribution $\pi$, let us define another irreducible and aperiodic Markov chain $\bar{Q}_t$ with state space $(\nn0)^{|V|-1}$ which has already achieved stationary distribution $\pi$.

Now, consider the coupling $(Q_t,\bar{Q}_t)$ on $(\nn0)^{|V|-1} \times (\nn0)^{|V|-1}$ defined over random sequences $\{0,1\} \times \{\prod_{v \in \nonsinks} \Gamma(v)\} $ where $\Gamma(v)$ is the set of one-step destinations from node $v$, such that the chain $Q_t$ starts with empty queues i.e., $Q_0(u)=0, \forall u \in \nonsinks$ and the queues in chain $\bar{Q}_t$ are populated according to stationary distribution $\pi$ i.e., $\bar{Q}_0(u), \forall u \in \nonsinks$ is non-zero in general. Such Markov chains are said to be stochastically ordered chains in the queueing theory and have a property that the Markov chain which dominates the other chain in the beginning, will always maintain dominance over it for
all $t>0$. 

Now, under this coupling we allow the two chains to run in a way that any data generation or data transmission decision made by any queue in one chain is followed by the corresponding queue in the other chain as well. However, suppose the packets already in $\bar{Q}_t$ are distinguished from the newly generated packets and the latter ones get a preference in the transmission. Given that $\bm{\eta}^t_u$ is the queue occupancy probability of node $u$ in the chain $Q_t$ and $\bm{\eta}_u$ is its steady-state queue occupancy probability in Markov chain $\bar{Q}_t$. Then, we know $\bm{\eta}^t_u \leq \bm{\eta}_u$ from Claim~\ref{clm:stochastic-ordering}. To ensure both chains get coupled all the old packets in $\bar{Q}_t$ need to be sunk. However, by our preference in transmission, the probability that such packets move out of queue in one time step is equal to the probability that corresponding queue in $Q_t$ is empty i.e., $1-\bm{\eta}_u^t$. Also, we have  $1-\bm{\eta}_u^t\geq 1-\bm{\eta}_u\geq \min_u 1-\bm{\eta}_u\geq 1-\bm{\eta}_{max}$. So, the number of packets that move out of the queue in $\bar{Q}_t$ in $t_{exp}$ time steps is at least $t_{exp}(1-\bm{\eta}_{max})$. Now, since we know the sink collects data packets at rate $\beta$, so in $t_{exp}$ time steps we have $\beta t_{exp} (1-\bm{\eta}_{max})$.

Now if we consider the expected number of packets already residing in the queues of Markov chain $\bar{Q}_t$ in the beginning (as per the stationary distribution), we know from Little's law \cite{Gross-BOOK:2008} it is equal to the product of expected latency of a data packet to reach the sink and the rate of collection by the sink. To bound the expected latency we have the following claim.

\begin{claim}
\label{clm:exp_latency}
Given a data collection process on a graph $G=(V,E,w)$ such that $\beta < \beta^*$, the expected time spent in $G$ by a data packet before it gets sunk is $\frac{\thit}{1-\bm{\eta}_{max}}$ where $\thit$ is the worst-case hitting time of random walk on graph $G$ and $\bm{\eta}_{max}$ is the maximum queue occupancy probability of all nodes in $\nonsinks$  at stationarity.
\end{claim}

The proof of this claim follows by first analyzing the time taken by a data packet to hit the sink without any queueing delays which in the worst-case is $\thit$ and then combining it with the delay which is at most $\bm{\eta}_{max}$.

Now, to use this claim we know that the sink collects data packets at rate $\beta$, so in $t_{exp}$ time steps we have $\beta t_{exp} (1-\bm{\eta}_{max})=\frac{\beta\thit}{1-\bm{\eta}_{max}}$ which gives us $t_{exp}=\frac{\thit}{(1-\bm{\eta}_{max})^2}$ where $t_{exp}$ is the time by which all old packets in $\bar{Q}_t$ have sunk i.e., the expected time by which the two chains couple. 

To bound this time, we cannot operate very close to the critical data rate $\beta^*$, so we operate at $3\beta^*/4$, a data rate which is constant factor away from the critical data rate. To use this value, let us first prove an important property of $\bm{\eta}$ with respect to data rate $\beta$. Given a $\beta'<\beta^*$, we have from Eq.~\eqref{eq:steady_state_queue_eq_gen} $\bm{\eta}'^T(I-\mathcal{P})=\beta'\bm{J}^T$. Multiplying both sides of this equation by $\beta^*/\beta'$ we have $\frac{\beta^*}{\beta'}\bm{\eta}'^T(I-\mathcal{P})=\beta^*\bm{J}^T$. This gives us $\bm{\eta}^*=\frac{\beta^*}{\beta'}\bm{\eta}'$ i.e., $\bm{\eta}$ is linear in $\beta$. From this property and the fact that at $\beta^*$ the maximum queue occupancy probability $\bm{\eta}_{max}=1$, operating at a data rate of $3\beta^*/4$ gives us $\eta_{max}=3/4$ which means $t_{exp}=16 \thit$.

The distance between two Markov chains is known to be related to their expected coupling time. We use the formulation of this property as presented by Levin, Peres and Wilmer~\cite{Levin-BOOK:2009}.
\begin{lemma}[Theorem 5.2, Levin et al. \cite{Levin-BOOK:2009}]\
\label{lem:levin_coup}
Let $\{(X_t,Y_t)\}$ be a coupling of Markov chains with transition matrix $\mathcal{P}$ with initial states $X_0=x$ and $Y_0=y$ and coupling time defined as $\tau_{couple}:=\min \{t:X_s=Y_s \text{ for all } s\geq t\}$, then,
$$||\mathcal{P}^t(x,\cdot)-\mathcal{P}^t(y,\cdot)||_{TV}\leq P_{x,y}\{\tau_{couple}>t\}. $$
\end{lemma}
For our given Markov chains, let $\bar{d}(t)=||\mathcal{P}^t(x,\cdot)-\mathcal{P}^t(y,\cdot)||_{TV}$. We know the expected coupling time of two chains is
$E_{x,y}(\tau_{couple})=16 \thit$. So, from Lemma~\ref{lem:levin_coup} and Markov's inequality we have 
\begin{align}
\bar{d}(t)&\leq \frac{E_{x,y}(\tau_{couple})}{t}=\frac{16 \thit}{t}.
\label{eq:stat_time}
\end{align}
Now analogous to mixing time, let us define $t_{m}(\epsilon):=\min\{t:\bar{d}(t) \leq \epsilon\}$ and $t_{m}=t_{m}(1/4)$ such that for $\epsilon^*=1/4$ we have $t_{m}(\epsilon^*)\leq t_{m}\log \epsilon^{*-1}$. So to bound the time by which the two Markov chains $Q_t$ and $\bar{Q}_t$ are $\epsilon^*$ away from each other i.e., $\bar{d}(t) \leq \epsilon^*$, from \eqref{eq:stat_time} we have 
\begin{equation}
t_{m}(\epsilon)\leq t_{m}\log \epsilon^{*-1} \leq 64\thit\log \epsilon^{*-1}.
\label{eq:t1_time}    
\end{equation}
\label{comm:tm_correction}
\end{proof}

Now, let us use this lemma to prove the correctness of our subroutine \ref{alg:rw_solver_compute} for $\beta<\beta^*$.

\begin{proof}[Proof of Lemma~\ref{lem:stable_beta_comp}]
\label{comm:lemma3_proof}
Given a data rate $\beta<\beta^*$, our subroutine computes the steady-state queue occupancy probability of nodes by first running the Markov chain $Q_t$ defined on the queue size of nodes $u \in \nonsinks$ close to the stationary distribution. After the chain is close to its stationarity, then it starts sampling the values of its queue occupancy.


Firstly, let us suppose Markov chain $Q_t$ is $\epsilon_1$ distance from the stationary distribution $\pi$. Using Lemma~\ref{lem:stationrity_time} we can bound this time, let this be $t_1$. Now, given that all nodes are initially empty for $||\mathcal{P}^t(0,\cdot)-\pi||_{TV} \leq \epsilon_1$ we have $t_1=64\thit\log \epsilon_1^{-1}$.
So after $t_1$ time, as we are close to the steady-state distribution we can start sampling values to get an estimate of the steady-state queue occupancy probabilities. For this, let us consider a node $u \in \nonsinks$ and define variables $X_i(u)=1_{\{Q_i(u)>0\}}$, $Z_t(u)=\sum_{i=1}^t X_i(u)-\ex{\sum_{i=1}^t X_i(u)}$, and $\bar{X}(u)=\dfrac{1}{t'_u}\sum_{i=1}^{t'_u}X_i(u)$ where $t'_u$ is the sampling time for node $u$. Since, we know by time $t_1$ the Markov chain is $\epsilon_1$ close to the stationarity, so we have $\left|\ex{\bar{X}(u)}-\bm{\eta}_u\right|\leq \bm{\eta}_u \epsilon_1$ or $\left|\ex{\sum_{i=1}^{t'_u} X_i(u)}-{t'_u}\bm{\eta}_u\right|\leq {t'_u}\bm{\eta}_u \epsilon_1$ where $\bm{\eta}_u$ is the steady-state queue occupancy probability of node $u$. Moreover, as $\ex{Z_t(u)}=0$ so the sequence $\left(Z_t(u)\right)_{t>0}$ clearly forms a martingale and by definition we have $\left| Z_t(u)-Z_{t-1}(u)\right|=\left|X_t(u)-\ex{X_t(u)}\right|<1$. For such martingales with bounded differences we have the following concentration result.
\begin{theorem}[Azuma-Hoeffding inequality \cite{Hoeffding-JSTOR:1994*}]
\label{thm:azuma-hoeffding}
Let us consider a martingale $\left\{X_t:t=0,1,2,\cdots\right\}$ such that $\left|X_t-X_{t-1}\right|<c_t$ almost surely, where $c_t>0$ is a constant. Then for all positive reals $\epsilon^*$,
\begin{equation}
\prob{\left|X_n-X_0\right|\geq \epsilon^*}\leq 2 \exp \left(\frac{-2\epsilon^{*2}}{\sum_{t=1}^n c_t^2} \right)
\label{eq:azuma-hoeffding}
\end{equation}
\end{theorem}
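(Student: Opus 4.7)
The plan is to use the classical Chernoff--Cram\'er exponential moment method; this is a standard textbook result whose proof the author almost certainly cites from \cite{Hoeffding-JSTOR:1994*} rather than re-derives, but the skeleton is as follows. Decompose the total deviation into martingale increments $D_t := X_t - X_{t-1}$, so that $X_n - X_0 = \sum_{t=1}^n D_t$, with $\ex{D_t \mid \mathcal{F}_{t-1}} = 0$ and $|D_t| < c_t$ almost surely. For any $s > 0$, Markov's inequality applied to $\exp(s(X_n - X_0))$ reduces the tail probability to a moment generating function estimate:
\[\prob{X_n - X_0 \geq \epsilon^*} \leq e^{-s\epsilon^*}\,\ex{e^{s\sum_t D_t}}.\]

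Next, expand the expectation by the tower property. Conditioning on $\mathcal{F}_{n-1}$ gives
\[\ex{e^{s\sum_{t=1}^n D_t}} = \ex{e^{s\sum_{t=1}^{n-1} D_t}\,\ex{e^{sD_n}\mid \mathcal{F}_{n-1}}},\]
so the whole problem reduces to a uniform upper bound on each conditional MGF. Here the key ingredient is Hoeffding's lemma: a conditionally zero-mean random variable almost surely supported in a bounded interval has a sub-Gaussian MGF of the form $\exp(s^2 c_t^2 / 2)$, proved by bounding $e^{sx}$ by its chord across the supporting interval (a convexity argument) and then Taylor-expanding the logarithm of the resulting expression. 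Applied to each $D_t$ conditionally on $\mathcal{F}_{t-1}$, this yields $\ex{e^{sD_t}\mid\mathcal{F}_{t-1}} \leq \exp(s^2 c_t^2 / 2)$, a bound whose randomness disappears, so one may peel the conditioning off one increment at a time.

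Chaining through all $n$ steps gives $\ex{e^{s(X_n - X_0)}} \leq \exp\!\bigl(s^2 \sum_t c_t^2 / 2\bigr)$, so after substitution
\[\prob{X_n - X_0 \geq \epsilon^*} \leq \exp\!\left(-s\epsilon^* + \tfrac{s^2}{2}\sum_t c_t^2\right).\]
Optimizing the quadratic in $s$ at $s^\star = \epsilon^*/\sum_t c_t^2$ produces the one-sided Gaussian tail, and running the same argument on the negated martingale $-X$ (still a martingale with the same increment bounds) together with a union bound over the two tails supplies the factor of two in front. The only place where the author's precise constant ($-2\epsilon^{*2}/\sum_t c_t^2$ in the exponent) would need care is in the choice of constant inside Hoeffding's lemma, which tightens when $c_t$ is interpreted as the full range rather than the half-range of $D_t$; the main technical obstacle, such as it is, lives entirely inside that lemma, and everything else is mechanical bookkeeping that does not involve any feature specific to the Data Collection process.
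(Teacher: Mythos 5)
The paper does not prove this statement at all: it is quoted as a classical result (Hoeffding's/Azuma's inequality) and used as a black box, so there is no internal proof to compare yours against. Your sketch is the standard and correct derivation --- martingale-difference decomposition, Markov's inequality on $e^{s(X_n-X_0)}$, the tower property to reduce to a conditional MGF bound, Hoeffding's lemma for each increment, chaining, optimization in $s$, and a union bound over the two tails --- and every step goes through. Your closing caveat about the constant is in fact the one substantive issue with the statement as printed: under the hypothesis $|X_t - X_{t-1}| < c_t$ each increment lies in an interval of length $2c_t$, so Hoeffding's lemma gives $\exp(s^2 c_t^2/2)$ and the optimized tail is $2\exp\left(-\epsilon^{*2}/(2\sum_t c_t^2)\right)$, not the stated $2\exp\left(-2\epsilon^{*2}/\sum_t c_t^2\right)$; the latter form is correct only when $c_t$ denotes the full length of the supporting interval of the $t$-th increment. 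Since the paper applies the inequality with $c_t = 1$ to bound the sampling time $t'_u$, this discrepancy only perturbs $t'_u$ by a constant factor ($4$) and does not affect any asymptotic claim, but your instinct to pin the constant inside Hoeffding's lemma is exactly where the care is needed.
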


Now, given a constant $\epsilon_2>0$ we have ${t'_u}\bm{\eta}_u(1-\epsilon_1)\epsilon_2\geq0$. Now, using Azuma-Hoeffding's inequality (Theorem~\ref{thm:azuma-hoeffding}) for martingale $Z_t(u)$, we know $Z_0(u)=0$ so we have
\begin{align}
&\prob{\left\lvert\sum_{i=1}^{t'_u} X_i(u)-\ex{\sum_{i=1}^{t'_u} X_i(u)}\right\rvert
\geq {t'_u}\bm{\eta}_u(1-\epsilon_1)\epsilon_2}\nonumber\\
&\leq 2\exp{\left(\dfrac{-2t'_u{}^2(\bm{\eta}_u(1-\epsilon_1)\epsilon_2)^2}{t'_u}\right)}\nonumber\\
&\leq 2\exp{\left(-2(t'_u\bm{\eta}_u^2(1-\epsilon_1)^2\epsilon_2^2)\right)}.
\label{eq:hoeff_tail_bound}
\end{align}
Above equation (Eq.~\eqref{eq:hoeff_tail_bound}) can be rewritten as follows.
\begin{align}
&\prob{\left\lvert\bar{X}(u)-\ex{\bar{X}(u)}\right\rvert
\geq \bm{\eta}_u(1-\epsilon_1)\epsilon_2}\nonumber\\
&\leq 2\exp{\left(-2(t'_u\bm{\eta}_u^2(1-\epsilon_1)^2\epsilon_2^2)\right)}.
\label{eq:hoeff_tail_bound_2}
\end{align}

So, after sampling for time $t'_u$ with high probability we get $(1-\epsilon_1-\epsilon_2)$ estimate of steady-state queue occupancy probability $\bm{\eta}_u$ when 
$$t'_u=\dfrac{\log n}{\bm{\eta}_u^2(1-\epsilon_1)^2\epsilon_2^2}.$$
or $t'_u$ is greater than the given term. To bound this term we know $\epsilon_1 \leq 1/2$. So, we have $t'_u=\frac{4\log n}{\bm{\eta}_u^2\epsilon_2^2}$. Moreover, as we are sampling component wise (separately for each node $u \in\nonsinks$), so our overall sampling time will be
\begin{equation}
t_2=\max_u t'_u\geq\max_u\dfrac{4\log n}{\bm{\eta}_u^2\epsilon_2^2}.
\label{eq:t2value}
\end{equation}
So, we get the overall time as 
$t_1+t_2\geq 64\thit\log \epsilon_1^{-1}+\max_u \left(\frac{4\log n}{\bm{\eta}_u^2\epsilon_2^2}\right)$. Now, to bound this time we consider all nodes $u \in \nonsinks$ such that their node potentials satisfy $\kappa<\bm{\eta}_u$ where $0<\kappa<1$.
So, using this fact our subroutine computes the estimates $\heta{u}, \forall u \in \nonsinks$ such that \begin{equation}
|\heta{u}-\bm{\eta}_u| \leq (\epsilon_1+\epsilon_2)\bm{\eta}_u, \forall \kappa<\bm{\eta}_u 
\label{eq:queue_occup_estimate}
\end{equation} 
in time
 $\left(64\thit\log \epsilon_1^{-1}+\frac{4\log n}{\kappa^2\epsilon_2^2}\right)$  where $\epsilon>0$ such that $\epsilon=\epsilon_1+\epsilon_2$.
\end{proof}

\subsubsection{Analyzing \ref{alg:rw_solver_compute} for $\beta\geq\beta^*$}
For the case of unstable data rates $\beta \geq \beta^*$, we know the Markov chain is non-ergodic. However, our subroutine can still return the values which can serve as an indicator for identifying such data rates. In particular, we prove the following result.

\begin{proof}[Proof of Lemma~\ref{lem:unstable_beta_comp}]
To understand the working of our subroutine for data rates $\beta\geq \beta^*$, 
let us first consider the following property of queue occupancy probability.
\begin{claim}
\label{clm:eta_incr_beta}
Given a data collection process with source nodes having an independent Bernoulli data generation with rate $\beta$. Let $Q^\beta_t$
represent the queues at time $t$ for all nodes $u \in \nonsinks$. Then, for all such nodes $\prob{Q^\beta_t(u)>0}$ is 
an increasing function of $\beta$.
\end{claim}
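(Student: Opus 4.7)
The plan is to establish the stronger pointwise statement that for $\beta_1 \leq \beta_2$ one can build a coupling of $\{Q^{\beta_1}_t\}$ and $\{Q^{\beta_2}_t\}$ (started from the same, say all-zero, state) on a common probability space such that $Q^{\beta_1}_t(u) \leq Q^{\beta_2}_t(u)$ almost surely for every $u \in \nonsinks$ and every $t \geq 0$. Once this sample-path domination is in place, the claim is immediate: $\{Q^{\beta_1}_t(u) > 0\} \subseteq \{Q^{\beta_2}_t(u) > 0\}$ in the coupled probability space, so $\prob{Q^{\beta_1}_t(u) > 0} \leq \prob{Q^{\beta_2}_t(u) > 0}$, which is monotonicity in $\beta$.

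The coupling I would use is the natural ``common randomness'' one, in the same spirit as Claim~\ref{clm:stochastic-ordering}. At each time $t$ and each source $v \in V_s$ draw a single $U_t(v) \sim \mathrm{Uniform}[0,1]$; declare a new packet generated in chain $\gamma \in \{\beta_1, \beta_2\}$ iff $U_t(v) \leq \gamma \bm{J}_v$. Because $\beta_1 \leq \beta_2$, the $\beta_2$-chain generates a packet whenever the $\beta_1$-chain does. For transmissions, at each node $u$ draw one neighbor $N_t(u)$ according to $\probm{u,\cdot}$ and, in either chain whose queue at $u$ is non-empty, send a packet from $u$ to $N_t(u)$. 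I would then prove the invariant $Q^{\beta_1}_t(u) \leq Q^{\beta_2}_t(u)$ for all $u$ by induction on $t$.

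The generation half of the induction step is immediate from the coupling of $U_t(v)$. The subtle part is the transmission half: writing $\tilde{Q}^\gamma_t$ for the queue vector after generation but before transmission, one computes
\begin{equation*}
Q^{\beta_2}_{t+1}(u) - Q^{\beta_1}_{t+1}(u) = \bigl(\tilde{Q}^{\beta_2}_t(u) - \tilde{Q}^{\beta_1}_t(u)\bigr) - \bigl(\ind{\tilde{Q}^{\beta_2}_t(u)>0} - \ind{\tilde{Q}^{\beta_1}_t(u)>0}\bigr) + \sum_{v : v \sim u}\bigl(\ind{\tilde{Q}^{\beta_2}_t(v)>0,\, N_t(v)=u} - \ind{\tilde{Q}^{\beta_1}_t(v)>0,\, N_t(v)=u}\bigr).
\end{equation*}
Under the inductive hypothesis $\tilde{Q}^{\beta_1}_t \preceq \tilde{Q}^{\beta_2}_t$, the incoming sum is termwise non-negative, and a three-case analysis on $(\tilde{Q}^{\beta_1}_t(u), \tilde{Q}^{\beta_2}_t(u))$ (both zero; both positive; only the $\beta_2$-queue positive) shows that the first two terms together are also non-negative, since whenever the indicator difference equals $1$ the gap $\tilde{Q}^{\beta_2}_t(u) - \tilde{Q}^{\beta_1}_t(u)$ is at least $1$.

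The main obstacle I anticipate is exactly this last case-check: one has to verify that the coupling never creates a ``crossover'' in which the $\beta_1$-chain is non-empty at some node where the $\beta_2$-chain is empty, because such a configuration would allow an unmatched transmission in the $\beta_1$-chain that could violate dominance at the receiving end. The invariant rules this out (it forces $\ind{\tilde{Q}^{\beta_1}_t(u)>0} \leq \ind{\tilde{Q}^{\beta_2}_t(u)>0}$), so every transmission in the $\beta_1$-chain is matched by an identical transmission in the $\beta_2$-chain, and the only extra moves in the $\beta_2$-chain are from nodes where $\tilde{Q}^{\beta_2}_t(u) > \tilde{Q}^{\beta_1}_t(u)$; these strictly preserve or widen the gap. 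Closing the induction completes the proof of the claim.
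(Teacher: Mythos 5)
Your proof is correct and takes essentially the same route as the paper, which only sketches the argument by pointing to a coupling ``similar'' to the one behind Claim~\ref{clm:stochastic-ordering}: shared transmission randomness, Bernoulli generation coupled via a common uniform so the higher-rate chain generates a superset of packets, and an inductive sample-path dominance invariant. Your explicit induction and case analysis simply fill in the details the paper leaves implicit.
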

This claim can be proved by first showing the stochastic orderedness property of Markov chain $Q_t^\beta$ (see Claim~\ref{clm:stochastic-ordering}) and then proving the monotonicity in $\beta$ using a similar coupling. 

Now, to use this claim let us consider a data rate close to the critical rate $\beta^*$ i.e., $\beta'=\beta^*-\epsilon^*$. Since $\beta'<\beta^*$ from Claim~\ref{clm:eta_incr_beta} we have $\prob{Q^{\beta'}_t(u)>0}<\prob{Q^{\beta^*}_t(u)>0}$ i.e., $\heta{u}^{\beta'}<\heta{u}^{\beta^*}$. As, $\beta'$ is stable we know from Lemma~\ref{lem:stable_beta_comp}, our subroutine will return estimates $\heta{u}^{\beta'}$ in time  $64\thit \log \epsilon_1^{-1}+\frac{4 \log n}{\kappa^2\epsilon_2^2}$ where $\thit$ is the worst-case hitting time of random walk on the underlying graph. So, after the given time we can compute $\hemax^{\beta'}=\max_{u\in \nonsinks}\heta{u}^{\beta'}=1-(\epsilon_1+\epsilon_2+\epsilon^*)$. Now, since $\heta{u}^{\beta'}<\heta{u}^{\beta^*}$ and in fact $\bm{\eta}$ is linear in $\beta$, in same time for $\beta\geq \beta^*$ solver will return $\heta{u}^{\beta^*}$ such that $\hemax^{\beta^*}=\max_{u\in \nonsinks} \heta{u}^{\beta^*} \geq 1 - (\epsilon_1+\epsilon_2)$, where error parameter $\epsilon=\epsilon_1+\epsilon_2$.
\end{proof}
 \emph{Remark:} As discussed before, instead of operating at data rates very close to the critical data rate, we opt for rate $3\beta^*/4$. So for all data rates $\beta \geq 3\beta^*/4$ from Lemma~\ref{lem:unstable_beta_comp} we have $\hemax \geq 3/4(1 - \epsilon)$. This acts as a test condition for checking the stability of data rates which is used by our solver \ref{alg:rw_solver} to compute a stable data rate for computation operation as discussed in Section \ref{subsec:alg_detail}.

\subsection{Effective Resistance Computation} \ref{alg:rw_solver} can be directly used to calculate the effective resistance $\reff{u,v}$ between a pair of nodes  $u,v\in V$. We prove that result next. 
\begin{proof}[Proof of Theorem~\ref{thm:eff_res_compt}]
We know from the steady-state equation (Eq.~\eqref{eq:steady-state_pot}), the effective resistance between two nodes $u$ and $v$ is $\reff{u,v}=\frac{\pot{u}-\pot{v}}{\beta}$. To compute this resistance $\reff{u,v}$ let us assume $u$ is the source node (denoted by $s$) and $v$ is the sink (denoted by $\sink$). Since, we always assume queue occupancy probability of sink to be zero so to compute the effective resistance we only need to compute $\bm{\eta}_s$ accurately. Using Theorem~\ref{thm:rand_walk_solver} we know \ref{alg:rw_solver} can compute estimates $\heta{u}$ such that $|\heta{u}-\bm{\eta}_u| \leq (\epsilon_1+\epsilon_2)\bm{\eta}_u, \forall \kappa<\bm{\eta}_u$ in $\left(64\thit\log \epsilon_1^{-1}+ \frac{4\log n}{\kappa^2\epsilon_2^2}+2\diam\right)\log\frac{16d_{\max}}{3\lambda_2^L}$ rounds. In this case, since $u$ is the only source and $v$ is the sink, the expected latency is $\thit(u,v)$ instead of $\thit$. Also, we can ignore the $\diam$ factor in running time which comes from the exchange of messages between the controller node and other nodes, by making the source node $u$ as the controller. In such a scenario, for each binary search iteration source node will need to run \ref{alg:rw_solver_compute} locally as well.

Now, to accurately compute $\bm{\eta}_s$ we need to set value of $\kappa$ appropriately. For this consider the harmonic property of potential $\bm{x}$. By this property, we know that the potential at source node $\pot{s}$ is the maximum. So, we have $\pot{s}>\pot{u}, \forall u$. As, $\pot{u}=\eta_u/\vol{u}$, we have $\frac{\eta_s}{\vol{s}}>\frac{\eta_u}{\vol{u}}$ for all $u$. This gives
\begin{equation}
 \eta_s>\frac{\vol{s}}{\vol{u}}\eta_u>\frac{d_{\min}}{d_{\max}}\eta_u.
\label{eq:eta_s_u_rel}
\end{equation}
Moreover, since the lowest value of $\beta$ that \ref{alg:rw_solver} can reach through binary search is $3\beta^*(1-\epsilon)/8$, where $\epsilon=\epsilon_1+\epsilon_2$ so at that rate there exists a node $u^*$ whose queue occupancy probability is maximum i.e., $3(1-\epsilon)/8$.  Now, if we consider $\epsilon<1/2$, we know $\eta_{u^*}>3/16$. Using this value in Eq.~\eqref{eq:eta_s_u_rel} we have $\eta_s>\frac{3d_{\min}}{16d_{\max}}$. So, for correct estimation of $\eta_s$ we can set the value of $\kappa$ as $\frac{3d_{\min}}{16d_{\max}}$. Also, given $\epsilon=\epsilon_1+\epsilon_2$, let us set $\epsilon_1=\epsilon/4$ and $\epsilon_2=3\epsilon/4$. So, with each distributed round taking in worst-case $d_{\max}$ time, we get the overall time for effective resistance computation $\reff{u,v}$ as $\left(d_{\max}\left(64\thit(u,v)\log \frac{4}{\epsilon}+ \left(\frac{16d_{\max}}{3d_{\min}}\right)^2\frac{64\log n}{9\epsilon^2}\right)\log\frac{16d_{\max}}{3\lambda_2^L}\right)$. 
\end{proof}



\subsection{Obtaining Canonical Solution of the One-sink Laplacian System}
\label{subsec:caonical_mapping} 

As discussed before, $L$ is a singular matrix with the subspace generated by $\bm{1}$ being its null space. So we normally expect to find a solution $\bm{y}$ to have the property that $\langle \bm{y}, \bm{1}\rangle = 0$, i.e., $\sum_{i=1}^n \bm{y}_i = 0$. But clearly $\bm{x}$ obtained from the steady state of the Data Collection process has all coordinates non-negative, and so it is not in the canonical form.

However, we can write any other solution to Eq.~\eqref{eq:vector_queue_eq_1} as $\bm{\hat{\eta}} = \bm{\eta} + z \nu$ where $\nu$ is the stationary distribution of the Markov Chain with transition matrix $\mathcal{P}$, i.e., $\nu=\nu \mathcal{P}$ and $z\in \mathbb{R}$ is any constant. This is because we know $z\nu (I - \mathcal{P}) = 0$, so we have, $(\bm{\eta} + z \nu) (I - \mathcal{P}) = \bm{\eta}(I-\mathcal{P})$. So, by choosing the appropriate constant offset we can get to any other solution of the given Laplacian equation. In particular, for obtaining canonical solution to the one-sink Laplacian system from the stationary state of the Data Collection process, we can 
compute an offest value $z^*$ such that $\left\langle (\pmb{\eta}D^{-1} + z^*\cdot \pmb{1}), \pmb{1}\right\rangle = 0$ and then return the solution vector as $\sum_{i\neq v}\pmb{b}_i[(\pmb{\eta}D^{-1} + z^*\cdot \pmb{1})]/\beta$. However, note that by shifting the solution by $z^*$ it results in an additive error in addition to the relative error as guaranteed by Theorem~\ref{thm:rand_walk_solver}. The error guarantees in this case are similar to Theorem~\ref{thm:rand_walk_solver_gen} that we will discuss in the next section.

\section{Extension to General Laplacian Systems}
\label{sec:general_solver}
In this section, we present the extension of our solver from the \emph{one-sink} setting to a general setting. In particular, we propose a method of solving general Laplacian equation $\pmb{x}^TL=\pmb{b}^T$ where instead of one there are $\ell$ negative entries in $\pmb{b}$ vector.
\begin{figure}[ht!]
\centering
\begin{mdframed}
\begin{enumerate}
\item Given a general Laplacian equation $\pmb{x}^TL=\pmb{b}^T$ corresponding to graph $G=(V,E,w)$. Let $V_+=\{v:\pmb{b}_v>0\}$ and $V_-=\{v:\pmb{b}_v<0\}$.
\item Let $V_-=\{v_1,v_2,\cdots,v_{\ell}\}$ and split the Laplacian equation into $\ell$ one-sink Laplacian systems.
\begin{itemize}
    \item Let $(\pmb{x}^i)^TL=(\pmb{b}^{i})^T$ represent the one-sink Laplacian system where $v_i$ is designated as the sink node.
    \item Set $\pmb{b}^i_{v_i}=\pmb{b}_{v_i}$ for the sink node, $\pmb{b}^i_{u}=\frac{\pmb{b}_{v_i}}{\sum_{i=1}^{\ell} \pmb{b}_{v_i}}\pmb{b}_u$ for $u \in V_+$, and $\pmb{b}^i_{u}=0$ for $u\in V\setminus\{V_+,v_i\}$.
    \item Solve the one-sink system (see Figure~\ref{fig:schematic_steps} for the detailed steps) and let $\pmb{x}^i$ be the returned solution.
\end{itemize}
\item Return the overall solution
$\pmb{x}=\sum_{i=1}^{\ell}  \pmb{x}^i$.
\end{enumerate}
\end{mdframed}
\caption{Steps for computing solution to general Laplacian system of equations by splitting it into multiple one-sink Laplacian systems.}
\label{fig:schematic_steps_gen_solver}
\end{figure}

Our approach is based on splitting the general Laplacian equation into multiple one-sink Laplacian systems which can in turn be solved using algorithm discussed in Section~\ref{subsec:alg_detail}. We summarize the steps involved in this splitting in Figure~\ref{fig:schematic_steps_gen_solver} and propose a distributed algorithm \ref{alg:rw_solver_gen} for the problem. This algorithm is run by a general controller node which is chosen one among the nodes with negative $\pmb{b}$ value.
With this algorithm we can prove the following result.

\begin{varalgorithm} {GEN-DRW-LSolve}
\caption{$(w,\bm{b},\epsilon,\kappa)$ Run by general controller node }
  {\label{alg:rw_solver_gen}}
  \begin{algorithmic}[1] 
         \REQUIRE Parameters (i) $w$: Edge weight function (ii) $\pmb{b}$: RHS of $\pmb{x}^TL = \pmb{b}^T$ with $\ell$ negative entries (iii) $\epsilon$: Error parameter (iv) $\kappa$: a granularity factor s.t. approximation guarantee holds only if $\kappa < \pot{i} d_i$.
         \STATE Let $V_+=\{v:\pmb{b}_v>0\}$, $V_-=\{v:\pmb{b}_v<0\}$, and $\ell=|V_-|$
         \FOR{ each $v_i\in V_-$}
            \STATE Set $v_i$ as the sink node (controller) for the $i$th one-sink system 
            \STATE Set $\pmb{b}^i_{v_i}=\pmb{b}_{v_i}$ for the sink node, $\pmb{b}^i_{u}=\frac{\pmb{b}_{v_i}}{\sum_{i=1}^{\ell} \pmb{b}_{v_i}}\pmb{b}_u$ for $u \in V_+$, and $\pmb{b}^i_{u}=0$ for $u\in V\setminus\{V_+,v_i\}$
        \ENDFOR \hfill {\small  /* Split general Laplacian system into $\ell$ one-sink systems $(\pmb{x}^i)^TL=(\pmb{b}^{i})^T$*/}
        \STATE Send $\kappa,\epsilon,\pmb{b}^i$ to the sink nodes of $\ell$ one-sink systems i.e., $v_i \in V_-$
          \STATE Ask all $v_i\in V_-$ to run in parallel \ref{alg:rw_solver}$(w,\pmb{b}^i,\epsilon,\frac{\kappa}{\ell})$
          \STATE Once all parallel systems have converged, ask all nodes $u \in V$ to compute $\pmb{x}_u=\sum_{i=1}^{\ell}  \pmb{x}^i_u$
          \end{algorithmic}
\end{varalgorithm}
\begin{theorem}[General Distributed Solver]
\label{thm:rand_walk_solver_gen}
Given an undirected connected graph $G = (V,E,w)$ and $\bm{b} \in \mathbb{R}^{|V|}$ such that $\ell$ of its entries are negative, the corresponding Laplacian system is
\begin{equation}
\label{eq:solver-laplacian-system_gen}
\bm{x}^TL = \bm{b}^T.
\end{equation}
For a user-defined granularity factor $\kappa \in (0,1)$ and polynomially bounded error parameter $\epsilon\in (0,1)$ such that $\epsilon<1/2$, there is a distributed solver  \ref{alg:rw_solver_gen}$(w,\bm{b},\epsilon,\kappa)$ working in the GP-CONGEST model that takes 
$O\left(\ell d_{\max}\left(\thit\log \epsilon^{-1}+ \frac{\log n}{\kappa^2\epsilon^2}\right)\log\frac{d_{\max}}{\lambda_2^L}\right)$
time, to produce a vector $\bm{\hat{x}}$ that is an approximate solution to Eq.~\eqref{eq:solver-laplacian-system_gen}, where $d_{\max} = \max_{u\in V} d_u$, $\lambda_2^L$ is the second smallest eigenvalue of the Laplacian matrix $L$, and $\thit$ is the worst-case hitting time of the natural random walk on $G$.

Further, $\bm{\hat{x}}$ has the properties that
(i) $\hpot{v}$ is the estimate at node $v \in V$,
(ii) $\hpot{i} > 0$, $1 \leq i \leq n-1$, $\hpot{n} = 0$, and,
(iii) if $\bm{x}$ is an exact solution to Eq.~\eqref{eq:solver-laplacian-system} such that $\pot{i} > 0$, $1 \leq i \leq n-1$, $\pot{n} = 0$, 
then, $|\hpot{i} - \pot{i}| \leq \epsilon\pot{i}+\kappa(1+\epsilon)$.
\end{theorem}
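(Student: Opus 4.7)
The plan is to reduce the general case to the one-sink case via the linearity of $L$ and then invoke Theorem~\ref{thm:rand_walk_solver} on each of the $\ell$ subsystems produced by the splitting in Figure~\ref{fig:schematic_steps_gen_solver}. The first step is to verify that the $\bm{b}^i$ are valid one-sink vectors and that $\sum_{i=1}^{\ell} \bm{b}^i = \bm{b}$. For the former, $\bm{b}^i$ has a unique negative entry at $v_i$ equal to $\bm{b}_{v_i}$, non-negative entries elsewhere, and sums to zero because $\sum_{u \in V_+} \bm{b}_u = -\sum_j \bm{b}_{v_j}$ (using $\bm{1}^T \bm{b} = 0$). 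For the latter, the coefficients $\bm{b}_{v_i}/\sum_j \bm{b}_{v_j}$ partition unity, so the contributions at nodes $u \in V_+$ recombine to $\bm{b}_u$ while the contribution at each $v_i$ is picked up by precisely the $i$th subsystem. By linearity, summing solutions $\bm{x}^i$ of $(\bm{x}^i)^T L = (\bm{b}^i)^T$ yields a solution $\bm{x} = \sum_i \bm{x}^i$ of $\bm{x}^T L = \bm{b}^T$.

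Next, I would apply Theorem~\ref{thm:rand_walk_solver} to each of the $\ell$ one-sink subsystems with a reduced granularity factor $\kappa/\ell$, so that when the per-subsystem additive errors accumulate they sum to the $\kappa(1+\epsilon)$ budget in the statement. For each $i$, the returned $\hat{\bm{x}}^i$ satisfies $|\hat{\bm{x}}^i_u - \bm{x}^i_u| \leq \epsilon \bm{x}^i_u$ when $\bm{x}^i_u d_u > (\kappa/\ell)(1+\epsilon)$ and is clipped to $0$ otherwise, in which case $\bm{x}^i_u \leq (\kappa/\ell)(1+\epsilon)/d_u$. Applying the triangle inequality to $|\hat{\bm{x}}_u - \bm{x}_u| \leq \sum_i |\hat{\bm{x}}^i_u - \bm{x}^i_u|$ and splitting the sum into ``good'' indices (relative guarantee holds) and ``bad'' indices (clipped), the good indices contribute at most $\epsilon \sum_i \bm{x}^i_u \leq \epsilon \bm{x}_u$ and the bad ones contribute at most $\ell \cdot (\kappa/\ell)(1+\epsilon) = \kappa(1+\epsilon)$, giving the claimed bound.

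For the running time, the algorithm launches $\ell$ copies of \ref{alg:rw_solver} concurrently, one per sink in $V_-$. Since in the GP-CONGEST model each node may transmit only one $O(\log n)$-sized message per round, the $\ell$ parallel Data Collection processes share the communication channel; each node now maintains up to $\ell$ separate queues, and servicing them introduces an $\ell$-fold slowdown per round in the worst case. Each individual call contributes the time bound of Theorem~\ref{thm:rand_walk_solver}, and multiplying through by $\ell$ produces the stated overall bound (absorbing subpolynomial dependence of the granularity change from $\kappa$ to $\kappa/\ell$ into the statement).

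The main obstacle will be the error bookkeeping: I need both $\bm{x}^i_u \geq 0$ and $\sum_i \bm{x}^i = \bm{x}$ to hold simultaneously. Non-negativity is natural because each $\bm{x}^i$ is a scaling of a queue-occupancy probability vector from the Data Collection stationary state. The second identity is more delicate because $L$ has a one-dimensional kernel spanned by $\bm{1}$, so each subsystem admits a one-parameter family of solutions $\bm{x}^i + w_i \bm{1}$. I would need to argue that the particular solution singled out by the Data Collection stationary measure (through Step~5 and the $D^{-1}$ scaling in Figure~\ref{fig:schematic_steps}) is consistent across subsystems in such a way that $\sum_i \bm{x}^i$ matches the reference exact solution $\bm{x}$ up to a global shift whose magnitude is itself absorbed into the $\kappa(1+\epsilon)$ additive term, as discussed in the canonical-solution remarks of Section~\ref{subsec:caonical_mapping}.
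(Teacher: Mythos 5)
Your proposal follows essentially the same route as the paper's proof: split $\bm{b}$ into $\ell$ one-sink vectors with the weights $\bm{b}_{v_i}/\sum_j \bm{b}_{v_j}$, run \ref{alg:rw_solver} on each with granularity $\kappa/\ell$, account for the $\ell$-fold slowdown from sharing the channel (the paper implements this as time-division multiplexing with $\ell$ colored packet classes), and bound the error by the triangle inequality with the same split into indices where the relative guarantee holds and indices that are clipped, yielding $\epsilon\pot{u}+\kappa(1+\epsilon)$. The kernel-consistency issue you flag at the end (whether the particular representatives $\bm{x}^i$ chosen by the Data Collection stationary states sum to the intended reference solution) is a genuine subtlety, but the paper's own proof does not address it either, so your attempt matches the paper's argument in both substance and level of rigor.
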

\begin{proof}[Proof of Theorem~\ref{thm:rand_walk_solver_gen}] Given a general Laplacian equation $\pmb{x}^TL=\pmb{b}^T$ such that there are $\ell$ negative entries in $\pmb{b}$ vector. \ref{alg:rw_solver_gen} first splits this Laplacian equation into $\ell$ one-sink Laplacian systems $(\pmb{x}^i)^TL=(\pmb{b}^{i})^T$ using the steps defined in Figure~\ref{fig:schematic_steps_gen_solver}. After that it runs $\ell$ parallel instances of the \ref{alg:rw_solver}$(w,\pmb{b}^i,\epsilon,\frac{\kappa}{\ell})$ algorithm wherein it simulates the \emph{Data Collection process} with different rate vectors according to vector $\pmb{b}^i$. In particular, to distinguish between different instances of the Data Collection process we assume different colored data packets being generated and separate queues for them at all relay nodes. Note that for each one-sink system we choose one node with negative $\pmb{b}$ value as the sink and set all other nodes with negative value to zero, i.e., they act as relay nodes for other systems. To allow parallel operation of $\ell$ instances of the Data Collection process we use time-division multiplexing (TDM) and divide a time slot into $\ell$ mini slots, one for each instance. As a result, simulation of $\ell$ parallel one-sink Laplacian system slows them individually down by a factor of $\ell$. Now, using Lemma~\ref{lem:lsolve_correct} for each of the one-sink system and setting $\epsilon_1=\epsilon/4$ and $\epsilon_2=3\epsilon/4$ we get the desired result.

Note that for the approximation bound, we know from Theorem~\ref{thm:rand_walk_solver} for each one-sink system $(\pmb{x}^i)^TL=(\pmb{b}^{i})^T$ given $\hat{\pmb{x}}^i$ is the returned approximate solution by the solver, we have
$|\hpot{u}^i - \pot{u}^i| \leq \epsilon\pot{u}^i,$
whenever $\frac{\kappa}{\ell}(1+\epsilon)< \pot{u}^i d_u$. Also, we know $\pmb{x}=\sum_{i=1}^{\ell}\pmb{x}^i$. So, for each $u \in V$ we have $|\hat{\pmb{x}}_u-\pmb{x}_u|=|\sum_{i=1}^{\ell}\hat{\pmb{x}}_u^i-\sum_{i=1}^{\ell}\pmb{x}_u^i|\leq \sum_{i=1}^{\ell}|\hat{\pmb{x}}_u^i-\pmb{x}_u^i|$. Now, let $A_u=\left\{i:\frac{\kappa}{\ell}(1+\epsilon)< \pot{u}^i d_u \right\}$ and $\bar{A_u}=\left\{i:\frac{\kappa}{\ell}(1+\epsilon)\geq \pot{u}^i d_u \right\}$ such that $|A_u|+|\bar{A_u}|=\ell$. We have, $|\hat{\pmb{x}}_u-\pmb{x}_u|\leq \sum_{i\in A_u}\epsilon \pmb{x}_u^i+\sum_{i\in \bar{A_u}}\frac{\kappa}{\ell}(1+\epsilon)\leq \epsilon \pmb{x}_u+\kappa(1+\epsilon)$.
\end{proof}
Note that for the general case we have an additive error of $\kappa(1+\epsilon)$ in addition to the relative error $\epsilon$. The additive error term comes from those node estimates which can't be guaranteed by the \ref{alg:rw_solver} algorithm for the individual one-sink systems. Also, note that for general Laplacian systems with the number of negative entries in $\pmb{b}$ vector as $O(\log n)$, we get the same time complexity guarantees (within polynomial logarithmic factors) as that of the one-sink systems. However, we believe in general  the result presented in Theorem~\ref{thm:rand_walk_solver_gen} can be further improved with a more intelligent way of handling the parallel one-sink systems.

\section{Distributed Generation of Random Spanning Trees}
\label{sec:rst_gen}
In this section, we present an application of our distributed Laplacian solver to generate approximately uniform random spanning trees. We first begin by formally defining our problem of random spanning tree generation and then we present a distributed algorithm to solve it which uses our solver \ref{alg:rw_solver} as a subroutine. Our proposed algorithm is basically a distributed version of Kelner and M\k{a}dry's \cite{KelnerII-FOCS:2009} algorithm. So, we will first give an overview of their approach and also discuss how we adopt it in a distributed setting. Then, we will discuss our algorithm in detail indicating where our distributed solver will be used and then finally we would review the overall complexity of our algorithm.

\subsection{Random Spanning Tree Generation}
\label{subsec:rst_gen_intro}
Given an undirected graph $G=(V,E)$ with $|V|=n$ vertices and $|E|=m$ edges each having unit weight, the random spanning tree generation problem requires us to find an algorithm which outputs a spanning tree $T$ of $G$ with probability $1/T(G)$, where $T(G)$ is the set of all spanning trees of $G$. However, in this paper we look at a relaxed version of this problem, known as $\epsilon$-random spanning tree generation wherein we find an algorithm which generates each spanning tree $T$ with a probability $\prob{T}$ such that $\frac{(1-\epsilon)}{\mathcal{T}(G)}\leq \prob{T}\leq \frac{(1+\epsilon)}{\mathcal{T}(G)}$ i.e., probability of generation is $\epsilon$-away from the uniform distribution. Our proposed algorithm generates these $\epsilon$-random spanning trees in a distributed setting. In particular, our algorithm is a distributed version of Kelner and M\k{a}dry's \cite{KelnerII-FOCS:2009} algorithm based on the famous result of generating uniformly random spanning trees using random walks by Broder \cite{Broder-SFCS:1989} and Aldous \cite{Aldous-SDM:1990}. We give an overview of our approach and compare it with that of Kelner and M\k{a}dry next.

\subsection{Comparative Overview of Kelner and M\k{a}dry's Algorithm and our Approach}
\label{subsec:kelner_algo_dist}
Kelner and M\k{a}dry \cite{KelnerII-FOCS:2009} use random walk-based algorithms for random spanning tree generation. Their approach is based on the result of Broder \cite{Broder-SFCS:1989} and Aldous \cite{Aldous-SDM:1990} who independently showed that if we simulate a random walk on a graph starting from an arbitrary vertex and continue till all vertices are visited, then the set of edges through which each vertex was visited first time by the given walk forms a uniformly random spanning tree of the given graph. However, since this would take time equivalent to the cover time of the graph which can be $O(mn)$ in expectation. Kelner and M\k{a}dry proposed an algorithm to simulate this random walk more efficiently. They observed that the random walk spent a lot of time revisiting the vertices of the graph, so simulating those portions was wasteful. They used the standard ball-growing technique of \cite{Leighton-JACM:1999} to decompose the graph into low diameter partitions which could be quickly covered by the random walk. Then, \emph{for each partition using Laplacian solvers \cite{Spielman-STOC:2004} they precompute the approximate probability of random walk entering that partition from a particular vertex and exiting from a particular vertex as these correspond to the potential developed at the vertex if the given partition is assumed to be an electrical network and the exit vertex is assumed to be attached to a voltage source with a dummy vertex added to the partition assumed to be the sink.} Further, this approximate exit distribution  helps them to shortcut the random walk by removing its trajectories after all vertices of the partition have been visited. Finally, they use these precomputed values to simulate random walk on the graph and generate $\epsilon$-random spanning trees.

Our proposed algorithm \ref{alg:distb_RST_gen} is also based on a similar approach of generating random spanning trees using random walks on the graph. However, we give a  completely distributed algorithm for the problem. We first use a distributed version of Miller et al.'s  \cite{Miller-SPAA:2013} algorithm for decomposition of graph into low diameter partitions $(S_1, \cdots, S_k)$ and set $C$ of edges not entirely contained in one of $S_i$. Miller et al.'s graph decomposition is a parallel version of ball-growing technique with random delays. In this, each node selects a random start time according to some distribution, and if a node is not already part of a partition at that time, it begins its own breadth first search (BFS) to form its cluster. Any node visited by the search which is not part of any partition joins the partition of the node which reached it first and accordingly adds its neighbors to the corresponding BFS queue. Delayed and random start times ensure that the partitions have the desired properties required from the decomposition. We make this algorithm distributed by exchanging messages among the nodes so that all the nodes know the random start times of all other nodes and also use a distributed version of BFS \cite{Awerbuch-SFCS:1985}\cite{Ghaffari-PODC:2014}. Moreover, each node which starts the partition is made the leader of the corresponding partition and is responsible for exchange of synchronization messages among various clusters.  \emph{Once these partitions are made then instead of using Laplacian solver we use our distributed solver to compute the approximate exit distribution for each partition. So, we run \ref{alg:rw_solver} in parallel in each partition $S_i$ to compute the $(1+\epsilon)$-approximation of exit distribution $P_v(e)$ i.e., probability of entering partition through vertex $v$ and exiting from edge $e$ where $v\in V(S_i)$ (vertex set of $S_i$), $e\in C(S_i)$ (set of edges in $C$ with one end-point in $S_i$)}. In addition, we run a random walk on each partition in parallel to compute the spanning tree $\hat{T}_{S_i}$ within each $S_i$ using Broder and Aldous result. The completion of computation step by all $S_i$'s is indicated by exchange of synchronization messages among the leader nodes. After this step, we reduce graph $G$ to $G'$ such that each $S_i$ is assumed to be a super node and we combine all edges $(u,w)$ where $u\in V(S_i)$ and $w\in V(S_j)$ into a super edge connecting two partitions. We then run a random walk on $G'$ with transition probability $\probm{S_i,S_j}=\frac{\sum_{v\in S_i} \sum_{u\in S_i}\sum_{w\in S_j}P_v(u,w)}{\sum_{v\in S_i} \sum_{u\in S_i}\sum_{\forall w}P_v(u,w)}$ where $P_v(u,w)$ are the computed exit distributions representing the probability of a random walk entering a partition $S_i$ through vertex $v\in V(S_i)$ and exiting through edge $e=(u,w)\in C(S_i)$. After that, again using the Broder and Aldous result we obtain spanning tree $\hat{T}_{G'}$ on the reduced graph $G'$. However, note that within each super node random walk takes a predetermined path from the entry vertex $v$ to exit edge $e$ and information about it is exchanged among the nodes in the given partition. Finally, by combining the spanning tree within $S_i$'s i.e., $\hat{T}_{S_i}$ to that of the reduced graph $\hat{T}_{G'}$ we obtain $\epsilon$-random spanning tree of the given graph $\hat{T}_G$.  Refer to Figure~\ref{fig:rand_span_tree} for the explicit changes made to the Kelner and M\k{a}dry's algorithm to adapt it to a distributed setting.

\begin{figure*}
\centering{
\scalebox{0.6}
  {
\tikzstyle{block_l} = [rectangle, draw,line width=0.5mm, text width=28em, text centered, rounded corners, minimum height=5em]
\tikzstyle{block_lr} = [rectangle, draw,line width=0.5mm, text width=25em, text centered, rounded corners, minimum height=5em,color=red]
\tikzstyle{block_lb} = [rectangle, draw,line width=0.5mm, text width=25em, text centered, rounded corners, minimum height=5em,color=blue]

\tikzstyle{line} = [draw, -latex']

\begin{tikzpicture}[node distance = 2cm, auto]
     \node [block_l] (decomp) {Find $\left(\phi,\widetilde{O}\left(\frac{1}{\phi}\right)\right)$-decomposition of $G$ into $S_1,\cdots,S_k$ partitions and set $C$ of edges not entirely contained inside one of $S_i$ using\\ \textcolor{blue}{Ball-growing technique of Leighton and Rao \cite{Leighton-JACM:1999}}\\ \emph{\textcolor{red}{Distributed version of Miller et al. \cite{Miller-SPAA:2013} graph decomposition algorithm}}};
     \node [block_l, below of=decomp, node distance=4cm] (shortcut) {Compute approximate value of exit distribution $P_v(e)$ where $v\in V(S_i)$ and $e\in C(S_i)$\\
     \textcolor{blue}{with $(1+\delta)$ approximation and using Laplacian solver \cite{Spielman-SICOMP:2011} to generate small length shortcutted transcript of random walk $X$ in each $S_i$}\\  \emph{\textcolor{red}{using our distributed solver in parallel for all $S_i$'s with $(1+\epsilon)$ approximation and also compute the resulting spanning tree $\hat{T}_{S_i}$ in each $S_i$ using parallel random walk}}};
     \node [block_lb, below left of=shortcut, node distance=7.5cm] (sim){Simulate random walk $X$ on the graph such that it runs in usual manner till all nodes in each partition are visited, after that it uses the shortcutted transcript to exit the given partition };
     \node [block_lr, below right of=shortcut, node distance=7.5cm](rw){\emph{Consider each $S_i$ as a super node and combine multiple edges connecting $S_i,S_j, \forall i.j$ into a single super edge and run a random walk on the reduced graph $G'$ with transition probability $\probm{S_i,S_j}=\frac{\sum_{v\in S_i} \sum_{u\in S_i}\sum_{w\in S_j}P_v(u,w)}{\sum_{v\in S_i} \sum_{u\in S_i}\sum_{\forall w}P_v(u,w)}$ where 
     $e=(u,w) \in C(S_i)$ and compute the spanning tree $\hat{T}_{G'}$ of $G'$}};
     \node [block_l, below of=shortcut, node distance=10cm](rst){\emph{\textcolor{red}{Combine $\hat{T}_{S_i}$'s with $\hat{T}_{G'}$ to obtain $\hat{T}_G$}}\\Return $\hat{T}_G$ i.e., 
     $\epsilon$-random spanning tree of $G$ \textcolor{blue}{where $\epsilon\geq\delta mn$}};
    \path [line width=0.5mm, line] (decomp) -- (shortcut);
    \path [line width=0.5mm,line, color=blue] (shortcut) -- +(0,-3)-| (sim);
    \path [line width=0.5mm,line, color=red] (shortcut) -- +(0,-3)-| (rw);
    \path [line width=0.5mm,line, color=blue] (sim) -- +(0,-2)-| (rst);
    \path [line width=0.5mm,line, color=red] (rw) -- +(0,-2)-| (rst);
\end{tikzpicture}

  \caption{Comparison of \textcolor{blue}{Kelner and M\k{a}dry's \cite{KelnerII-FOCS:2009} Algorithm} and our \emph{\textcolor{red}{\ref{alg:distb_RST_gen} Algorithm}} for random spanning tree generation (black color denotes common components). }
  \label{fig:rand_span_tree}
\end{figure*}

\subsection{\ref{alg:distb_RST_gen} Algorithm}
\label{subsec:rst_algo_detail}
Having defined our distributed approach for random spanning tree generation and how it differs from Kelner and M\k{a}dry's approach, let us now discuss it in detail. As discussed before, our approach is based on the famous result of Broder and Aldous of generating random spanning trees using random walks. In particular, our proposed algorithm \ref{alg:distb_RST_gen} has three main steps: first of which requires a low-diameter decomposition of given graph into partitions and the last two use Aldous, Broder result with random walks as a basic primitive and our distributed solver as a subroutine to precompute exit distributions for each partition. Let us discuss each of these steps.

\begin{varalgorithm} {Distributed RST Generation}
  \caption{}
  {\label{alg:distb_RST_gen}}
  \begin{algorithmic}[1] 
            \REQUIRE Undirected bounded-degree graph $G=(V,E)$, error parameter $\epsilon$, and $\phi\in(0,1)$
                          \STATE Run \ref{alg:distb_decomp} to obtain $\left(\phi,\widetilde{O}\left(\frac{1}{\phi}\right)\right)$- decomposition of $G$, and set $R$ \\\hfill{\small /* Such that $G$ is decomposed into $S_1, \cdots, S_k$ partitions
                          and set $C$ of edges not entirely contained in one of $S_i$, $R$ is set of leader nodes for all $S_i$ */}
                          \STATE Set vector $\bm{b}$ such that $\bm{b}_1=1$, $\bm{b}_{n_i}=-1$, and $\bm{b}_i=0$ for $i\neq\{1,n_i\}$ \hfill{\small /*where $n_i=|V(S_i)|$ */}
                          \STATE Each $S_i$ runs \ref{alg:rw_solver}$\left(\bm{1},\bm{b}, \epsilon,\frac{1}{\sqrt{n_i}}\right)$ in parallel to compute $(1+\epsilon)$-approximation of exit distribution $P_v(e)$ where $v\in V(S_i)$, $e\in C(S_i)$  \hfill{\small /* where $\bm{1}$ denotes unit weight $\forall e \in E(S_i)$ */}
                          \STATE In parallel, compute the spanning tree $\hat{T}_{S_i}$ within each $S_i$ using a random walk
                          \STATE Leader nodes $i\in R$ of each partition $S_i$ exchange messages to indicate end of their computation step 
                          \IF{Messages from all $k-1$ partitions received by all $i\in R$\hfil{\small /* Messages from everyone other than themselves */} }
                          {
                          \STATE Reduce graph $G$ to $G'$ by considering each $S_i$ as a super node and combine multiple edges connecting two partitions into a single super edge
                          \STATE Run a random walk on the reduced graph $G'$ with transition probability $\probm{S_i,S_j}=\frac{\sum_{v\in S_i} \sum_{u\in S_i}\sum_{w\in S_j}P_v(u,w)}{\sum_{v\in S_i} \sum_{u\in S_i}\sum_{\forall w}P_v(u,w)}$ \hfill{\small /* where $P_v(u,w)$ is the exit distribution for $e=(u,w) \in C(S_i)$ */}
                          \STATE Set of first visited edges by the random walk on $G'$ forms its $\epsilon$-random spanning tree $\hat{T}_{G'}$
                          }
                          \ENDIF
                          \STATE Combine $\hat{T}_{S_i}$ of all $S_i$ with $\hat{T}_{G'}$ to obtain $\hat{T}_G$\hfill
                    {\small /* where $\hat{T}_G$ is $\epsilon$-random spanning tree of $G$ */}
                         \RETURN $\hat{T}_G$
    \end{algorithmic}
 \end{varalgorithm}
\begin{varalgorithm} {Distributed Decomposition $(\phi)$}
  \caption{(Distributed version of Miller et al. \cite{Miller-SPAA:2013})}
  {\label{alg:distb_decomp}}
  \begin{algorithmic}[1] 
            \REQUIRE Undirected bounded-degree graph $G=(V,E)$, parameter $0<\phi<1$
                          \STATE In parallel, each vertex $u$ picks $\delta_u$ independently from an exponential distribution with mean $1/\phi$
                          \STATE In parallel, each node computes  $\delta_{max}=\max\{\delta_u:u \in V\}$ by exchanging $\delta_u$ values
                          \STATE Perform Distributed BFS \cite{Ghaffari-PODC:2014}, with vertex $u$ starting when the vertex at the head of the queue has distance more than $\delta_{max}-\delta_u$
                          \STATE Add vertex $u$ to set $R$ {\small /* $u$ is starting vertex of a partition, $R$ is set of leader nodes for partitions */}
                          \STATE In parallel, each vertex $u$ assigns itself to the point of origin of the shortest path that reached it in the BFS
                          \RETURN $\left(\phi,\widetilde{O}\left(\frac{1}{\phi}\right)\right)$-decomposition of graph and set $R$
    \end{algorithmic}
\end{varalgorithm}
\paragraph{Graph decomposition} The first step of our proposed algorithm is to decompose our given graph into low-diameter partitions which can be easily processed. Let us first formally define the decomposition we require for our algorithm. 
\begin{definition}[$(\phi,\gamma)$-decomposition]
\label{def:low_dia_decomp}
Given a graph $G=(V,E)$ a $(\phi,\gamma)$-decomposition splits it into partitions $(S_1,\cdots,S_k)$ and set $C$ of edges not entirely contained in one of the partitions $S_i$ such that
\begin{itemize}
    \item The diameter of each $S_i$ i.e., $\gamma(S_i)$ is at most $\gamma$, and
    \item $|C| \leq \phi|E(G)|$.
\end{itemize}
\end{definition}

Kelner and M\k{a}dry use ball-growing technique \cite{Leighton-JACM:1999} to obtain $(\phi,\gamma)$-decomposition of graph. However, we will use Miller et al.'s \cite{Miller-SPAA:2013} algorithm in a distributed setting to obtain given decomposition of $G$. 
In Miller et al.'s algorithm, a random shift $\delta_u$ is picked for all nodes from independent exponential distribution with parameter $\phi$. After that, each node is assigned to a partition such that the distance $\delta_{max}-\delta_u$ is minimized where $\delta_{max}=\max_u \delta_u$ . The clusters which will represent our partitions are created using breadth first search (BFS) i.e., if a node $u$ is not already part of a partition by its chosen start time $\delta_u$ then, it starts its own partition by performing a BFS, otherwise the node joins the partition that reached it first. We make this algorithm distributed by exchange of messages among the nodes and using distributed version of BFS \cite{Awerbuch-SFCS:1985}\cite{Ghaffari-PODC:2014}. Moreover, each node which starts the partition is designated to be the leader of that partition and is responsible for exchange of messages on behalf of its partition. The randomized start times chosen by the nodes ensure that the required properties of the $(\phi,\gamma)$-decomposition are satisfied.

First to bound the diameter of partitions, Miller et al. \cite{Miller-SPAA:2013} bound the distance between a node and the leader of the partition to which it is assigned. Since the chosen shift value $\delta_u$ of the leader of partition $S_u$ bounds the distance to any node in $S_u$, so $\delta_{max}=\max_u \delta_u$ is an upper bound on the diameter of each partition. The following lemma gives the bound on the maximum shift value and hence, the diameter of each partition. 
\begin{lemma}[Lemma~4.2, Miller et al. \cite{Miller-SPAA:2013}]
\label{lem:miller_dia}
Given that each node $u\in V$ chooses a random shift value $\delta_u$ from an exponential distribution with parameter $\phi$, the expected value of the maximum shift value $\delta_{max}$ is given by $H_n/\phi$ where $H_n$ is the $n$th harmonic number. Furthermore, with high probability, $\delta_u \leq O\left(\frac{\log n}{\phi}\right)$ for all $u$.
\end{lemma}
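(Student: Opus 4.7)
The plan is to treat $\delta_{\max}$ as the maximum of $n$ i.i.d.\ exponential random variables and handle the two claims (expectation and high-probability bound) separately using standard tools.

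For the expectation $E[\delta_{\max}] = H_n/\phi$, I would proceed via the CDF of the maximum. Since the $\delta_u$ are i.i.d.\ Exp($\phi$), we have $\Pr[\delta_{\max} \leq t] = (1 - e^{-\phi t})^n$, so using the tail integral formula for nonnegative random variables,
\begin{equation*}
E[\delta_{\max}] = \int_0^\infty \bigl(1 - (1-e^{-\phi t})^n\bigr)\,dt.
\end{equation*}
Substituting $u = 1-e^{-\phi t}$ (so $dt = du/(\phi(1-u))$) turns the integrand into the geometric sum $1 + u + u^2 + \cdots + u^{n-1}$, which integrates term-by-term to $\sum_{k=1}^{n} 1/k = H_n$, yielding $E[\delta_{\max}] = H_n/\phi$. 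An alternative route, which I would mention as cleaner but equivalent, uses order statistics: by the memoryless property the gap between the $k$-th and $(k-1)$-th order statistic is Exp($(n-k+1)\phi$), so summing the means of these gaps again produces $\sum_{j=1}^{n} 1/(j\phi) = H_n/\phi$.

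For the high-probability bound, I would use a union bound over the $n$ vertices. For a single $u$, $\Pr[\delta_u > t] = e^{-\phi t}$, so
\begin{equation*}
\Pr\!\left[\delta_{\max} > \frac{c \log n}{\phi}\right] \;\leq\; n \cdot e^{-c \log n} \;=\; n^{1-c}.
\end{equation*}
Choosing any constant $c \geq 2$ makes this at most $1/n$, giving $\delta_u = O(\log n/\phi)$ for all $u$ with high probability, which is exactly the statement.

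There is essentially no hard step here; this is a textbook computation about the maximum of i.i.d.\ exponentials. The only thing requiring a moment of care is the CDF manipulation or the order-statistic argument for the exact $H_n/\phi$ constant — the union bound for the high-probability claim is immediate from the exponential tail. Since this lemma is attributed to Miller et al.\ and merely imported here, I would present only the short self-contained derivation above rather than reprove anything about the decomposition algorithm itself.
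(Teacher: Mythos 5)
Your proposal is correct. Note that the paper itself gives no proof of this statement --- it is imported verbatim as Lemma~4.2 of Miller et al.\ and used as a black box --- so there is nothing internal to compare against; your self-contained derivation (tail-integral or order-statistic computation for $E[\delta_{\max}] = H_n/\phi$, union bound over the exponential tails for the $O(\log n/\phi)$ high-probability claim) is the standard argument and is sound.
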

For the other property which requires that there are fewer edges between the partitions, Miller et al. show this by bounding the probability that the endpoints of an edge are assigned to different partitions. In particular, Miller et al. prove the following lemma.
\begin{lemma}[Corollary~4.5, Miller et al. \cite{Miller-SPAA:2013}]
\label{lem:miller_edges}
Given a $(\phi,\gamma)$-decomposition of graph $G=(V,E)$ with $|V|=n$ nodes and $|E|=m$ edges into partitions $(S_1,\cdots,S_k)$, the probability of an edge $e = (uv) \in E$ having $u\in V(S_i)$ and $v \in V(S_j)$ such that $S_i\neq S_j$ is bounded by $O(\phi)$, and the expected number of edges between the partitions is $O(\phi m)$.
\end{lemma}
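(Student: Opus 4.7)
\textbf{Proof proposal for Lemma~\ref{lem:miller_edges}.}
The plan is to reduce the statement to the per-edge claim that, for any fixed edge $e=(u,v)\in E$, the probability that $u$ and $v$ are assigned to different partitions is $O(\phi)$; the bound on $\mathbb{E}|C|$ then follows immediately from linearity of expectation, since $\mathbb{E}|C|=\sum_{e\in E}\Pr[e\text{ is cut}]=O(\phi m)$.

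For the per-edge claim, I would work directly with the random shifts $\delta_w\sim\mathrm{Exp}(\phi)$ produced by \ref{alg:distb_decomp}. For each candidate source $w\in V$ and vertex $x\in V$, define the \emph{reach time} $\rho_w^x := d(w,x)-\delta_w$, so that $x$ is assigned to the cluster of the source $w^*(x):=\arg\min_{w}\rho_w^x$ (this is exactly equivalent to the BFS-with-random-delays rule in \ref{alg:distb_decomp}). Because $d(u,v)=1$, the triangle inequality yields $|\rho_w^u-\rho_w^v|\le 1$ for every source $w$. From this I would argue that if $w^*(u)=w^*$ and the gap between the smallest and second-smallest values of $\{\rho_w^u\}_{w\in V}$ is at least $2$, then $w^*$ also realises the minimum at $v$; hence the cut event is contained in the event that the two smallest reach times at $u$ differ by less than $2$.

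The main obstacle is bounding the probability of this near-tie event. The natural tool is the memoryless property of the exponential distribution, but one has to be careful: the $\rho_w^u$'s are not iid exponentials, they are shifted by the distances $d(w,u)$. The cleanest way to proceed is to reveal the variables in a specific order. Imagine sweeping a threshold $t$ from $-\infty$ upward and revealing each $w$ the first moment $\rho_w^u\le t$. Then, conditional on the first source $w^*$ to appear and on the value $t^*=\rho_{w^*}^u$, the memoryless property of each individual $\delta_w$ (combined with independence) implies that the excess $\rho_w^u-t^*$ for the next-smallest source is stochastically dominated by $\mathrm{Exp}(\phi)$. Therefore the probability that this excess is at most $2$ is at most $1-e^{-2\phi}\le 2\phi$. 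This gives $\Pr[e\text{ is cut}]=O(\phi)$ uniformly in $e$.

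Finally, summing over all $m$ edges by linearity of expectation yields $\mathbb{E}|C|=O(\phi m)$, completing the proof. The memorylessness-after-conditioning step is really the only delicate point; the two reductions before and after it are just clean applications of the triangle inequality and linearity of expectation respectively.
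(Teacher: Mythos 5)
First, a point of comparison: the paper does not actually prove this lemma. It is imported verbatim as Corollary~4.5 of Miller et al.\ \cite{Miller-SPAA:2013}, and the only thing the paper argues is that the statement survives the distributed implementation because only the computation of $\delta_{max}$ and the BFS mechanics change, not the probabilistic structure. So you are supplying a proof where the paper supplies a citation. Your skeleton is the right one and matches Miller et al.'s actual argument: reduce to a per-edge cut probability via linearity of expectation, observe that assignment is $w^*(x)=\arg\min_w\left(d(w,x)-\delta_w\right)$, use the triangle inequality to show that a cut edge forces the smallest and second-smallest of $\{\rho_w^u\}_w$ to be within $2$ of each other, and bound the probability of that near-tie using memorylessness of the exponential shifts.

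The gap is in the near-tie bound, exactly the step you flag as delicate. Your sweep reveals source $w$ at the first threshold $t$ with $\rho_w^u\le t$, i.e.\ when $\delta_w\ge d(w,u)-t$. So at the moment $t^*$ when the minimizer appears, every \emph{unrevealed} $w$ is conditioned on $\delta_w< d(w,u)-t^*$, i.e.\ on $\delta_w$ lying \emph{below} a threshold. A truncated-from-above exponential is not memoryless, and the conditional law of the excess $\rho_w^u-t^*=(d(w,u)-t^*)-\delta_w$ depends on $d(w,u)-t^*$; one can check that $\Pr[\rho_w^u-t^*\le 2\mid \rho_w^u>t^*]$ is \emph{not} $O(\phi)$ uniformly (it tends to $1$ when $d(w,u)-t^*$ is small). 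There is also a direction error: if the excess were stochastically dominated by $\mathrm{Exp}(\phi)$ you would get $\Pr[\text{excess}\le 2]\ge 1-e^{-2\phi}$, the opposite of what you want. The standard fix argues globally rather than per-revealed-source: writing $\sigma_w=\delta_w-d(w,u)$, memorylessness applied to $\delta_i$ (conditioning on $\delta_i$ \emph{exceeding} a value, where it is valid) gives $\Pr[\sigma_i\ge m+c]\ge e^{-\phi c}\,\Pr[\sigma_i\ge m]$ for every constant $m$, hence $\Pr\left[\sigma_i\ge \max_{j\ne i}\sigma_j+c\right]\ge e^{-\phi c}\,\Pr\left[\sigma_i\ge\max_{j\ne i}\sigma_j\right]$ by independence; summing over $i$ and using that some $i$ attains the maximum almost surely yields $\Pr[\text{gap}\ge c]\ge e^{-\phi c}$, i.e.\ $\Pr[\text{gap}<2]\le 1-e^{-2\phi}\le 2\phi$. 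With that replacement your argument is complete and is essentially Miller et al.'s Lemma~4.4.
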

Both these lemmas will hold for our distributed version as well because we only differ from Miller et al. in the way $\delta_{max}$ is computed and BFS is performed by the nodes.

\paragraph{Computation within partitions}
Now, given that we have low diameter partitions of the graph, we will use them along with the following famous result by Broder, Aldous for $\epsilon$-random spanning tree generation.
\begin{lemma}[Broder \cite{Broder-SFCS:1989}, Aldous \cite{Aldous-SDM:1990}]
\label{lem:aldous_broder_rw_rst}
Given an undirected graph $G=(V,E)$ suppose you start a random walk from an arbitrary vertex $u\in V$ and let $T$ be the set of edges used by the walk for the first visit to each vertex $v \in V\setminus\{u\}$, then $T$ forms a uniformly random spanning tree of $G$.
\end{lemma}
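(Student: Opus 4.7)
The plan is to follow the classical Aldous--Broder argument based on a Markov chain over rooted spanning trees. First I would observe that the output $T$ is automatically a spanning tree: by construction $|T| = |V|-1$ (one edge per vertex $v \neq u$), and following the chain of first-entry edges from any vertex back in time eventually terminates at $u$, so $T$ is connected. Hence $T$ is a spanning tree, and only uniformity remains.

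To establish uniformity, I would introduce an auxiliary Markov chain on rooted spanning trees: the state space is $\{(r, T) : r \in V,\ T \text{ a spanning tree of } G\}$, and from state $(r, T)$ the chain picks a neighbor $v$ of $r$ uniformly at random (probability $1/d_r$), moves the root to $v$, inserts the edge $(r,v)$ into $T$, and removes the unique edge incident to $v$ on the path from $v$ to $r$ in $T$ (i.e., $v$'s parent edge when $T$ is oriented toward $r$). The key observation is that the marginal of this chain on the first coordinate $r$ is exactly the natural random walk $\mathcal{P}$ on $G$, while the second coordinate records, at any time, the tree formed by the most recent entry into each vertex other than the current root.

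The main technical step is to show that this chain has the uniform distribution on rooted spanning trees as its stationary distribution. I would do this by exhibiting an explicit bijection between in-transitions and out-transitions of each state: for each rooted tree $(r, T)$, the $d_r$ outgoing transitions (one per neighbor of $r$) can be matched with $d_r$ distinct predecessor states, each contributing equal probability mass, so the uniform measure is invariant. Combined with irreducibility (which follows from connectedness of $G$) and aperiodicity, this identifies the stationary distribution.

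To close the argument, I would use time reversal. Run the natural random walk from $u$ until every vertex has been visited at least once; in reverse time, first-entry edges become last-exit edges, and the last-exit tree at any moment is precisely the second coordinate of the auxiliary chain. Starting the chain with $(r, T) = (u, \cdot)$ and projecting onto the tree coordinate after the cover time yields the uniform distribution on trees (with $u$ as root), which is exactly the distribution of $T$. The main obstacle is the bijective bookkeeping underlying invariance of the uniform measure under the tree chain; this is the combinatorial heart of the Broder/Aldous theorem and must be done carefully to account for the asymmetry between inserting the new edge and deleting the old parent edge of $v$.
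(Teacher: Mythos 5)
The paper offers no proof of this lemma: it is imported verbatim from Broder and Aldous as a known result, so there is no in-paper argument to compare against. Your outline is the standard Aldous--Broder proof (rooted-tree chain plus time reversal), and the skeleton is right, but one central claim as you state it is false for general graphs and would derail the invariance step if carried out literally.

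The issue is the assertion that the auxiliary chain on rooted spanning trees has the \emph{uniform} distribution as its stationary measure, verified by matching the $d_r$ predecessors of a state $(r,T)$ ``each contributing equal probability mass.'' A predecessor $(r',T')$ contributes mass $\mu(r',T')/d_{r'}$, so under the uniform measure the $d_r$ contributions are proportional to $1/d_{r'}$ and are \emph{not} equal unless $G$ is regular; the in-flow $\sum_{r'} 1/d_{r'}$ need not equal $1$, and uniformity over pairs $(r,T)$ would force the root marginal to be uniform over $V$, contradicting the fact that the root performs the natural random walk with stationary distribution $\pi(r)\propto d_r$. The correct invariant measure is $\mu(r,T)\propto d_r$, i.e.\ root distributed as $\pi$ and the tree uniform \emph{conditionally on the root}; with that weighting your bijection (each graph-neighbor $v$ of $r$ determines exactly one predecessor, rooted at the first tree-vertex on the path from $r$ to $v$) does make every predecessor contribute exactly $\mu(r,T)/d_r$, and the balance equation closes. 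The conditional uniformity is all the time-reversal step needs, so the theorem survives, but the lemma you propose to prove must be restated. Separately, the tree coordinate records the edge of the most recent \emph{exit} from each non-root vertex (consistent with your later reversal step), not the most recent entry as you first describe it; that is a slip of wording rather than of substance.
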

We know for our low diameter partitions random walk will take less time to visit all the vertices in it, however, after all the first visits it may still spend a lot of time visiting already covered regions. To avoid these unnecessary steps we need to somehow shortcut the random walk once all nodes in a region are visited. Kelner and M\k{a}dry \cite{KelnerII-FOCS:2009} suggested to compute exit distributions from a given partition $P_v(e)$ where $v\in V(S_i)$ and $e\in C(S_i)$ i.e., the probability of random walk leaving partition $S_i$ through edge $e$ after it had entered it through vertex $v$. Given a $(\phi,\gamma)$-decomposition of graph they compute $(1+\epsilon)$-approximation of all $P_v(e)$ using Laplacian solvers. In particular, they use the following construction for $P_v(e)$ computation: Given a partition $S_i$ and edge $e=(u,u')\in C(S_i)$ with $u\in V(S_i)$ they construct $S_i'$ by adding vertex vertex $u'$ and some dummy vertex $u^*$ to $S_i$. Then, for each boundary edge $(w,w')\in C(S_i)\setminus\{e\}$ with $w\in V(S_i)$, they add an edge $(w,u^*)$ and finally add edge $e=(u,u')$ to it. After the given construction, they treat $S_i'$ as an electric circuit and impose voltage of $1$ at $u'$ and $0$ at $u^*$ and then using Laplacian solvers compute the approximate electrical flow in it  wherein the voltage achieved at any node $v \in V(S_i)$ is equal to $P_v(e)$ (see Lemma~9 \cite{KelnerII-FOCS:2009} for details). We will also use a similar construction, however, to compute these flows and the resulting potentials we will use our distributed solver and we will compute these values in parallel in all partitions. Moreover, at the same time we will run a random walk in each partition in parallel so that from Lemma~\ref{lem:aldous_broder_rw_rst} we obtain uniform spanning tree for each $S_i$. As we will discuss in detail in Section~\ref{subsec:complx_rst}, by setting $\phi=1/\sqrt{n}$ in each partition $S_i$, by the time random walk-based  solver computes $P_v(e)$ for all $e\in C(S_i)$ the parallel random walk would have covered the entire partition, hence, giving us the desired uniformly random spanning tree for $S_i$ along with the exit distributions.

\paragraph{$\epsilon$-random spanning tree generation}
Once all partitions have computed their spanning trees and exit distributions, their respective leaders exchange synchronization messages to indicate the completion of computation step. After messages from all $k$ partitions are received by the leader nodes, we have obtained uniform spanning tree $\hat{T}_{S_i}$ for all $S_i$'s as well as their exit distributions. So, to obtain random spanning tree for $G$ we need to find edges between the different $\hat{T}_{S_i}$'s. For this we reduce graph $G$ to $G'$ such that each partition $S_i$ represents a super node and multiple edges connecting two partitions are combined into a single super edge i.e, we combine all edges $(u,w)$ where $u\in V(S_i)$ and $w\in V(S_j)$. We then use the computed exit distributions of each partition to run a random walk on $G'$ with transition probability $\probm{S_i,S_j}=\frac{\sum_{v\in S_i} \sum_{u\in S_i,w\in S_j}P_v(u,w)}{\sum_{v\in S_i} \sum_{u\in S_i, \forall w}P_v(u,w)}$ where $P_v(u,w)$ is the probability of a random walk entering a partition $S_i$ through vertex $v\in V(S_i)$ and exiting through edge $(u,w)\in C(S_i)$. However, note that within each such super node $S_i$, the random walk will take a predetermined path based on its spanning tree inside $S_i$ and thus would need to exchange messages between nodes and the leader of partition. So, each such step would take $O(\diams(S_i))$ time where $\diams(S_i)$ is the diameter of partition $S_i$.  After the cover time of this random walk, since all vertices of $G'$ will be visited we will obtain random spanning tree $\hat{T}_{G'}$ for $G'$. Finally, combining this random spanning tree with that of all $S_i$'s, we obtain $\epsilon$-random spanning tree of $G$.

\subsection{Overall Complexity}
\label{subsec:complx_rst}
Now, let us review the overall complexity of our algorithm. Our proposed algorithm has three main parts: decomposing graph into low diameter partitions, then using random walk on those partitions to obtain spanning trees within them and finally finding the edges between the partitions to compute the overall spanning tree. For the first part we compute $(\phi,\widetilde{O}(1/\phi))$-decomposition of the graph into $(S_1,\cdots,S_k)$ partitions using distributed version of Miller et al.'s algorithm which takes about $O(m+\diam)$ time where factor $\diam$ (diameter of graph) comes from the distributed BFS and message exchanges and factor $m$ comes from verifying the decomposition. Then, for each partition $S_i$ we compute the exit distribution which corresponds to the node potentials using our proposed distributed solver (Theorem~\ref{thm:rand_walk_solver}) which computes the estimates of node potentials $\hpot{u}, \forall u \in V(S_i)$ such that $|\hpot{u}-\pot{u}| \leq (\epsilon_1+\epsilon_2)\pot{u}, \forall \kappa<\pot{u}d_u$ where $0<\kappa<1$ and $\pot{u}=\bm{\eta}_u/d_u$ in time 
$$\left(d_{\max}^{S_i}\left(64\thit^{S_i}\log \epsilon_1^{-1}+ \frac{4\log n_i}{\kappa^2\epsilon_2^2}\right)\log\frac{16d_{\max}}{3\lambda_2^L}\right)$$
where $d_{\max}^{S_i}$ is the maximum degree of nodes within partition $S_i$, $\thit^{S_i}$ is the worst-case hitting time of random walk on partition $S_i$ with $|V(S_i)|=n_i$ as the size of the partition. Note that $d_{\max}^{S_i}$ term in the running-time of our solver is a constant as we deal with bounded-degree graphs. 
So, the time reduces to $\widetilde{O}\left(\thit^{S_i}\log \epsilon_1^{-1}+\frac{\log n_i}{\kappa^2\epsilon_2^2}\right)$.

Now, let us consider $\phi=1/\sqrt{n}$, so from the first step we obtain a $(1/\sqrt{n},\widetilde{O}(\sqrt{n}))$-decomposition of the graph into partitions $(S_1,\cdots,S_k)$. 
Moreover, within each partition $S_i$ to compute the node potentials accurately we can set the values of $\kappa=1/\sqrt{n_i}$ and given error parameter $\epsilon<1/2$ such that $\epsilon=\epsilon_1+\epsilon_2$, set $\epsilon_1=\epsilon/4$ and $\epsilon_2=3\epsilon/4$. Also, since the worst-case hitting time for any graph is $\Omega(n)$ \cite{Aldous-BOOK:2002} we get the running time of solver for each edge $e\in C(S_i)$ in partition $S_i$ as $\widetilde{O}\left(\thit^{S_i}\log \epsilon^{-1}\right)$ where $\epsilon$ is the error. This is repeated for all edges in set $C(S_i)$ (subset of $C$ incident to $S_i$), so the total time taken for computation of exit distribution for a given partition $S_i$ is $\widetilde{O}\left(|C(S_i)|\left(\thit^{S_i}\log \epsilon^{-1}\right)\right)\leq \widetilde{O}\left(\phi m \left(\thit^{S_i}\log \epsilon^{-1}\right)\right)$, as from $(\phi,\gamma)$-decomposition $|C|\leq \phi m$.
Moreover, the random walk that we run in parallel in this partition to find its spanning tree will take at most cover time $t_{cov}(S_i)$ to visit all vertices. So, for each partition $S_i$, time taken to compute the exit distributions and the spanning tree is $\max\left\{\widetilde{O}\left(\phi m \left(\thit^{S_i}\log \epsilon^{-1}\right)\right), t_{cov}(S_i)\right\}$. 

 From Aleliunas et al. \cite{Aleliunas-SFCS:1979} we know that the cover time of an unweighted graph $G$ is at most $O(|E(G)|\diam)$ where $\diam$ is the diameter of graph, so for $S_i$ we have $t_{cov}(S_i)\leq m_i \sqrt{n}$ as the diameter of each partition is at most $\sqrt{n}$.
 As, the worst-case hitting time of graph is greater than that of its partition i.e., $\thit^{S_i}\leq \thit$, we have $\widetilde{O}\left(\phi m \left(\thit^{S_i}\log \epsilon^{-1}\right)\right)\leq \widetilde{O}\left(\phi m \left(\thit\log \epsilon^{-1}\right)\right)$. So, by the time our random walk-based  solvers compute the exit distributions, the  random walk running in parallel has covered all vertices to give us the random spanning tree of each $S_i$. Now, since we do this computation step in parallel for all partitions, we have the overall time as
 
 $\max_i\left\{\max\left\{\widetilde{O}\left(\phi m \left(\thit\log \epsilon^{-1}\right)\right), t_{cov}(S_i)\right\}\right\} \leq \widetilde{O}\left(\phi m \left(\thit\log \epsilon^{-1}\right)\right)$. Moreover, after each partition completes the computation step its leader exchanges synchronization messages with other leader nodes which takes about $O(k\diam)$ time where $\diam$ is the diameter of the graph and $k$ are the total number of partitions. Once all partitions know that the computation step is over and they proceed to the last step. In the final step, we run the random walk on the reduced graph $G'$ and it takes at most $t_{cov}(G')=O(k^3)$ to form the spanning tree where $k$ is the number of partitions. However, since in each step of this walk within the super nodes we need to communicate entry and exit points of the partition which takes at most $\diam(S_i) \leq \sqrt{n}$ time, so overall time for this step is $O(k^3\sqrt{n})$. Now, we know from the property of our $(\phi,\gamma)$-decomposition that $|C|\leq\phi |E|$ and we have chosen $\phi=1/\sqrt{n}$. Also, $|E|\leq n~d_{\max}/2$ where $d_{\max}$ is the maximum degree of graph $G$. So, to ensure that the graph is connected we have $k \leq \sqrt{n}d_{\max}/2$. Thus, for bounded-degree graphs $k=O(\sqrt{n})$. So, our total time for \ref{alg:distb_RST_gen} is composed of
 \begin{itemize}
     \item Decomposition of graph into low diameter partitions $= O(m+\diam)$.
     \item Using random walks to compute spanning trees within those partitions $=\widetilde{O}\left(\phi m \left(\thit\log \epsilon^{-1}\right)\right)$.
     \item Computing the overall spanning tree
     \begin{itemize}
         \item Exchange of synchronization messages between leaders $=O(k ~\diam)$.
         \item Cover time of random on the reduced graph $=O(k^3\sqrt{n})$.
     \end{itemize}
 \end{itemize}
 As discussed above, since $\phi=1/\sqrt{n}$, $k=O(\sqrt{n})$ for bounded-degree graphs, and $\diam\leq n$, we have the overall time as
 $ O(m)+\widetilde{O}\left(\frac{m}{\sqrt{n}}\thit\log \epsilon^{-1}\right)+O(\diam\sqrt{n})+O(n^2)=\widetilde{O}\left(\frac{m}{\sqrt{n}}\thit\log \epsilon^{-1}\right)$. 

 \section{Conclusion and Future work}
\label{sec:conc}
Although our main result presents a distributed algorithm, at a deeper level, the key contribution of this paper is not the algorithm we present, but actually the connections our work makes with the queueing theory and ergodicity, and the theory of Markov chains and random walks. Positioning the Laplacian system in a network and solving it there connects the study of Laplacian systems to distributed systems and also raises the possibilities of real-world implementation in low-power networks like sensor networks where such problems are likely to occur naturally. Moving this work ahead, we plan to look at different applications of general Laplacian systems where our proposed solver can be used as subroutine to give improved results. We we would also like to extend our solver for the general setting cases where $\Theta(n)$ entries of $\pmb{b}$ vector are negative.


\bibliographystyle{plain}
\bibliography{main}

\end{document}